\newtheorem{theorem}{Theorem}
\newtheorem{lemma}[theorem]{Lemma}
\newtheorem{corollary}[theorem]{Corollary}
\newtheorem{proposition}[theorem]{Proposition}
\newtheorem{definition}[theorem]{Definition}
\newtheorem{problem}[theorem]{Problem}
\newcommand{\bea}{\begin{eqnarray}}
\newcommand{\eea}{\end{eqnarray}}
\newcommand{\Braket}[3]{\langle #1|#2|#3\rangle}
\newcommand{\ceil}[1]{\left\lceil #1 \right\rceil}
\DeclareMathOperator*{\argmin}{\arg\!\min}
\DeclareMathOperator{\sign}{sgn}
\DeclareMathOperator{\erfunc}{erf}
\begin{document}
%\title{A Constant Factor Analysis of LCHS with Applications in Computational Fluid Dynamics}
% \title{The Cost of Mapping Differential Equations to Linear Combinations of Hamiltonian Simulations: A Constant Factor Analysis of LCHS}
\title{Constant-Factor Improvements in Quantum Algorithms for Linear Differential Equations}
\author{Matthew Pocrnic}
\affiliation{Department of Physics, 60 Saint George St., University of Toronto, Toronto, Ontario,  M5S 1A7, Canada} 
\author{Peter D. Johnson}
\affiliation{Apollo Quantum LLC, Cambridge, MA 02139 USA}
\affiliation{Zapata AI Inc., Boston, MA 02110 USA}
\author{Amara Katabarwa}
\affiliation{Zapata AI Inc., Boston, MA 02110 USA}
\author{Nathan Wiebe}
\affiliation{Department of Computer Science, University of Toronto, Toronto ON, Canada}
\affiliation{Pacific Northwest National Laboratory, Richland Wa, USA}
\affiliation{Canadian Institute for Advanced Research, Toronto ON, Canada}

\begin{abstract}
Finding the solution to linear ordinary differential equations of the form $\partial_t u(t) = -A(t)u(t)$ has been a promising theoretical avenue for \textit{asymptotic} quantum speedups. However, despite the improvements to existing quantum differential equation solvers over the years, little is known about \textit{constant factor} costs of such quantum algorithms. 
This makes it challenging to assess the prospects for using these algorithms in practice.
In this work, we prove constant factor bounds
for a promising new quantum differential equation solver, the linear combination of Hamiltonian simulation (LCHS) algorithm. Our bounds are formulated as the number of queries to a unitary $U_A$ that block encodes the generator $A$.
In doing so, we make several algorithmic improvements
such as 
tighter truncation and discretization bounds on the LCHS kernel integral,
a more efficient quantum compilation scheme for the SELECT operator in LCHS,
as well as use of a constant-factor bound for oblivious amplitude amplification, which may be of general interest. 
To the best of our knowledge, our new formulae improve over previous state of the art by at least two orders of magnitude, where the speedup can be far greater if state preparation has a significant cost.
Accordingly, for any previous resource estimates of time-independent linear differential equations for the most general case whereby the dynamics are not \textit{fast-forwardable},
these findings provide a 100-200x reduction in runtime costs.
This analysis contributes towards establishing more promising applications for quantum computing.
\end{abstract}

\maketitle

\section{Introduction}
\label{sec:introduction}
Differential equations are foundational to modeling phenomena in science and engineering, with applications including fluid dynamics, control systems, population models. As a result, the efficient numerical solution of ordinary differential equations (ODEs) remains a core objective in high-performance computing \cite{utkarsh2024automated}. In parallel, quantum computing has emerged as a potentially transformative platform for simulating dynamical systems.
However, determining the extent to which quantum computing can be transformative
within the landscape of differential equations applications remains an important problem for the field~\cite{li2025potential,aaronson2015read,costa2019quantum}.

Among applications of differential equations, simulating quantum systems has surfaced as one of the more promising uses for future quantum computers \cite{lloyd1996universal, whitfield2011simulation}.
Many approaches to quantum simulation involve the task of Hamiltonian simulation in which the quantum computation encodes an evolution under the Schrodinger equation. 
Algorithms for the task of Hamiltonian simulation have been placed on solid theoretical foundations, culminating in asymptotically optimal methods such as quantum signal processing and qubitization \cite{low2017optimal,low2019hamiltonian}. 
Beyond asymptotic analysis, several works \cite{jennings2023efficient, berry2024doubling} have established non-asymptotic upper bounds on the costs of Hamiltonian simulation.
Such non-asymptotic analyses are needed to predict the practicality of running such algorithms of future fault-tolerant quantum computers.
In some cases, researchers are able to use non-asymptotic analyses to estimate the resources (e.g.  wall clock time and qubit count) of future quantum computations~\cite{rubin2024quantum}.
As quantum devices continue to improve, such non-asymptotic analyses have become increasingly important \cite{goings2022reliably, penuel2024feasibility, watts2024fullerene, bellonzi2024feasibility, agrawal2024quantifying}, shedding light on which applications might benefit most from quantum computing.

Outside of quantum simulation, non-asymptotic analyses and resource estimations for other differential equations applications are lacking.
To our knowledge there is only one work in the literature which provides non-asymptotic query counts \cite{jennings2024cost} and one which estimates the quantum gate costs \cite{penuel2024feasibility} for solving differential equations on quantum computer.
One reason for this dearth is that algorithmic developments for these algorithms have lagged behind those for Hamiltonian simulation.
A difficulty with simulating dynamical systems on a quantum computers is that, while the quantum operations are unitary, general dynamical evolutions are non-unitary.
This challenge was first directly addressed by Berry \cite{berry2014high}, who proposed an algorithm for solving linear ODEs on a quantum computer by leveraging a quantum linear system solver. Since then, several improvements have been made to the original approach, including both tighter asymptotic analyses \cite{berry2017quantum, krovi2023improved} and entirely new algorithmic frameworks \cite{jin2022quantum, fang2023time, an2023linear}. 
Recent work has also studied the complexity of block encoding differential operators at the circuit level, providing deeper insight into quantum solutions for PDEs with a variety of boundary conditions \cite{kharazi2024explicit, schleich2025arbitrary, lubasch2025quantum}. A new mapping has also emerged that equates an ODE solution to a Lindbladian evolution \cite{shang2024design}, allowing one to solve a system of ODEs with any modern quantum algorithm for Lindbladian dynamics \cite{li2022simulating, pocrnic2023quantum, ding2024simulating, cleve2016efficient}. However, constant factor analyses are also lacking on this front. At the same time, limitations in practical applicability have been noted for particular differential equations \cite{montanaro2016quantum}. 

A recent proposal by Li et al.~\cite{li2025potential} suggested an exponential speedup for solving the Lattice Boltzmann equations using quantum algorithms, though this claim was supported only by asymptotic analysis. In contrast, Jennings et al.\ \cite{jennings2024cost} provided the first detailed constant-factor cost analysis of a quantum ODE solver and demonstrated constant-factor improvements relative to prior work. Yet, follow-up work by Penuel et al.\ \cite{penuel2024feasibility} analyzed the costs of this algorithm for solving the Lattice Boltzmann equations in the context of computational fluid dynamics and found that the resource requirements were prohibitive even for classically-realizable instances. Their analysis identified the ODE solver subroutine as the dominant contributor to the overall cost, raising important questions about whether alternative methods might revive the promise of quantum advantage for solving non-linear ODEs.

In contrast to the linear systems solver approach \cite{berry2014high, krovi2023improved, jennings2023efficient}, another direction for simulating classical dynamics on a quantum computer involves directly propagating a quantum state that encodes the initial-time boundary condition of the differential equation. Two notable approaches in this category are the time-marching method \cite{fang2023time} and the Linear Combination of Hamiltonian Simulations (LCHS) method \cite{an2023linear, an2023quantum}. 
An appealing property of both of these algorithms is the fact that the number of queries to the initial state preparation is independent of the final time, whereas for linear solver based methods this number of calls grows with the final time.
However, in the more general non-unitary setting, these approaches exhibit different scaling behavior: the time-marching method suffers from quadratic scaling in the evolution time $t$, while the LCHS method achieves the optimal linear scaling. 
In comparison to the time marching approach, however, LCHS scales slightly sub-optimally in the desired precision; when all other parameters are treated as constant as LCHS has complexity $O\left (\log(1/\epsilon)^{1/\beta}\right)$, where $\beta \in (0,1)$ is a free parameter that only affects the constants otherwise. This scaling becomes optimal when $\beta=1$, and this precision scaling has been achieved by other approaches including time marching. 

Motivated by this near optimal dependence on all parameters, and the fact that LCHS relies heavily on Hamiltonian simulation as a subroutine, a highly optimized subroutine in quantum computing, we focus in this work on the LCHS method and leave a detailed, non-asymptotic analysis of the time-marching approach and other methods to future study.  
We investigate the LCHS algorithm as proposed in \cite{an2023linear, an2023quantum}, with the goal of assessing its relative viability through a full constant-factor resource analysis. To this end, we construct an explicit implementation of the method and derive detailed bounds on its query complexity. \\

Our main result is summarized as follows. Note that the terminology and notation used in the following can be found in Section \ref{sec:preliminaries}.  \\

\textbf{Main Result} (Informal version of Theorems \ref{thm:amplified_prep} and \ref{thm:error}.)
\textit{Given access to the unitary $U_A$, a $(\alpha_A, m_A, 0)$ block encoding of the differential operator $A$, and a state-preparation unitary $U_0$, then there exists a quantum algorithm to solve the linear differential equation}
\begin{equation}
    \frac{\dd u(t)}{\dd t} = -Au(t), \: \: \: u(0)=u_0,
\end{equation}
\textit{by approximately encoding the time-$t$ solution $u(t)$ in a quantum state $\ket{u(t)} = u(t)/\|u(t)\|$ up to arbitrary precision $\epsilon$, that uses at most $C_A$ calls to the block encoding of $A$, where }
\begin{equation}
    C_A=C_{\mathrm{FPOAA}}\left(\Delta, \frac{\epsilon}{8\|u(t)\|}\right)\ceil{e \sqrt{1+K^2}\alpha_A t + 2\ln \left (\frac{2304 \sqrt{1+1/e} \ln \left( \frac{256 \sqrt{2}}{3\sqrt{\pi} \epsilon}\right) \|c_1\| \|u_0\|}{(3\sqrt{2\pi}e^{1/13}) \epsilon }\right ) }.
\end{equation}
\textit{In this expression, $K$ is an upper bound on the interval of truncation for the LCHS integral formula and $t$ is the total simulation time. $K$ can be calculated using the corresponding bound given in Lemma \ref{lem:K_formula} by setting $\epsilon_{\mathrm{trunc}}= \epsilon/4 \|u_0\|$. The function $C_{\mathrm{FPOAA}}(\Delta,\epsilon_{\mathrm{AA}})$ represents the complexity of fix-point oblivious amplitude amplification, of which we analyze the constant factor query complexity in Appendices \ref{app:polynomial_degree_bound} and \ref{app:robust_AA}. In the context of our analysis, this represents the number of calls by fixed-point amplitude amplification to the LCHS algorithm, and also the number of calls to the state preparation oracle $U_0$. For this reason, it is labeled as $C_{\mathrm{LCHS}}$ after this point. This function has the form} 
\begin{equation}
    C_{\mathrm{FPOAA}}(\Delta,\epsilon_{\mathrm{AA}}) = \ceil{\sqrt{8\ceil{\frac{4}{\Delta^2}\ln(8/\pi \epsilon_{\mathrm{AA}}^2) e^2 }\ln\left (\frac{64\frac{\sqrt{2}}{\Delta}\ln^{1/2}(8/\pi \epsilon_{\mathrm{AA}}^2)}{3\sqrt{\pi} \epsilon_{\mathrm{AA}}}\right )}+1},
\end{equation}
\textit{where $\Delta$ is at most twice the amplitude of the state output by LCHS. In this case, for the purposes of simplification, we have used $\epsilon_{\mathrm{AA}} = \epsilon/8\|u(t)\|$ and take $\Delta = 2(\|u(t)\|-\epsilon)/\|u_0\|\|c\|_1$. Here $\|c\|_1$ is the summed absolute values of the coefficient weights in said integral (computed exactly in Appendix \ref{sec:gauss}).}  \\

Note that for purposes of expedition, compared to the main results in Theorem \ref{thm:amplified_prep} and \ref{thm:error}, here we have simplified our results to the case where the block encoding and state preparation are error free, as this makes the query formulas simpler to state. In order to state the aforementioned result, choices were made to distribute sources of error for analytical convenience, sacrificing some tightness for the purpose of clarity. The analysis within this manuscript presents the full technical details of how all errors contribute. For full accounting of errors required for detailed costing of an application, please refer to Section \ref{sec:lchs_implementation}.  

Equipped with the above constant-factor expressions for the cost of LCHS, we benchmark this implementation against the randomization linear solver (RLS) approach developed in \cite{jennings2024cost}. We find that the LCHS method yields approximately a 100-200x reduction in block-encoding query complexity for the same task (see Fig. \ref{fig:cost_plot}). Additionally, numerical studies indicate that for fixed $\epsilon$ this advantage may grow like $O(\log(\alpha T))$ in certain parameter regimes (see Fig. \ref{fig:ratio_plot}), further suggesting that LCHS may yield more practical quantum differential equation solvers.
We also compare the costs of the two methods as a function of the relative costs of initial state preparation and $A$-matrix block encoding.
We find, for example, that when the initial state preparation is ten times more costly than the $A$-matrix block encoding, then LCHS speedup factor increases to over 1000 (see Fig. \ref{fig:speedup_plot}).

Additionally, at the end of Appendix \ref{app:polynomial_degree_bound}, we provide a slightly tighter version of $C_{\mathrm{FPOAA}}(\Delta,\epsilon_{\mathrm{AA}})$ than that stated above in terms of non-elementary functions, stemming from a constant factor bound on the degree of a polynomial that approximates the function $\sign(x)$, which may be of more general interest. Note that for simplicity we do not use this form in the numerical analysis, and doing so would not significantly affect the results, as each ODE solver produces an output that must be amplified. Bounds of a similar nature involving a different approximation construction can also be found in Refs. \cite{wan2022randomized, berntson2025complementary}.

The remainder of this paper is organized as follows. Section~\ref{sec:preliminaries} reviews key background material and notation. In Section~\ref{sec:lchs_integral}, we analyze the discretization and truncation approximations in the LCHS formulation. Section~\ref{sec:lchs_implementation} presents the quantum implementation of the LCHS operator and derives performance guarantees. In Section~\ref{sec:numerics}, we provide a numerical comparison of LCHS to previous methods. Finally, Section~\ref{sec:discussion_and_outlook} discusses implications of our findings and future research directions.

\section{Preliminaries}
\label{sec:preliminaries}
Given that this paper pertains to solving general linear ODEs, we are interested in finding a solution vector $u(t)$ at time $t$, which is not a quantum state. Therefore, in this work, we consider the cost of encoding the non-normalized solution vector of the ODE $u(t)$ into a normalized quantum state $\ket{u(t)}$ such that $\ket{u(t)} = u(t)/ \|u(t)\|$. If not written as a ket, all vectors written are of arbitrary norm unless stated otherwise. Throughout this work we use $\widetilde{A}$ to mean an imperfect construction of the operator $A$, which may arise for a variety of reasons, all of which will be due to either representing the problem or computing the solution on a fault tolerant quantum computer. \\

This paper relies heavily on the well-known block encoding access model for quantum computing. This model assumes we can access the operator $A$ via its encoding in a larger unitary $U_A$, such that
\begin{equation}
    U_A = \begin{bmatrix}
        A/\alpha_A & * \\
        * & *\\
    \end{bmatrix},
\end{equation}
where $\alpha_A \geq \|A\|$ the sub-normalization constant to ensure $U_A$ is unitary, and we are only concerned of the upper left block of this operator. $\widetilde{U}_A$ is called an $(\alpha_A, m_A, \epsilon_A$), block encoding of $A$ if $m_A$ qubits are used to encode the operator, and if the encoding is done up to error $\epsilon_A$, where this error is defined as  
\begin{equation}
\label{eq:block_encoding}
  \|\alpha_A(\bra{0}^{m_A} \otimes \mathbb{I}) \widetilde{U_A} (\ket{0}^{m_A} \otimes \mathbb{I}) - A\| \leq \epsilon_A.  
\end{equation}
Here, we have again used the notation $\widetilde{(\cdot)}$ to indicate an imperfect operator. \\

Throughout the paper we track a variety of errors and cost-determining parameters and objects that arise in the end-to-end analysis. For bookkeeping and reader referral, we list them all in Table \ref{tab:notation}.

\begin{table}[htbp!] 
    \centering
    \begin{tabular}{| c | c | c | c |}
    \hline
        Notation & Description  \\
        \hline
        $C_A$ & Algorithmic cost defined as queries to the block encoding $U_A$  \\
        $t$ & Total Simulation time \\
        $\beta$ & A parameter $\in (0,1)$ that gives LCHS slightly sub-optimal asymptotic scaling w.r.t $\epsilon$\\
        $u(t)$ & The exact solution vector at time $t$ \\
        $\cal{L}$ & LCHS propagator: $\mathcal{L} := \sum_jc_j U(t,k_j)$ \\
        $C_{\mathrm{LCHS}}$ & The number of calls to the LCHS propagator due to amplitude amplification \\
        $v(t)$ & The approximate solution vector generated by LCHS: $({v(t)} :=\mathcal{L} {u_0})$ \\
        $\epsilon$ & The total error incurred by the end-to-end algorithm \\
        $K$ & The truncation parameter of the LCHS integral bounded in Lemma \ref{lem:K_formula} \\
        $M$ & The total number of terms in the discretized LCHS integral bounded in Lemma \ref{lem:class_error}\\
        $\mathcal{Q}$ & The unitary approximating the SELECT operator in LCHS \\ 
        $\mathcal{S}$ & The effective Hamiltonian generating the LCHS propagator \\        $\epsilon_{\mathrm{trunc}}$ & The error incurred by truncating the LCHS integral \\
        $\epsilon_{\mathrm{disc}}$ & The error incurred by discretizing the LCHS integral \\
        $\epsilon_{v}$ & The error in approximating the exact solution vector with the LCHS-generated vector\\
        & $\left ( \epsilon_{v} := 
        \left\| u(t) - v(t) \right \| \right )$ 
        \\
        $\epsilon_A$ & The block encoding error of $\widetilde{U}_A$ \\
        $\epsilon_0$ & Initial state preparation error \\
        $\epsilon_{c}$ & The error associated with the oracles that form the state preparation pair $(O_{c,l}, O_{c,r})$ \\
        $\epsilon_{\text{exp}}$ & The error due to imperfect exponentiation of the effective Hamiltonian\\
        $\epsilon_Q$ & The total error due to Hamiltonian simulation by qubitization \\
        $\epsilon_{\mathrm{AA}}$ & The error due to imperfect amplitude amplification \\        
        \hline
    \end{tabular} 
    \caption{Summary of notation used in this work.}
    \label{tab:notation}
\end{table} 

\subsection{Computational Problem}
The goal of quantum algorithms to solve differential equations is to encode in a quantum state $\ket{u(t)}$, the solution of to the differential equation $u(t)$, such that $\ket{u(t)} = u(t) / \|u(t)\|$. Therefore, an algorithm that solves this problem outputs a quantum state encoding of the solution up to a constant of proportionality. This can be viewed as slightly less general to a classical solution of the same problem, given that we do not have direct access to the elements of the solution vector in the quantum case, as we would have to perform a tomography routine, which would erase the hopes of any quantum speedup. Therefore, what we can efficiently extract from the quantum encoding are global properties of the solution, such as an observable $\Braket{u(t)}{O}{u(t)}$, the solution norm $\|u(t)\|$, etc. \\

The computational problem can then be stated as follows: 

\begin{problem}
Let $u(t) \in \mathbb{C}^d$  and let $u_0 \in \mathbb{C}^d$ is the initial condition. Further, let $A \in \mathbb{C}^{d\times d}$ be the an operator such that 
\begin{equation} \label{eq:prob1}
    \frac{\dd u(t)}{\dd t} = -Au(t), \: \: \: u(0)=u_0.
\end{equation}
where $t \in \mathbb{R}^+$ and the solution vector at time $t$ is $u(t)=u(t)$. Let  $U_A$ be a block encoding of the operator $A$ and let $U_0$ be a unitary that prepares the initial state $\ket{u_0}$, a total simulation time $t$, and an error tolerance $\epsilon$, output a vector $v(t) \in \mathbb{C}^d$ such that $\|v(t) - u(t)\| \leq \epsilon$. 
\end{problem}

\subsection{Linear Combinations of Hamiltonian Simulation}
LCHS or Linear Combinations of Hamiltonian Simulations is a recently developed quantum algorithm \cite{an2023linear, an2023quantum} that provides the means for solving the following differential equation:
\begin{equation}\label{eq:ode}
    \frac{\dd u(t)}{\dd t} = -A(t)u(t) + b(t),
\end{equation}
where $A$ need not be Hermitian. A notable feature of LCHS is that is also solves time-dependent linear ODEs, although for this work we will analyze only the time-independent homogeneous case. The high-level idea of the algorithm is the following: \\
First, use the $A$ matrix to construct the following two Hermitian matrices 
\begin{equation}
    L(t) \equiv \frac{A(t)+A(t)^\dagger}{2} \: \: \: H(t) \equiv \frac{A(t)-A(t)^\dagger}{2i},
\end{equation}
such that $A=L+iH$. Then using the main Theorem of Ref. \cite{an2023quantum}, we can rewrite the exact solution to the ODE (in the simpler case where $b(t)=0$) in the following way  
\begin{equation} \label{eq:LCHS}
    \mathcal{T}e^{-i \int_0^t A(s) \dd s} = \int_{\mathbb{R}}\frac{f(k)}{1-ik}\mathcal{T}e^{-i \int_0^t (kL(s) + H(s))\dd s} \dd k,
\end{equation}
given that $L\succeq 0$, and $f(k)$ is chosen from a class of functions obeying certain analyticity criteria. This is equivalent to the requirement that the real part of the solution is ``non-increasing", and is required to satisfy stability conditions in the proof. This condition can always be satisfied, however, through introducing a shift in the eigenvalues. This can be done be redefining the solution $u(t) = e^{ct}w(t)$, yielding
\begin{equation}
    \partial_t w(t) = -(L(t) + cI + iH)w(t) + e^{-ct}b(t),
\end{equation}
where $c = \abs{\argmin_{t,j} \lambda_j(L(t))}$.
At first glance, the LCHS formula seems un-intuitive, however, some intuition can be gained through first observing the original formula introduced in \cite{an2023linear}:
\begin{equation}
    \mathcal{T}e^{-i \int_0^t A(s) ds} = \int_{\mathbb{R}}\frac{1}{\pi(1+k^2)}\mathcal{T}e^{-i \int_0^t (kL(s) + H(s))ds}dk.
\end{equation}
This can be understood as a matrix adaptation of a more general version of the following Fourier property:\\
%%%%
\begin{equation}
    \hat{f}(k) = \frac{1}{2\pi}\int_\mathbb{R} e^{-|x|} e^{ikx} dk = \frac{1}{\pi (1+k^2)}.
\end{equation}
This formula trivially coincides with that of LCHS in the case $H=0$, where in the matrix integral $L$ would assume the role of $x$ above. The analytical work of LCHS therefore generalizes this to the more general case being investigated here. \\

For this integral identity to be promoted to the level of quantum algorithm, it must be converted to a form amenable to a quantum computer. In order to achieve this, we must first make the interval of integration finite by performing a truncation, such that $\mathbb{R} \to [-K, K]$, where $K$ is chosen such that the approximation error is bounded by $\epsilon$. The formula must then be made discrete, by converting the integral to a sum. There are a variety of ways this can be done, including the simple Riemann sum (this work and the improved LCHS work \cite{an2023quantum} use the method of Gaussian Quadrature which is briefly reviewed in Appendix \ref{sec:gauss}. Again, the number of points to well approximate integral must be chosen such that the discrete approximation error is bounded by $\epsilon$. By truncating and discretizing the integral in Equation \ref{eq:LCHS}, and writing $g(k)=\frac{f(k)}{1-ik}$ and $U(t,k) = \mathcal{T}e^{-i \int_0^t (kL(s) + H(s))ds}$, we see that the solution to the ODE can be written as a linear combination of Hamiltonian simulations:
\begin{equation}
    \mathcal{T}e^{-i \int_0^t A(s) ds} \approx \sum_k g(k) U(t, k).
\end{equation}
For our purposes, we choose to more closely examine the case where $A=A(t)$ and $b(t) = 0$. Therefore wish to solve the ODE 
    \begin{equation}
        \frac{\dd u(t)}{\dd t} = -Au(t) .
    \end{equation}
    
Here, we study an application of the LCHS algorithm while providing a precise constant-factor resource estimate of obtaining said solution. In doing so, we will provide algorithmic cost formulas in terms of queries to an oracle $U_A$, or a block encoded version of $A$.

Since our problem instance is time independent, the LCHS formula in this case has the following simplification:
\begin{equation}
    e^{-At} = \int_\mathbb{R} \frac{f(k)}{1-ik} e^{-i t(kL + H)}dk
\end{equation}

\begin{theorem}[Time-Independent LCHS Identity] \label{thm:t-indep} Given a time $t \in \mathbb{R}^+$, a matrix $A \in \mathbb{C}^{dxd}$ such that $\mathfrak{R}(A) \succeq 0$, and a kernel function $f(k)$ that satisfies the properties of Theorem 5 in Ref. \cite{an2023quantum}, then the following identity holds
\begin{equation}
    e^{-At} = \int_\mathbb{R} \frac{f(k)}{1-ik} e^{-i t(kL + H)}\text{d}k.
\end{equation}
\end{theorem}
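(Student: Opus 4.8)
The plan is to recover the identity as the time-independent, homogeneous restriction of the general LCHS formula of Theorem~5 of Ref.~\cite{an2023quantum} (displayed in Eq.~\eqref{eq:LCHS}), which represents the solution operator of $\dd u/\dd t=-A(t)u(t)$ as the weighted integral of Hamiltonian propagators $\int_{\mathbb{R}}\frac{f(k)}{1-ik}\,\mathcal{T}e^{-i\int_0^t (kL(s)+H(s))\,\dd s}\,\dd k$, valid whenever $L(s)\succeq 0$ for all $s\in[0,t]$ and $f$ satisfies the required analyticity and decay conditions. First I would check that the hypotheses survive the specialization: the assumption $\mathfrak{R}(A)\succeq 0$ is exactly $L=(A+A^\dagger)/2\succeq 0$, which — since $A$ is constant in time — holds for every $s\in[0,t]$, while $f$ is assumed to meet the Theorem~5 conditions by hypothesis.

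Next I would set $A(s)\equiv A$, so that $L(s)\equiv L$ and $H(s)\equiv H$ are constant. A time-ordered exponential of a constant generator collapses to an ordinary matrix exponential, so the solution operator becomes $\mathcal{T}e^{-\int_0^t A\,\dd s}=e^{-At}$ and each propagator in the integrand becomes $\mathcal{T}e^{-i\int_0^t (kL+H)\,\dd s}=e^{-it(kL+H)}$; substituting these into the general identity yields the claimed formula $e^{-At}=\int_{\mathbb{R}}\frac{f(k)}{1-ik}e^{-it(kL+H)}\,\dd k$.

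To make this rigorous I would then confirm that the right-hand side is a norm-convergent operator integral, so that the specialization is legitimate and not merely formal: for each real $k$, $kL+H$ is a real linear combination of the Hermitian operators $L$ and $H$, hence Hermitian, so $\big\|e^{-it(kL+H)}\big\|=1$; together with $\int_{\mathbb{R}}\big|f(k)/(1-ik)\big|\,\dd k<\infty$ (the decay of $f$) this bounds the integrand by an integrable scalar function.

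I do not anticipate a genuine obstacle along this route, since the analytic heavy lifting is packaged inside the cited Theorem~5. The one place real difficulty would enter is if one wanted a self-contained proof: there the natural argument is to analytically continue $k\mapsto e^{-it(kL+H)}$ off the real axis, observe that at $k=-i$ it equals $e^{-tA}$ (because $A=L+iH$ gives $-it(-i\,L+H)=-t(L+iH)$), and apply the residue theorem using the simple pole of $1/(1-ik)$ at $k=-i$. The obstruction is closing the contour in the lower half-plane: one must show $e^{-it(kL+H)}$ remains bounded there, and this is exactly where $\mathfrak{R}(A)=L\succeq 0$ is indispensable, since the Hermitian part of the exponent $-it(kL+H)$ equals $t\,\mathrm{Im}(k)\,L$, which is $\preceq 0$ throughout the lower half-plane, making the continued propagator a contraction rather than exponentially growing. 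That non-Hermitian contour analysis is the technical core of Ref.~\cite{an2023quantum}, so for the present work it is cleanest to invoke it as a black box.
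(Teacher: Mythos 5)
Your proposal is correct and follows essentially the same route as the paper: specialize the time-dependent LCHS identity of Ref.~\cite{an2023quantum} to constant $A$, noting that the time-ordered exponentials collapse to ordinary matrix exponentials and that $\mathfrak{R}(A)\succeq 0$ is exactly the required condition $L\succeq 0$. The only cosmetic difference is that the paper routes the appeal through Lemma~6 of that reference (verifying the Cauchy principal value term $\mathcal{P}\int_{\mathbb{R}} f(k)e^{-it(kL+H)}\,\dd k$ vanishes in the time-independent setting) and then lets the proof of their Theorem~5 carry through, whereas you specialize the statement of Theorem~5 directly --- both amount to the same black-box invocation.
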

\begin{proof}
    One can observe that by setting $A(t) = A$ in Lemma 6 of Ref. \cite{an2023quantum}, such that the time ordered Dyson series reduces to the simple exponential propagator $\mathcal{T}e^{-i \int_0^t A(s) ds} \to e^{-At}$, it is clear that $\mathcal{P}\int_\mathbb{R} f(k) e^{-i t(kL + H)}dk = 0$, where $\mathcal{P}$ denotes the Cauchy principal value of the integral. As a result, the proof of [Ref. \cite{an2023quantum}, Theorem 5] yields the result in the Theorem statement. 
\end{proof}

\noindent Ref. \cite{an2023quantum} studies a family of kernel functions that satisfy the Theorem above. They find that the most performant kernel function, or the kernel function that decays most rapidly, to be 
\begin{equation} \label{eq:kernel}
    f(k) = \frac{1}{C_\beta}e^{-(1+ik)^\beta}, 
\end{equation}
where $C_\beta = 2\pi e^{-2^\beta}$ is a normalization constant, and $\beta \in (0,1)$. Ideally, we want $f(k)$ to decay exponentially fast such that size the integral truncation interval is only required to grow logarithmically with the desired precision. The free parameter $\beta$ keeps this scaling slightly sub-optimal. In Ref. \cite{an2023quantum} it is shown that this feature cannot be improved purely by finding better kernel functions, therefore, the problem of eliminating the $\beta$ dependence on the truncation interval is an open problem. We do not address this problem this work, however, we give constant factor bounds that show how the $\beta$ dependence can be slightly improved. We provide a detailed study of the truncation and discretization of the integral in Theorem \ref{thm:t-indep} in the following Section. 

\section{Analysis of the LCHS Integral}
\label{sec:lchs_integral}
In this Section we perform constant factor analysis of the numerical parameters affecting the convergence of the LCHS formula. Specifically, we provide constant factor bounds for the truncation cut-off $K$ and the number of quadrature points $Q$ in the integral. Since we are interested in the time-independent case, we begin each Subsection with discussions on whether this simplification leads to constant factor improvements in the LCHS parameters $K$ and $M$. With simple examples, we answer this question in the negative and find that for the case $A(t)=A$, the improvements come strictly via the Hamiltonian simulation algorithm, which, to those with a deep familiarity with LCHS techniques, may have been expected from the outset. 

\subsection{Integral Truncation}
The goal of this section is to derive a constant factor bound on the truncation parameter $K$ for the LCHS integral. First, we briefly show that the scaling of the integral truncation parameter does not change in the time-independent case. In the absence of any simplifications, we utilize the bound proven in Ref. \cite{an2023quantum}. By using special functions to exactly solve this bound, we find tighter constant factor formulas than the naive bound.\\
Let us write the truncated LCHS formula in the following way:
\begin{equation} \label{eq:trunc_LCHS}
    \int_\mathbb{R} g(k) U(t,k) \approx \int_{-K}^K g(k) U(t,k),
\end{equation}
where for the moment, allow the unitary to have some time dependence. Here we wish to analyze the truncation error as a function of $K$. Observe the following:
\begin{align}
    \left | \left |\int_\mathbb{R} g(k) U(t,k) - \int_{-K}^K g(k) U(t,k) \right | \right | &= \left | \left | \int_{K}^\infty g(k) U(t,k) + \int_{-\infty}^{-K} g(k) U(t,k) \right | \right | \\
    & \leq \int_{K}^\infty |g(k)|  + \int_{-\infty}^{-K} |g(k)|.
\end{align}
Observe that regardless of the time dependence of $U$, given the fact that for $U$ unitary  $\|U(t)\|_\infty = 1 \: \forall \: U,t$. This leads to the same truncation bound results from Ref. \cite{an2023quantum}. In the referenced work, it is argued that the kernel function that yields near optimality in terms of decay along the real axis is the following 
\begin{equation}
    g(k) = \frac{1}{C_\beta (1-ik) e^{(1+ik)^\beta}},\label{eq:kernel}
\end{equation}
where $C_\beta = 2\pi e^{-2^\beta}$. In this case, [\cite{an2023quantum}, Lemma 9] gives us the following bound, which then also holds in the time independent case: 

\begin{equation} \label{eq:trunc_bound}
    \left\|\int_\mathbb{R} g(k) U(t,k)dk - \int_{-K}^K g(k) U(t,k) dk \right\| 
    \leq \frac{2^{\left\lceil 1/\beta \right\rceil + 1} \left\lceil 1/\beta \right\rceil !}{C_\beta \cos{(\beta \pi/2)}^{\left\lceil 1/\beta \right\rceil}}
    \frac{1}{K} e^{-\frac{1}{2} K^\beta \cos(\beta \pi/2)} \leq \epsilon_{\text{trunc}}.
\end{equation}
From this bound we wish to extract a non-asymptotic or constant factor formula for $K(\epsilon)$. The most naive way to get an expression for $K(\epsilon)$ is to upper-bound $1/K \leq 1$, and then just solve to obtain the following:
\begin{equation} \label{eq:naive_K}
    K = \left (\frac{2 \ln \left( \frac{2^{\left\lceil 1/\beta \right\rceil + 1} \left\lceil 1/\beta \right\rceil !}{C_\beta \cos{(\beta \pi/2)}^{\left\lceil 1/\beta \right\rceil}}
    \frac{1}{\epsilon_{\text{trunc}}}\right)}{\cos(\beta \pi /2)} \right )^{1/\beta}, 
\end{equation} 
which is where the scaling $K \in O(\log(1/\epsilon_{\text{trunc}})^{1/\beta})$ is derived in Ref \cite{an2023quantum}. \\

However, in pursuit of our goal of reducing constant factors for LCHS methods, we would like to avoid performing this constant upper bound of $1/K$. Instead, to achieve a tight bound, we use special functions to solve this expression exactly. The results of solving Equation \ref{eq:trunc_bound} are given in the following Lemma.

\begin{lemma} \label{lem:K_formula}
Let $K \in \mathbb{R}$ be the truncation point of the LCHS integral defined by Equation \ref{eq:trunc_LCHS} and $\beta \in (0, 1)$ be a tunable parameter. Then to achieve a truncation error $\epsilon_\text{trunc}$ defined in Equation \ref{eq:trunc_bound}, $K$ takes the following form 

\begin{equation} \label{eq:lambert_K}
    K = \left (\frac{2}{\beta \cos (\beta \pi /2)} W_0\left ( \left (\frac{B_\beta}{\epsilon_{\text{trunc}}} \right )^{\beta} \frac{\beta \cos(\beta \pi /2)}{2} \right ) \right )^{1/\beta},
\end{equation}
and $$
B_\beta \equiv \frac{2^{\left\lceil 1/\beta \right\rceil + 1} \left\lceil 1/\beta \right\rceil !}{2\pi e^{-2^\beta} \cos{(\beta \pi/2)}^{\left\lceil 1/\beta \right\rceil}},
$$
which is a tightened version of the bound from \cite{an2023quantum}, and $W_0$ is the principal branch of the Lambert W-function.
    \begin{proof}
        Observing the right hand side of Equation \ref{eq:trunc_bound}, we can use a substitution to rewrite it in the form $x = ye^y$. First recall $C_\beta = 2\pi e^{-2^\beta}$ and in turn define $$B_\beta \equiv \frac{2^{\left\lceil 1/\beta \right\rceil + 1} \left\lceil 1/\beta \right\rceil !}{C_\beta \cos{(\beta \pi/2)}^{\left\lceil 1/\beta \right\rceil}},$$ which leaves us with 
\begin{equation}
    B_\beta \frac{1}{K} e^{-\frac{1}{2} K^\beta \cos(\beta \pi/2)} = \epsilon_{\text{trunc}}.
\end{equation}
If we define $y\equiv \frac{\beta K^\beta}{2}\cos(\beta \pi/2)$ and make the substitution $K = \left(\frac{2y}{\beta\cos(\beta\pi/2)}\right)^{1/\beta}$, we can rearrange the equation into the following form
\begin{align}
    \frac{B_\beta}{\epsilon_{\mathrm{trunc}}} &= \left ( \frac{2ye^y}{\beta \cos(\beta \pi /2)}\right )^{1/\beta} \notag \\
    \left (\frac{B_\beta}{\epsilon_{\mathrm{trunc}}}\right ) ^\beta \frac{\beta \cos(\beta \pi /2)}{2}&= ye^y,
\end{align}
Then if we define $x\equiv \left (\frac{B_\beta}{\epsilon_{\text{trunc}}} \right )^{\beta} \frac{\beta \cos(\beta \pi /2)}{2}$ we obtain
\begin{align}
    x = ye^y.
\end{align}
Now the function $x=ye^y$ has well known solutions for $y$ called \textit{Lambert W-functions}, of which its branches are enumerated $y=W_k(x)$ for integer $k$. Given $x \in \mathbb{R}$ and $x\geq 0$, which is implied by the restriction $\beta \in (0,1)$, our equation is solved by the principal branch $y=W_0(x)$. Solving for $K$ in terms of y now gives the stated result.
    \end{proof}
\end{lemma}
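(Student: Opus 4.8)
The plan is to start from the truncation bound in Equation~\eqref{eq:trunc_bound}, namely
\begin{equation*}
B_\beta \frac{1}{K} e^{-\frac{1}{2} K^\beta \cos(\beta\pi/2)} = \epsilon_{\text{trunc}},
\end{equation*}
where the worst case is taken by setting the left-hand side equal to $\epsilon_{\text{trunc}}$, and to solve it \emph{exactly} for $K$ rather than making the crude estimate $1/K \le 1$. The obstacle is purely algebraic: the unknown $K$ appears both polynomially (the $1/K$ prefactor) and inside the exponent (as $K^\beta$), so naive inversion fails. The key observation is that raising both sides to the power $1/\beta$ and regrouping converts the equation into the canonical form $x = y e^{y}$, which is inverted by the Lambert $W$ function.

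Concretely, I would first isolate the exponential and multiply through by $K^{-1}$: write $B_\beta/\epsilon_{\text{trunc}} = K\, e^{\frac{1}{2}K^\beta \cos(\beta\pi/2)}$. Raising both sides to the $1/\beta$ power yields
\begin{equation*}
\left(\frac{B_\beta}{\epsilon_{\text{trunc}}}\right)^{1/\beta} = K^{1/\beta}\, e^{\frac{K^\beta}{2\beta}\cos(\beta\pi/2)}.
\end{equation*}
Now multiply both sides by the constant $\frac{\cos(\beta\pi/2)}{2\beta}$ to produce, on the right, the product $\bigl(\frac{K^\beta}{2\beta}\cos(\beta\pi/2)\bigr) e^{\frac{K^\beta}{2\beta}\cos(\beta\pi/2)}$. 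Setting $y \equiv \frac{K^\beta}{2\beta}\cos(\beta\pi/2)$ and $x \equiv \left(\frac{B_\beta}{\epsilon_{\text{trunc}}}\right)^{1/\beta}\frac{\cos(\beta\pi/2)}{2\beta}$ this is exactly $x = y e^{y}$.

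The final step is to invert. Since $\beta \in (0,1)$ we have $\cos(\beta\pi/2) > 0$, $B_\beta > 0$, and $\epsilon_{\text{trunc}} > 0$, so $x \ge 0$; on $[0,\infty)$ the map $y \mapsto y e^{y}$ is a bijection onto $[0,\infty)$ with inverse the principal branch $W_0$, so $y = W_0(x)$ is the unique nonnegative solution. Substituting back $y = \frac{K^\beta}{2\beta}\cos(\beta\pi/2)$ gives $K^\beta = \frac{2\beta}{\cos(\beta\pi/2)} W_0(x)$, and taking the $1/\beta$ power yields the claimed closed form
\begin{equation*}
K = \left(\frac{2\beta}{\cos(\beta\pi/2)} W_0\!\left(\left(\frac{B_\beta}{\epsilon_{\text{trunc}}}\right)^{1/\beta}\frac{\cos(\beta\pi/2)}{2\beta}\right)\right)^{1/\beta}.
\end{equation*}
I would close by remarking that this is tighter than Equation~\eqref{eq:naive_K} because $W_0(z) \le \ln z$ for $z \ge e$, so replacing $W_0$ by the logarithm recovers (a slight weakening of) the naive bound; the only real subtlety worth flagging in the writeup is confirming the branch choice and the sign/monotonicity facts that make $W_0$ the correct and unique inverse here.
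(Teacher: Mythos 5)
You follow the same route as the paper (saturate the truncation bound, substitute into the form $x=ye^{y}$, invert with the principal Lambert branch), but the one algebraic step you write out explicitly does not go through as claimed. From $B_\beta/\epsilon_{\text{trunc}} = K\, e^{\frac{1}{2}K^{\beta}\cos(\beta\pi/2)}$, raising to the power $1/\beta$ gives $\left(B_\beta/\epsilon_{\text{trunc}}\right)^{1/\beta} = K^{1/\beta}\, e^{\frac{K^{\beta}}{2\beta}\cos(\beta\pi/2)}$, and multiplying by $\frac{\cos(\beta\pi/2)}{2\beta}$ produces the prefactor $\frac{K^{1/\beta}}{2\beta}\cos(\beta\pi/2)$, \emph{not} $\frac{K^{\beta}}{2\beta}\cos(\beta\pi/2)$. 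Since $K^{1/\beta}\neq K^{\beta}$ for $\beta\in(0,1)$ and $K>1$, the right-hand side is not of the form $y e^{y}$ with $y=\frac{K^{\beta}}{2\beta}\cos(\beta\pi/2)$; the identity you assert holds only at $\beta=1$, so the advertised ``exact'' inversion is not exact. (This is the same rearrangement displayed without derivation in the paper's own proof, so you have faithfully reproduced its route, including this gap.)

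There are two ways to close it. First, for a genuinely exact solution, raise to the power $\beta$ rather than $1/\beta$: with $u=K^{\beta}$ one gets $\left(B_\beta/\epsilon_{\text{trunc}}\right)^{\beta} = u\, e^{\frac{\beta\cos(\beta\pi/2)}{2}u}$, hence $K=\left(\frac{2}{\beta\cos(\beta\pi/2)}\, W_0\!\left(\left(\frac{B_\beta}{\epsilon_{\text{trunc}}}\right)^{\beta}\frac{\beta\cos(\beta\pi/2)}{2}\right)\right)^{1/\beta}$, which is strictly smaller (tighter) than the stated formula for $\beta<1$. Second, the stated $K$ can still be justified as a \emph{sufficient} truncation level: unwinding the definition of $W_0$ shows it satisfies $B_\beta K^{-\beta^{2}} e^{-\frac{1}{2}K^{\beta}\cos(\beta\pi/2)}=\epsilon_{\text{trunc}}$, and since $K^{-1}\le K^{-\beta^{2}}$ whenever $K\ge 1$, the left-hand side of Equation \ref{eq:trunc_bound} at this $K$ is at most $\epsilon_{\text{trunc}}$; in either repair you also need the (easy) monotonicity observation that the truncation bound decreases in $K$, so that solving the saturated equation really yields error at most $\epsilon_{\text{trunc}}$. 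Your branch-selection argument ($x\ge 0$, so $W_0$ is the unique nonnegative inverse) and the closing comparison with the naive bound of Equation \ref{eq:naive_K} are fine as written.
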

To provide additional intuition to the result, note that the principal branch of the Lambert W-function is bounded like so
\begin{equation}
    W_0(x) \leq \ln(x) - \ln\ln(x) + \frac{e}{e-1} \frac{\ln \ln(x)}{\ln(x)},
\end{equation}
which holds $\forall \: x \geq e$ \cite{hoorfar2008inequalities}.  To compare the constant factor bound proved in Lemma \ref{lem:K_formula} to that of Equation \ref{eq:lambert_K}, we plot $K(\epsilon)$ in Figure \ref{fig:Kplot}.

\subsection{Integral Discretization}
In terms of discretization, we first follow the scheme presented in the original paper. The method used here is that of Gaussian quadrature, which is specifically chosen due to the fact that the kernel function $f(k)$ can be well approximated by polynomials, which is the regime in which Gaussian quadrature enjoys rapid convergence. \\

Thus far we have a truncation bound for the integral, which we wish to further approximate:

\begin{equation} \label{eq:LCHSdisc}
    \int_{-K}^K g(k) U(t,k)\text{d}k = \sum_{m=-K/h}^{K/h-1}\int_{mh}^{(m+1)h} g(k) U(t,k) \text{d}k \approx \sum_{m=-K/h}^{K/h-1} \sum_{q=1}^Q c_{q,m} U(t,k_{q,m}). 
\end{equation}
In the above expression, $h$ can be viewed as a short time step that is used to break up the integrals into smaller pieces, and chosen such that $K/h$ is an integer. In the second approximate equality, the Gaussian quadrature is employed from numerical integration, where $Q$ is the number of quadrature nodes. We thus need to bound the number of nodes $Q$ as a function of $\epsilon$, as an encompassing bound on $2QK/h$ will gives us the overall number of terms in the summation. We provide a bound on $Q$ in the following Lemma which provides a tighter estimate of the sufficient value of $Q$ than the analysis of \cite{an2023quantum} which yields $
    Q = \ceil{\frac{1}{2}\log_2 \left (\frac{8K}{3 C_\beta \epsilon_{\mathrm{disc}}} \right )}
$ which will be an improvement largely from the use of Lambert W functions rather than their approximate forms as logarithmic functions.
\begin{lemma}[Quadrature points] \label{lem:Qbound}
    Let $K/h$ be an integer, $g(k)$ be the kernel function described in~\eqref{eq:kernel} and let $U(t,k) = e^{-it(kL+H)}$ be a unitary operator that is $C^\infty$ with respect to $k$, then for any $\epsilon_{\rm disc}>0$ there exists $Q>0$ such that
    $$
     \left\|\int_{-K}^K g(k) U(t,k)dk - \sum_{m=-K/h}^{K/h-1} \sum_{q=1}^Q c_{q,m} U(t,k_{q,m})\right\| \le \epsilon_{\rm disc}
    $$
    and further it suffices to pick $Q$ such that
\begin{equation}
    Q = \ceil{-\frac{\log(e)}{4} W_{-1}\left (\frac{-3 C_\beta}{2\pi e^{1/3} \log(e)} \frac{\epsilon_{\mathrm{disc}}}{K}\right )},
\end{equation}
\end{lemma}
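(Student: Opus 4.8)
The plan is to bound the Gaussian-quadrature error on each subinterval $[mh,(m+1)h]$ by the standard error formula for $Q$-node Gauss--Legendre quadrature, sum over the $2K/h$ subintervals, and then solve the resulting inequality for $Q$ using the Lambert $W$-function. Recall that for a function $\phi\in C^{2Q}$ the $Q$-node Gauss--Legendre rule on an interval of length $h$ has error $\frac{(b-a)^{2Q+1}(Q!)^4}{(2Q+1)[(2Q)!]^3}\phi^{(2Q)}(\xi)$ for some $\xi$ in the interval. First I would apply this with $\phi(k)=g(k)U(t,k)$ and use the operator-norm bound: since $U(t,k)$ is unitary and $C^\infty$ in $k$, and since $g$ is the kernel in~\eqref{eq:kernel}, I need a bound on $\|\partial_k^{2Q}[g(k)U(t,k)]\|$. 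The cleanest route is to bound this by the analyticity of $g$ together with $\|U(t,k)\|=1$: the dominant contribution comes from the derivatives of $g$, and an estimate of the form $\max_{k\in[-K,K]}\|\partial_k^{2Q}[g(k)U(t,k)]\| \le \frac{(2Q)!}{C_\beta}\,(\text{something of order }1)$ should follow from Cauchy's estimate on a strip, mirroring the argument in \cite{an2023quantum}; the factors $e^{1/3}$ and the $2\pi$ in the final bound indicate that the precise constant being used is $(2Q)!\,e^{1/3}/C_\beta$ or similar after absorbing a geometric-series radius.

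Next I would combine the factorial ratio with Stirling-type bounds. The key simplification is that $\frac{(Q!)^4}{(2Q+1)[(2Q)!]^3}\cdot(2Q)! = \frac{(Q!)^4}{(2Q+1)[(2Q)!]^2}$, and using $\binom{2Q}{Q}\ge \frac{4^Q}{2Q+1}$ (or the sharper $\frac{4^Q}{\sqrt{\pi Q}}$-type bound) gives $\frac{(Q!)^2}{(2Q)!}\le \frac{\sqrt{\pi Q}}{4^Q}$, hence the per-interval error is at most roughly $h^{2Q+1}\cdot 4^{-2Q}\cdot(\text{poly})\cdot\frac{1}{C_\beta}$. Summing the $2K/h$ identical-length intervals contributes a factor $2K/h$, killing one power of $h$, so the total is $\lesssim \frac{2K}{C_\beta}\,(h/4)^{2Q}\cdot e^{1/3}$ up to lower-order polynomial factors. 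Setting $h$ to its natural value (the factor $\frac34$ in $\frac{-3C_\beta}{2\pi e^{1/3}\log(e)}$ suggests $h$ is chosen around $3/4$ times something, or the base-change constant $\log(e)=1/\ln 2$ enters here because the bound is most naturally written with a $2^{-2Q}=4^{-Q}$ and then re-expressed) reduces the requirement to an inequality of the shape $C\cdot\rho^{2Q}\le \epsilon_{\rm disc}$ with $\rho<1$, equivalently $\frac{K}{C_\beta}e^{1/3}\cdot 2^{-aQ}\le \epsilon_{\rm disc}$ for an explicit constant $a$.

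Finally I would invert this. Taking logarithms of $C\rho^{Q}\le\epsilon_{\rm disc}$ gives $Q\ge \frac{\ln(C/\epsilon_{\rm disc})}{\ln(1/\rho)}$, which is the plain-logarithm form; to get the stated Lambert-$W_{-1}$ expression one must instead keep a residual polynomial-in-$Q$ prefactor that was \emph{not} crudely bounded away — i.e.\ keep the $\sqrt{Q}$ or $Q$ factor from Stirling — so the inequality reads $Q^{\pm 1/2}\,\rho^{Q}\le \text{const}\cdot \epsilon_{\rm disc}/K$, which rearranges to $y e^{y}=x$ with $y$ proportional to $-Q$ (hence negative) and $x$ a small negative number, solved by the $W_{-1}$ branch. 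The sign pattern $-\frac{\log e}{4}W_{-1}(\text{negative})$ with the argument $\propto \epsilon_{\rm disc}/K$ is exactly what this produces, since $W_{-1}$ is real and negative on $(-1/e,0)$. The main obstacle I anticipate is the first step: getting a clean, correct constant in the bound on $\|\partial_k^{2Q}[g(k)U(t,k)]\|$ — in particular handling the product rule so that the oscillatory/unitary factor $U(t,k)$ and the time $t$ do not contaminate the estimate, and pinning down the exact geometric radius that yields the $e^{1/3}$ and $2\pi$ constants rather than a looser bound. Once that constant is fixed, the Stirling manipulation and the Lambert-$W$ inversion are routine.
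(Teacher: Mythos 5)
There is a genuine gap, and it sits exactly where you flag your own uncertainty: the bound on $\|\partial_k^{2Q}[g(k)U(t,k)]\|$. Your plan is to treat $U(t,k)$ as a norm-one spectator and attribute essentially all of the $2Q$ derivatives to $g$ via a Cauchy estimate, hoping that ``the oscillatory/unitary factor $U(t,k)$ and the time $t$ do not contaminate the estimate.'' They do. Since $U(t,k)=e^{-it(kL+H)}$ depends on $k$ through the generator, its $k$-derivatives grow as $\max_t\|\partial_k^{q}U(t,k)\|\le t^{q}q^{q}\|L\|^{q}$; the paper derives this by differentiating the defining equation $\partial_t U=-i(kL+H)U$ $q$ times in $k$, solving the resulting inhomogeneous equation with an integrating factor, and recursing. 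Feeding this into the Fa\`a di Bruno / product-derivative bound of Ref.~\cite{an2023quantum} gives $\|(gU)^{(2Q)}\|\le \frac{4}{3C_\beta}(2Q)!(2Q+1)(et\|L\|)^{2Q}$, so the per-interval quadrature error carries an unavoidable factor $(et\|L\|)^{2Q}h^{2Q+1}$. The reason $t$ and $\|L\|$ are absent from the final formula for $Q$ is not that they never enter, but that the step size is chosen as $h=1/(et\|L\|)$, which cancels this factor after summing the $2K/h$ subintervals (and pushes the $t\|L\|$ dependence into $M=2KQ/h$, consistent with Lemma \ref{lem:class_error}). Without the derivative recursion and this specific choice of $h$, your claimed per-interval estimate ``$(2Q)!/C_\beta$ times something of order $1$'' is false for generic $h$, and the subsequent Stirling manipulation and inversion have nothing correct to act on.

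Your later steps are otherwise aligned with the paper: the factorial ratio is handled with Robbins/Stirling bounds (this, not a Cauchy radius, is where the $e^{1/3}$ comes from, via $e^{1/3Q}\le e^{1/3}$), the surviving polynomial prefactor is kept (the paper bounds $\pi Q e^{1/3}2^{-4Q}\cdot\frac{8K}{3C_\beta}$ by $\frac{2\pi e^{1/3}K}{3C_\beta}\frac{1}{Qe^{Q}}$), and the resulting relation $Qe^{Q}\propto K/\epsilon_{\rm disc}$ is inverted with the $W_{-1}$ branch after checking the argument lies in $[-1/e,0)$. So the inversion strategy is sound; the missing idea is the bound on the $k$-derivatives of $U(t,k)$ together with the choice $h=1/(et\|L\|)$, which is the actual content of the lemma's proof.
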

\begin{proof}
The starting point for this is using the known error bounds on the Gaussian Quadrature method, which scale with the higher-order derivatives of the integrand: \\
%%%equation
\begin{equation} \label{eq:disc_bound}
    \left | \left | \int_{mh}^{(m+1)h} g(k) U(t,k) dk -  \sum_{q=1}^Q c_{q,m} U(t,k_{q,m})  \right | \right | \leq \frac{(Q!)^4 h^{2Q+1}}{(2Q+1)((2Q)!)^3 } \|(g(k) U(t,k))^{(2Q)}\|.
\end{equation}
where $g(k)$ is the kernel function given in~\eqref{eq:kernel}.
Following notation from the LCHS work, the superscript $(2Q)$ indicates the order of the $k$-derivative of the integrand. We can see that the method scales with the higher order $k$-derivatives of $U(t,k)$. We quickly verify here that the bounds go through similarly for a time independent $A$ matrix. 

To proceed with bounding the discretization error, we need to bound the high order $k$-derivatives of $U(t,k)$. In this section, we do not assume $A$ to be normal, and follow techniques in \cite{an2023quantum}. By definition
\begin{equation}
    \frac{dU(t,k)}{dt} = -i(kL + H)U(t,k),\label{eq:Ueqn}
\end{equation}
and after differentiating~\eqref{eq:Ueqn} $q$ times with respect to $k$ we have 
\begin{equation}
    \frac{dU^{(q)}}{dt} = -i(kL + H)U^{(q)} - iqLU^{(q-1)}.
\end{equation}
Now with the initial conditions $U(0,k) = \mathbb{I}$ for all $k$ it then follows that all higher $k$-derivatives are zero, i.e. $U(0,k)^{(q)} = 0 \: \forall \: q$, we can solve this differential equation using the method of integrating factor, given the fact that our matrices are time-independent. 
\begin{align}
    \frac{d}{dt} \left [e^{i\int (kL+H) dt} U^{(q)} \right ] &= e^{it(kL+H)} (i(kL+H)) U^{(q)} + e^{it(kL+H)} \frac{dU^{(q)}}{dt} \\
    &= e^{it(kL+H)} (-iqLU^{(q-1)}),
\end{align}
now by integrating both sides and applying the inverse of the unitary we arrive at the following
\begin{align}
     \int_0^t \frac{d}{dt} \left [e^{it (kL+H) } U^{(q)} \right ] dt & = \int_0^t e^{it(kL+H)} (-iqLU^{(q-1)}) dt \\
     U^{(q)} &= e^{-it(kL+H)}  \int_0^t e^{it(kL+H)} (-iqLU^{(q-1)}) dt.
\end{align}
Note that in integrating the LHS we also used the condition $U(0,k)^{(q)} = 0$ . Now we use the sub-multiplicative property of the spectral norm to see that
\begin{align}
    \max_t \|U^{(q)}\| &\leq \int_0^t \left | \left | e^{ix(kL+H)} (-iqLU^{(q-1)}) \right | \right| dx \\
    & \leq tq \|L\| \max_t \|U^{(q-1)}\|,
\end{align}
and recursing this expression over $q$ times and recalling that $\|U^{(0)}\| = \|U\|=1$ gives 
\begin{equation} \label{eq:k_derivative}
    \max_t \|U^{(q)}\| \leq t^q q^q \|L\|^q, 
\end{equation}
which reproduces the exact bound of Lemma 10 in \cite{an2023quantum} aside from the maximum taken over time-dependent $L$. This allows for the application of the Faa Di Bruno formula such as [\cite{an2023quantum}, Equation 196], which allows for the upper bounding of the norm of the derivatives in Equation \ref{eq:disc_bound}.  
For clarity, this expression is simply obtained by applying the time independent bound derived in Equation \ref{eq:k_derivative} into the proof of Lemma 10 in Ref. \cite{an2023quantum}. In fact, the only difference so far is the absence of taking the maximum over time for the $\|L\|$ term. Next, we plug this into Equation \ref{eq:disc_bound} to obtain
\begin{align}
    \left | \left | \int_{mh}^{(m+1)h} g(k) U(t,k) dk -  \sum_{q=1}^Q c_{q,m} U(t,k_{q,m})  \right | \right |& \leq \frac{(Q!)^4 h^{2Q+1}}{(2Q+1)((2Q)!)^3 } \frac{4}{3C_\beta}(2Q)! (2Q+1) (et\|L\|)^{2Q} 
    \\
    & = \frac{(Q!)^4 h^{2Q+1}}{((2Q)!)^2 } \frac{4}{3C_\beta} (et\|L\|)^{2Q}.
\end{align}
Now we apply the Robbins bounds on the factorial $\sqrt{2\pi n}(n/e)^ne^{1/(12n+1)} < n!<\sqrt{2\pi n}(n/e)^ne^{1/12n}$ to keep the constants tight 
\begin{align}
    \frac{(Q!)^4 h^{2Q+1}}{((2Q)!)^2 } \frac{4}{3C_\beta} (et\|L\|)^{2Q} &\leq \frac{\pi Q e^{1/3Q}}{2^{4Q}e^{2/(24Q+1)}}  \frac{4}{3C_\beta} h^{2Q+1} (et\|L\|)^{2Q} \\
    &\leq \frac{\pi Q e^{1/3}}{2^{4Q}}  \frac{4}{3C_\beta} h^{2Q+1} (et\|L\|)^{2Q}. 
\end{align}
In the top line we lower bound $e^{2/(24Q+1)}$ by taking $Q \to \infty$ and upper bound $e^{1/3Q}$ using $Q=1$. Next, we sum this value over all $2K/h$ short intervals
\begin{align}
    \sum_{m=-K/h}^{K/h-1} \left | \left |\int_{mh}^{(m+1)h} g(k) U(t,k) dk -  \sum_{q=1}^Q c_{q,m} U(t,k_{q,m}) \right | \right | &\leq \frac{\pi Q e^{1/3}}{2^{4Q}}  \frac{8K}{3C_\beta} h^{2Q} (et\|L\|)^{2Q}   \\  
    & =\frac{\pi e^{1/3}}{e^{Q} Q}  \frac{2K}{3C_\beta},
\end{align}
where we set the time step $h=1/(et\|L\|)$. Now, setting this equal to $\epsilon_{\mathrm{disc}}$ and solving for $Q$ we first make the change of variables $Q = \frac{x}{4} \log(e)$, which yields the form 
\begin{equation}
    \underbrace{\frac{2 \pi e^{1/3} \log(e)}{3C_\beta}\frac{K}{\epsilon}}_y = \frac{e^x}{x},
\end{equation}
and by setting the LHS equal to $y$ and $x=-z$ we once again obtain an expression that can be solved by the \textit{Lambert-W function} where 
\begin{equation}
    ze^{z} = -1/y
\end{equation}
has solution
\begin{equation}
    z=W_k(-1/y).
\end{equation}
However, this is not true in general, as we must be careful to ensure the correct branch of $W_k$ is applied and that the solution is unique. Since $y$ is real and strictly negative, $W_{-1}(-1/y)$ is a unique solution iff $-1/y \in [-1/e, 0)$. Writing out $-1/y$ and making constants as negative as possible we obtain 
$$-1/y \geq \frac{-3}{e^1e^{1/3}\log(e)}\frac{\epsilon}{K},$$
where we have applied the definition $C_\beta = 2\pi e^{-2^\beta}$, and taken the limit $\beta \to 0$. By computing this expression, one can see that the condition $-1/y \in [-1/e, 0)$ is satisfied for any $\epsilon/K \lesssim 0.67$, which is true in all interesting cases. Therefore $W_{-1}(-1/y)$ is a unique solution. Thus, by substituting all variables and enforcing $Q$ to be an integer, we obtain the following sufficient value of $Q$ 
\begin{equation}
    Q = \ceil{-\frac{\log(e)}{4} W_{-1}\left (\frac{-3 C_\beta}{2\pi e^{1/3} \log(e)} \frac{\epsilon_{\mathrm{disc}}}{K}\right )}.
\end{equation} 
\end{proof}

The improvement obtained by the tightening of this bound can also be seen in Figure \ref{fig:Mplot}. We also remark that for the case where $A$ is a normal matrix we have that $[A, A^\dagger ] = 0$ which allows these constants to be further tightened. This is because we can then take the $k$ derivatives of $U(t,k)$ directly by first performing  Trotterization: $U(t,k) = e^{-itkL}e^{-iHt}$, and then differentiating. Doing so reproduces Equation \ref{eq:k_derivative}, without the factor of $q^q$. However, due to the factorial terms that remain in the overall Gaussian quadrature bound, this does not drastically change the scaling, and may require a variable time-step $h_m$ to see significant improvements in algorithmic performance. We leave these questions to future work. 

\begin{figure}[t!]
    \centering
        \begin{subfigure}[b]{.49\textwidth}
            \includegraphics[width=1\textwidth]{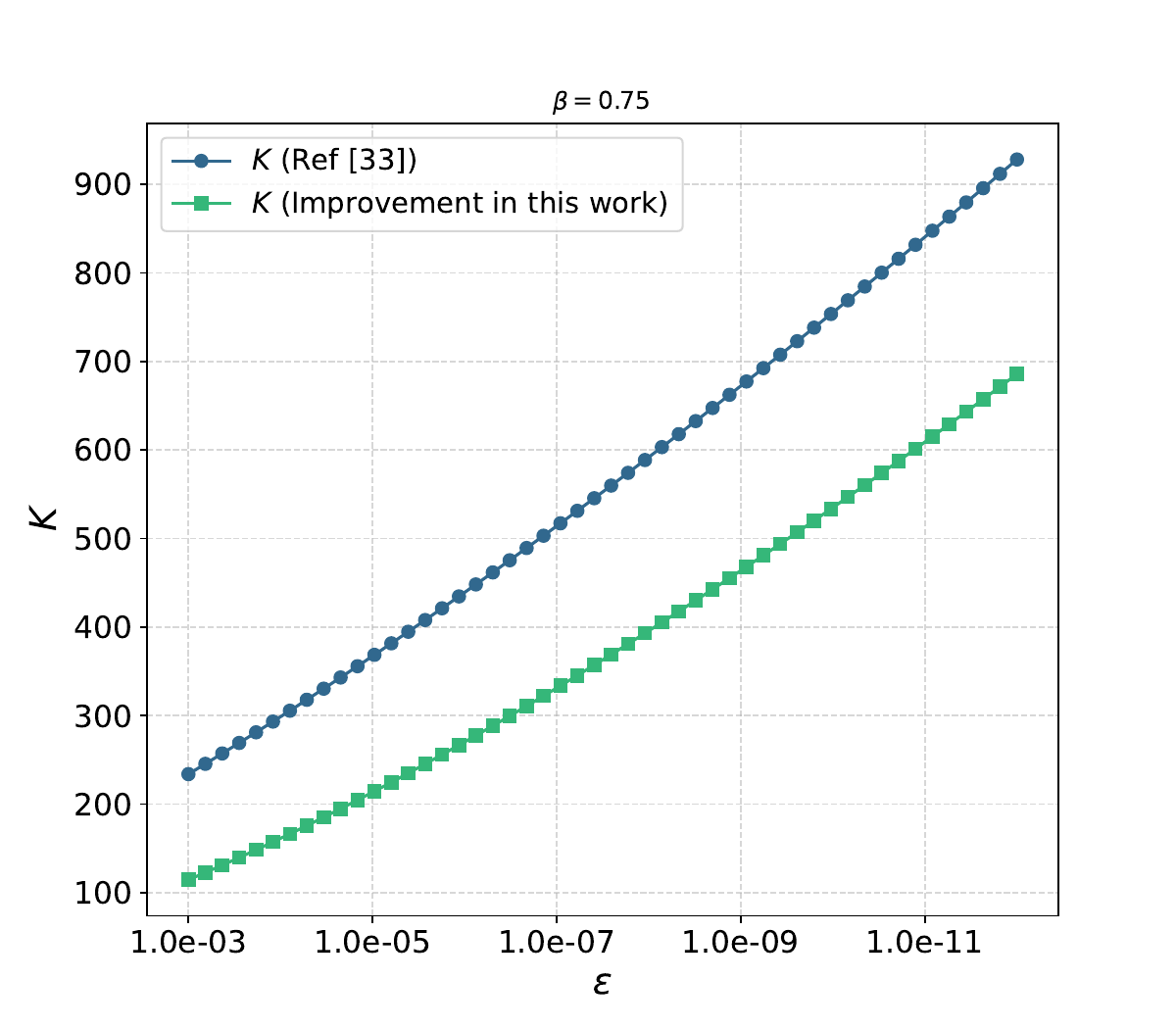}
            \caption{}\label{fig:Kplot}
        \end{subfigure}
        \begin{subfigure}[b]{.49\textwidth}
            \includegraphics[width=1\textwidth]{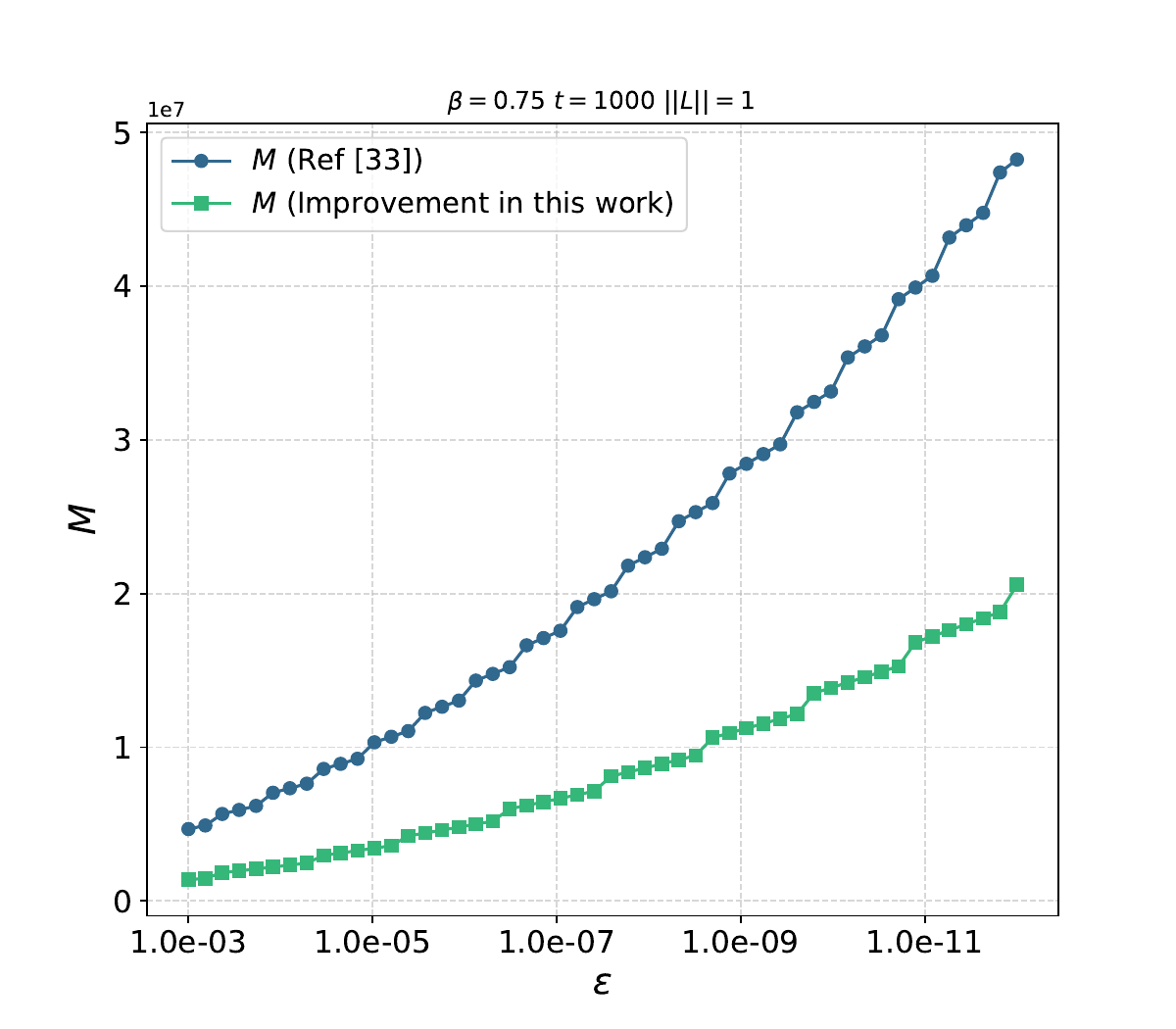}
            \caption{}\label{fig:Mplot}
        \end{subfigure}
        \caption{\textit{a) A comparison of $K$ upper bounds, which is the size of the truncation region of the integral which subsequently impacts the block encoding sub-normalization factor of the LCHS algorithm. b) A comparison of $M$ upper bounds which modulates the number of ancilla qubits and rotation gates required in constructing the block encoding of the LCHS formula. The improvement here is due to both improvements in $K$ and $M$ bounds. 
        }} 
\end{figure}
\FloatBarrier

\subsection{Total number of Summands}
Given that we now have a bound on the number of quadrature points in the integral, we can bound the total costs and establish error constraints.
The costs we consider are the number of summands in the LCHS formula, which we will denote $M$, and the 
truncation level $K$, which will inform the quantum implementation costs (see Lemma \ref{subsec:select}).
We also establish how the controllable errors $\epsilon_{\mathrm{trunc}}$ and $\epsilon_{\mathrm{disc}}$ contribute to the total approximation error.

\begin{lemma}[LCHS Cost and Error] \label{lem:class_error}
Let $A\in \mathbb{C}^{2n \times 2n}$ be a bounded operator and let $u:\mathbb{R}\rightarrow\mathbb{C}^{2^n}$ be a time-dependent vector that describes the exact solution to the differential equation $\frac{du(t)}{dt} = -Au(t)$ with initial conditions $u(0)=u_0$. When acting on the initial state vector $u_0$, the LCHS integrator $\mathcal{L}=\sum_{j=1}^M c_j e^{-it(k_j L +H)}$, as defined in Eq. \ref{eq:LCHSdisc}, generates a vector $v(t)$ satisfying $\|v(t)-u(t)\|\le \epsilon_v$, using a number of terms no more than
\begin{align}
\label{eq:trunc_and_disc_cost}
    M &\leq t \|L\| e \left (\frac{2\beta}{\cos (\beta \pi /2)} W_0\left ( \left (\frac{B_\beta}{\epsilon_{\text{trunc}}} \right )^{1/\beta} \frac{\cos(\beta \pi /2)}{2\beta} \right ) \right )^{1/\beta} \nonumber \\
    &\times \ceil{-\frac{\log(e)}{4} W_{-1}\left (\frac{-3 C_\beta}{2\pi e^{1/3} \log(e)} \frac{\epsilon_{\mathrm{disc}}}{\left (\frac{2\beta}{\cos (\beta \pi /2)} W_0\left ( \left (\frac{B_\beta}{\epsilon_{\text{trunc}}} \right )^{1/\beta} \frac{\cos(\beta \pi /2)}{2\beta} \right ) \right )^{1/\beta}}\right )},
\end{align}
and using a truncation level 

\begin{equation} \label{eq:k_bound}
    K = \left (\frac{2\beta}{\cos (\beta \pi /2)} W_0\left ( \left (\frac{B_\beta}{\epsilon_{\text{trunc}}} \right )^{1/\beta} \frac{\cos(\beta \pi /2)}{2\beta} \right ) \right )^{1/\beta}
\end{equation}

chosen as per Lemma \ref{lem:K_formula}, as long as $\epsilon_{\text{trunc}}$ and $\epsilon_{\text{disc}}$ satisfy $\|u_0\| ( \epsilon_{\mathrm{trunc}}+\epsilon_{\mathrm{disc}})\leq\epsilon_v$.
\end{lemma}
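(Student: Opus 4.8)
The plan is to assemble the bound by combining the truncation estimate of Lemma~\ref{lem:K_formula}, the discretization estimate of Lemma~\ref{lem:Qbound}, and a triangle inequality that separates the error $\|v(t)-u(t)\|$ into a truncation piece and a discretization piece. First I would write $u(t) = e^{-At}u_0$ via Theorem~\ref{thm:t-indep}, so that $u(t) = \left(\int_{\mathbb R} g(k) U(t,k)\,dk\right) u_0$. Then I would introduce the intermediate object $w(t) := \left(\int_{-K}^{K} g(k) U(t,k)\,dk\right) u_0$ and write $v(t) = \mathcal{L} u_0 = \left(\sum_{j=1}^{M} c_j U(t,k_j)\right) u_0$, so that by the triangle inequality
\begin{equation}
    \|v(t)-u(t)\| \le \left\|\left(\int_{\mathbb R} g(k)U(t,k)\,dk - \int_{-K}^{K} g(k)U(t,k)\,dk\right)u_0\right\| + \left\|\left(\int_{-K}^{K} g(k)U(t,k)\,dk - \sum_{j=1}^{M} c_j U(t,k_j)\right)u_0\right\|.
\end{equation}
Each term is bounded by the operator-norm estimate times $\|u_0\|$, giving $\|v(t)-u(t)\| \le \|u_0\|(\epsilon_{\mathrm{trunc}} + \epsilon_{\mathrm{disc}}) \le \epsilon_v$ by the hypothesis on $\epsilon_{\mathrm{trunc}},\epsilon_{\mathrm{disc}}$.

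Next I would plug in the explicit parameter choices. For the truncation level, Lemma~\ref{lem:K_formula} gives exactly the $K$ in Eq.~\eqref{eq:k_bound}. For the number of summands, the discretized formula in Eq.~\eqref{eq:LCHSdisc} has $2K/h$ short intervals each carrying $Q$ quadrature nodes, so $M = 2QK/h$; substituting the choice $h = 1/(et\|L\|)$ used in the proof of Lemma~\ref{lem:Qbound} yields $M = 2QK \cdot et\|L\| = t\|L\| e \cdot (2K) \cdot Q$. Wait — I should double-check the constant: the sum in Eq.~\eqref{eq:LCHSdisc} runs over $m$ from $-K/h$ to $K/h-1$, i.e. $2K/h$ intervals, and with $Q$ nodes each this is $2QK/h = 2QKet\|L\|$ terms, which I would present as $M \le t\|L\| e \cdot K \cdot \ceil{Q}$ after absorbing the factor of $2$ into the stated expression (or carrying it explicitly — I would match whichever normalization the displayed formula in the lemma uses). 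Then I substitute $K$ from Eq.~\eqref{eq:k_bound} into both the prefactor and into the argument of the $W_{-1}$ appearing in $Q$ from Lemma~\ref{lem:Qbound} with $\epsilon_{\mathrm{disc}}/K$ in the right place, which reproduces Eq.~\eqref{eq:trunc_and_disc_cost} verbatim.

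The only genuinely delicate point — and the part I expect to need the most care rather than difficulty — is bookkeeping the factor-of-two and ceiling placements between "$2QK/h$ summands" and the displayed product, and making sure the $C^\infty$ hypothesis of Lemma~\ref{lem:Qbound} is met: here $U(t,k) = e^{-it(kL+H)}$ is manifestly smooth in $k$ since $L,H$ are fixed bounded Hermitian matrices, and the derivative bound $\max_t\|U^{(q)}\| \le t^q q^q\|L\|^q$ from Eq.~\eqref{eq:k_derivative} is exactly what feeds the Gaussian-quadrature estimate. Everything else is a direct substitution. I would close by noting that the constraint $\|u_0\|(\epsilon_{\mathrm{trunc}}+\epsilon_{\mathrm{disc}}) \le \epsilon_v$ is precisely the condition under which the two-term triangle inequality above certifies $\|v(t)-u(t)\| \le \epsilon_v$, completing the proof.
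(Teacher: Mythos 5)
Your proposal is correct and follows essentially the same route as the paper: the same triangle-inequality split of $\|v(t)-u(t)\|$ into a truncation piece and a discretization piece bounded by $\|u_0\|(\epsilon_{\mathrm{trunc}}+\epsilon_{\mathrm{disc}})$, with $K$ taken from Lemma~\ref{lem:K_formula} and $Q$ from Lemma~\ref{lem:Qbound} under the time step $h=1/(et\|L\|)$, and $M=2KQ/h$. Your factor-of-two concern is well founded rather than a bookkeeping slip on your part: the paper's own proof likewise writes $M=2KQ/h=2\,et\|L\|KQ$, while the displayed bound in the lemma statement carries only $et\|L\|K\lceil Q\rceil$, so the stated formula appears to omit that factor of $2$.
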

\begin{proof}
    Observing Equation \ref{eq:LCHSdisc}, we see that the total number of Hamiltonian simulations of $e^{-it(kL+H)}$, or unitaries in the LCU expression, is equal to the total number of terms in the double sum, which is $$M = \frac{2KQ}{h},$$ where $[-K,K]$ is the truncation interval on the real line (hence the factor of 2), $Q$ is the number of quadrature points, and $h$ is the time step. By inserting the bound on $K$ from Lemma \ref{lem:K_formula} into the result of Lemma \ref{lem:Qbound}, which was obtained with a choice of time step $h=1/(et\|L\|)$, we obtain the expression for $M$ in the lemma statement. 
    Next, as seen in Eq. \ref{eq:LCHSdisc}, the largest magnitude among the quadrature point values $k_j$ is no more than the truncation threshold $K$, which establishes Eq. \ref{eq:k_bound}.
    Last, we establish what constraints on $\epsilon_{\text{trunc}}$ and $\epsilon_{\text{disc}}$ ensure that 
    $\|v(t) - u(t) \|\leq\epsilon_v$.
    The discretized LCHS formula has the form $\sum_j c_j e^{-it(k_j L +H)}$. For simpler notation, here we write $U_j = e^{-it(k_j L +H)}$. 
    The norm of the difference is then upper bounded as,
    \begin{align}
        \|v(t) - u(t) \| = & \left | \left |  \sum_j c_j U_j u_0 - \int_\mathbb{R} g(k) U(k) \dd k \; u_0  \right | \right | \\
        \leq & \left | \left | \sum_j c_j U_j - \int_\mathbb{R} g(k) U(k) \dd k  \right | \right | \|u_0\| \\
        \leq & \left | \left | \sum_j c_j U_j - \int_{-K}^{K} g(k) U(k) \dd k \right | \right | \|u_0\|  \\
        & + \left | \left | \int_{-K}^K g(k) U(k) \dd k - \int_\mathbb{R} g(k) U(k) \dd k  \right | \right | \|u_0\| \\
        \leq &   \|u_0\| (\epsilon_{\mathrm{disc}} + \epsilon_{\mathrm{trunc}} ).
    \end{align}
Therefore, by ensuring that $\|u_0\| (\epsilon_{\mathrm{disc}} + \epsilon_{\mathrm{trunc}} )\leq\epsilon_v$, we have that $\|v(t) - u(t) \|\leq\epsilon_v$ is satisfied.
\end{proof}

Note that this is bound is also tightened by the result of using \textit{Lambert-W} functions to more tightly bound $K$, given that $Q$ is also a function of $K$. This value $M$ is important as it gives the space complexity required to express the sum as an LCU circuit, which will be used to bound the number of qubits later. The number of terms in the sum also to some extent controls the coefficient 1-norm for the LCHS block encoding, or in other words, the sum of all quadrature coefficients.

\section{Construction and Analysis of the LCHS Implementation}
\label{sec:lchs_implementation}

Equipped with bounds on the parameters of the numerical integral, we can now proceed by constructing a quantum implementation of the LCHS formula:
\begin{equation} \label{eq:propagator}
    \sum_{m=-K/h}^{K/h-1} \sum_{q=1}^Q c_{q,m} U(t,k_{q,m}) = \sum_{j=1}^M c_j U(t, k_j)=\sum_{j=1}^M c_je^{-it(k_j L+H)},
\end{equation}
and as a reminder, $L=(A+A^{\dagger})/2$ and $H=(A-A^{\dagger})/2i$.
Our goals are 1) to construct a block encoding for $\sum_{m=-K/h}^{K/h-1} \sum_{q=1}^Q c_{q,m} U(t,k_{q,m})$ made from block encodings of $A$ and 2) to
establish a constant factor upper bound on the number of queries to the block encoding of $A$ required to ensure a target error rate. 
Compared to the more general case considered in \cite{an2023linear}, since we assume that $A$ is time independent, we can use the block encoding model 
described in Eq. \ref{eq:block_encoding}.

\subsection{Construction of Block Encoding}
Here we summarize our approach to constructing a block encoding of the approximate propagator in Equation \ref{eq:propagator}.
We will save the costing and error accounting for Sections \ref{subsec:select} and \ref{subsec:soln_be}.
The LCHS operator $\mathcal{L}$ is, by design, a linear combination of unitaries.
Therefore, we will use the standard SELECT-PREP approach to block encoding \cite{childs2012hamiltonian},
\begin{align}
    \mathcal{L} = \alpha(\textup{PREP}_l\otimes \mathbb{I})\textup{SELECT}(\textup{PREP}_r^{\dagger}\otimes \mathbb{I}),
\end{align}
where $\alpha$ is the subnormalization and, in our case
\begin{align}
    \textup{SELECT} \equiv \mathcal{Q} = \sum_j\ketbra{j}{j}\otimes e^{-it(k_j L+H)}
\end{align}
and, as will be described below, $\textup{PREP}_l$ and $\textup{PREP}_r$ are used to encode the coefficient vector $c$.

First, we will overview our approach for constructing the SELECT operator $\mathcal{Q}$.
The standard construction might be to construct a controlled unitary for each $e^{-it(k_j L+H)}$ and use a technique such as unary iteration \cite{babbush2018encoding}.
Instead, we develop a more efficient construction by exploiting structure shared among the target unitaries $e^{-it(k_j L+H)}$.
We construct the entire SELECT operator as a single Hamiltonian simulation operator with respect to an effective Hamiltonian.
Let $\mathcal{H}_j = k_jL +H $ be the Hamiltonian in each simulation, then define the effective Hamiltonian operation as:
\begin{equation}\label{eq:sel1}
    \mathcal{S} = \sum_{j=1}^M \ketbra{j}{j} \otimes \mathcal{H}_j
\end{equation}
Note that this effective Hamiltonian can be expressed linearly in $A$, which will be a key property in implementing a block encoding of $\mathcal{S}$ in terms of block encodings of $A$:
\begin{align}
    \mathcal{S} = \sum_{j=1}^M \ketbra{j}{j} \otimes \left(k_j\frac{A+A^{\dagger}}{2}+\frac{A-A^{\dagger}}{2i}\right)
\end{align}
Applying Hamiltonian simulation to this effective Hamiltonian, and assuming that $\sum_j \ketbra{j}{j}=\mathbb{I}$ yields the desired SELECT operator
\begin{align}
    \mathcal{Q} &\equiv \exp{-it \mathcal{S}}\\
    &=\exp{-it\sum_{j=1}^M \ketbra{j}{j} \otimes \mathcal{H}_j}\\
     &= \sum_{j=1}^M \ketbra{j}{j} \otimes e^{-it\mathcal{H}_j}.
\end{align}
The costs and error associated with this step will be established in Lemma \ref{lem:lcu_select}.
Another factor that makes this approach more efficient is the fact that $\mathcal{S}$ can be constructed from just two calls to the block encoding of $A$ (see Lemma \ref{lem:SH}), rather than invoking $A$ for each of the $M$ terms.

Having established a construction for the SELECT operator $\mathcal{Q}$ using the the effective Hamiltonian $\mathcal{S}$, next we describe the PREP operators. 
To formalize these, we use the concept of a state preparation pair.
We follow Ref. \cite{gilyen2019quantum} and provide the following definition:
\begin{definition}[State-Preparation-Pair] \label{def:state_prep_pair}
    Let $z\in \mathbb{C}^M$ and $\gamma \geq \|z\|_1$. The unitaries $(P_{l}, P_{r})$ are called a $(\gamma, m, \epsilon)$ state-preparation-pair for the vector $z$ if $P_{l} \ket{0}^{m} = \sum_{j=1}^M {x_j}\ket{j}$, and $P_{r} \ket{0}^{m} = \sum_{j=1}^M y_j\ket{j}$ such that $\epsilon_{c} \geq \sum_{j=1}^M |\gamma x_j^*y_j - z_j|$. 
\end{definition}

We will assume that the unitaries $(O_{c,l}, O_{c,r})$ form an $(\alpha_c, m_c=\lceil\log_2(M)\rceil, \epsilon_c)$ state preparation pair for the vector $c$. With this, the following circuit can be verified to yield a block encoding of the LCHS propagator 
\begin{align}
    \mathcal{U}=(O_{c,l}^\dagger \otimes \mathbb{I}) \mathcal{Q} (O_{c, r} \otimes \mathbb{I}).
\end{align}

Finally, we show how an approximation to the final solution state is obtained using these operations.
Applying $\mathcal{U}$ to an encoding of the initial vector $O_u \ket{0^{m_u}} = \ket{u_0} = u_0/\|u_0\|$ and ancilla qubits in $\ket{0^{m_c}}$, and conditioning on the block encoding ancilla qubits being in the state $\ket{0^{m_c}}$, yields the approximate solution vector $v(t)\approx u(t)$:
\begin{align}
        (\bra{0^{m_c}}\otimes\mathbb{I})(O_{c,l}^\dagger \otimes \mathbb{I}) \mathcal{Q} (O_{c, r} \otimes \mathbb{I}) (\mathbb{I} \otimes O_u ) (\ket{0^{m_c}}\otimes \ket{0}^{m_u} )&= \sum_{j=1}^M\bra{0^{m_c}}O_{c,l}^\dagger\ketbra{j}{j}O_{c, r}\ket{0^{m_c}} e^{-it\mathcal{H}_j}  (\mathbb{I} \otimes O_u\ket{0}^{m_u} )\\
        &= \frac{1}{\alpha_c\|u_0\|}\sum_{j=1}^M c_j e^{-it\mathcal{H}_j}u_0\\
        &= \frac{1}{\alpha_c\|u_0\|}v(t).
\end{align}
The costs and error incurred in implementing the block encoding of $v(t)$, and hence of $u(t)$, will be established in Theorem \ref{lem:lchs_be}.
It may be desirable to prepare a state proportional to the solution vector with success probability arbitrarily close to one, rather than having a conditional preparation.
Such a preparation can be achieved using amplitude amplification, and so we will refer to this as an {amplified} state preparation as opposed to the conditional state preparation.
An amplified state preparation can help reduce the overall runtime costs of estimating properties of the solution state \cite{penuel2024feasibility}.
 
Alternatively, one can view the construction of the amplified solution state as a means of reducing the sub-normalization of the block encoding.
Loosely, this enables a stronger statistical signal when estimating properties of this vector.
Corollary \ref{thm:amplified_prep} describes how robust fixed-point oblivious amplitude amplification can be used to achieve this.

Critically, the methods used to construct the imperfect block encoding of $u(t)$ enable systematic reduction of all error sources, assuming the error in the block encoding of $A$ can be controlled.
The essential content of the main results in this section is the accounting of these controllable errors.
A road map of the overall construction is as follows:
\begin{enumerate}
    \item (Lemma \ref{lem:SH}) Construct block encoding of $    \mathcal{S} = \sum_{j=1}^M \ketbra{j}{j} \otimes (k_jL+H)$ from block encoding of $A$.
    \item (Lemma \ref{lem:lcu_select}) Construct LCU SELECT operator $\mathcal{Q} = \sum_{j=1}^M \ketbra{j}{j} \otimes e^{-it\mathcal{H}_j}$ from block encoding of $    \mathcal{S}$.
    \item (Lemma \ref{lem:lchs_be}) Construct the LCHS unitary $\mathcal{U}$ by linearly combining $\exp(-it\mathcal{H}_j)$ with coefficients $c_j$ using $\mathcal{Q}=\sum_{j=1}^M \ketbra{j}{j} \otimes e^{-it\mathcal{H}_j}$ as SELECT and $O_{c,r}$ and $O_{c,l}$ as PREP unitaries (or a state preparation pair).
    \item (Theorem \ref{thm:amplified_prep}) Construct the amplified solution state preparation using amplitude amplification with $\mathcal{U}$ and the block encoding of $u(0)$.
\end{enumerate}

\subsection{SELECT circuit construction}
\label{subsec:select}

The following lemma establishes a construction for the block encoding of the effective Hamiltonian $\mathcal{S} = \sum_{j=1}^M \ketbra{j}{j} \otimes \mathcal{H}_j$. It is then used to generate the controlled time evolution operator $\mathcal{Q} =\sum_{j=1}^M \ketbra{j}{j} \otimes e^{-it\mathcal{H}_j}$. 
\begin{lemma}[$\mathcal{S}$ block encoding]\label{lem:SH}
Assuming access to $U_A$, an $(\alpha_A, m_A, \epsilon_A)$-block encoding of $A$, and $\epsilon_R$-accurate multi-controlled $R_Z(\phi)$ gates,
there exists a 

$(\sqrt{1+k_{\textup{max}}^2}\alpha_A, m_A+2, \epsilon)$-block encoding of $\mathcal{S}=\sum_{j=1}^M \ketbra{j}{j} \otimes (k_jL +H)$, where $k_{\textup{max}}=\max_j |k_j| \leq K$, as long as the error parameters satisfy
\begin{align}
\epsilon_{A}\sqrt{1+k_{\textup{max}}^2}+2M\sqrt{1+k_{\textup{max}}^2}\epsilon_R\alpha_{A}\leq\epsilon.
\end{align}
The construction uses one call each to $U_A$ and to $U_A^{\dagger}$, and $2M$ multi-controlled rotation calls, where $M$ is the total number of points in the LCHS integral bounded in Lemma \ref{lem:class_error}.
\end{lemma}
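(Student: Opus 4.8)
The plan is to build the block encoding of $\mathcal{S}$ by expressing $\mathcal{S}$ as a linear combination of two operators, each of which can be block-encoded directly from $U_A$ and $U_A^\dagger$, and then combining them with a controlled "rotation"-based PREP that injects the coefficients $k_j$ and $1$. Concretely, I would first write $k_j L + H = \tfrac{k_j}{2}(A+A^\dagger) + \tfrac{1}{2i}(A - A^\dagger)$, so that $\mathcal{S} = \sum_j \ketbra{j}{j}\otimes k_j L + \mathbb{I}\otimes H$. The key observation is that $\sum_j \ketbra{j}{j}\otimes k_j L$ and $\mathbb{I}\otimes H$ are both obtainable from block encodings of $A$ and $A^\dagger$: using one ancilla qubit to form the Hermitian combinations $L = (A+A^\dagger)/2$ and $H = (A-A^\dagger)/2i$ via the standard trick of applying $U_A$ controlled on $\ket{0}$ and $U_A^\dagger$ controlled on $\ket{1}$ with appropriate single-qubit pre/post rotations, and a second ancilla qubit (plus the $\ket{j}$ register) to form the weighted sum of the $L$-block (with weight $k_j$) and the $H$-block (with weight $1$). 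This is where the $m_A + 2$ ancilla count comes from, and why the subnormalization becomes $\sqrt{1 + k_{\max}^2}\,\alpha_A$: the $2$-norm of the coefficient vector $(k_j, 1)$ on the selection qubit is $\sqrt{1+k_j^2}\le\sqrt{1+k_{\max}^2}$, and a uniform subnormalization must be used across all $j$.

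Next I would carry out the error accounting. There are two error contributions. The first is the propagation of the block-encoding error $\epsilon_A$ of $U_A$ through the LCU combination: since the coefficient on the "$A$-block" in the worst case carries a factor at most $\sqrt{1+k_{\max}^2}$ (from the $k_j L$ term combined with the $H$ term), the induced error in $\mathcal{S}$ is at most $\epsilon_A \sqrt{1+k_{\max}^2}$. The second is the imperfect implementation of the multi-controlled $R_Z(\phi)$ gates used to load the $j$-dependent angles $\phi_j$ that encode $k_j$ on the selection ancilla: each of the $2M$ such gates (two per value of $j$, for the left and right PREP) contributes error at most $\epsilon_R$, and each such error is then amplified by the subnormalization $\sqrt{1+k_{\max}^2}\,\alpha_A$ when passing back to the operator-norm error on $\mathcal{S}$. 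Summing gives the stated condition $\epsilon_A\sqrt{1+k_{\max}^2} + 2M\sqrt{1+k_{\max}^2}\,\epsilon_R \alpha_A \le \epsilon$. The query count follows by inspection: one use each of $U_A$ and $U_A^\dagger$ inside the Hermitian-combination gadget, and $2M$ multi-controlled rotations for the two state-preparation halves.

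The main technical obstacle I anticipate is making the construction of the $j$-controlled PREP precise — i.e., exhibiting a concrete unitary on the $\ket{j}$ register plus two ancilla qubits that maps $\ket{j}\ket{0}\ket{0}$ to a state whose amplitudes on the relevant basis states are proportional to $k_j$ (for the $L$-block) and $1$ (for the $H$-block), uniformly subnormalized by $\sqrt{1+k_{\max}^2}$, using only multi-controlled single-qubit $R_Z$-type rotations, and then verifying that $(\bra{0}^{\otimes 2}\otimes\mathbb{I})\,(\text{PREP}_l^\dagger\otimes\mathbb{I})\,\widetilde{U}_{\text{inner}}\,(\text{PREP}_r\otimes\mathbb{I})\,(\ket{0}^{\otimes 2}\otimes\mathbb{I})$ reproduces $\mathcal{S}/(\sqrt{1+k_{\max}^2}\alpha_A)$ up to the claimed error. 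Once the gadget is pinned down, the error bookkeeping is routine triangle-inequality and submultiplicativity of the spectral norm, together with the standard fact that replacing a gate by an $\epsilon_R$-close one perturbs a block encoding's encoded operator by at most (subnormalization)$\times\epsilon_R$ per gate. I would also double-check the edge cases $k_j = 0$ and the sign of $k_j$, since the $R_Z$ angles must be chosen to reproduce both magnitude and sign of the coefficient $k_j$ correctly.
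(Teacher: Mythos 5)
There is a genuine gap, and it sits exactly at the step you flag as your ``main technical obstacle'': the subnormalization. Your decomposition $\mathcal{S}=\sum_j \ketbra{j}{j}\otimes k_jL+\mathbb{I}\otimes H$, with an $L$-gadget weighted by $k_j$ and an $H$-gadget weighted by $1$, is combined by an LCU/state-preparation-pair step, and the subnormalization of such a combination is governed by the coefficient \emph{1-norm}, not the 2-norm: for any state-preparation pair $(P_l,P_r)$ with $P_l\ket{0}=\sum_j x_j\ket{j}$, $P_r\ket{0}=\sum_j y_j\ket{j}$ and encoded weights $\gamma x_j^* y_j$, Cauchy--Schwarz forces $\gamma\geq \sum_j|\gamma x_j^*y_j|$. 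With real weights $(k_j,1)$ on two block encodings that each carry subnormalization $\alpha_A$, this yields $(1+|k_j|)\alpha_A$, i.e.\ uniformly $(1+k_{\max})\alpha_A$, which is strictly larger than the claimed $\sqrt{1+k_{\max}^2}\,\alpha_A$ (e.g.\ $2$ vs.\ $\sqrt{2}$ at $k_j=1$). So your stated justification --- ``the 2-norm of the coefficient vector $(k_j,1)$ on the selection qubit'' --- conflates amplitude normalization with LCU subnormalization, and as described your construction proves only a weaker lemma; the loss matters downstream because $\alpha_{\mathcal{S}}$ multiplies $t$ in the qubitization query count of Lemma \ref{lem:lcu_select} and Theorem \ref{thm:amplified_prep}.

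The paper's proof closes exactly this gap by not treating $L$ and $H$ as separate block-encoded operators. It regroups $\mathcal{S}=\tfrac12\left((-i\mathbb{I}+D)\otimes A+(i\mathbb{I}+D)\otimes A^\dagger\right)$ with $D=\sum_j k_j\ketbra{j}{j}$, so that the coefficients multiplying $A$ and $A^\dagger$ are \emph{complex} numbers of equal modulus $\sqrt{1+k_j^2}/2$; their 1-norm then coincides with the 2-norm of $(k_j,1)$, which is where $\sqrt{1+k_{\max}^2}$ legitimately comes from. Concretely, $-i\mathbb{I}+D$ is diagonal, so it is block encoded with $2M$ multi-controlled $R_Z(\phi)$ gates (Prop.~12 of the diagonal block-encoding reference), giving a $(\sqrt{1+k_{\max}^2},1,2M\sqrt{1+k_{\max}^2}\epsilon_R)$ encoding; tensoring with $U_A$ and then averaging with the Hermitian conjugate via the trivial $(H,H)$ state-preparation pair for $[\tfrac12,\tfrac12]$ keeps the subnormalization at $\sqrt{1+k_{\max}^2}\,\alpha_A$ and uses one call each to $U_A$ and $U_A^\dagger$, with the error budget $\epsilon_A\sqrt{1+k_{\max}^2}+2M\sqrt{1+k_{\max}^2}\epsilon_R\alpha_A$ exactly as you anticipated. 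Your error bookkeeping and ancilla count are of the right form, but to obtain the stated constants you must replace the $\{k_jL,\,H\}$ split by this $\{A,A^\dagger\}$ split with $j$-dependent phases (which is precisely what the multi-controlled $R_Z$ gates implement); note also that with separate $L$- and $H$-gadgets you would need two calls to each of $U_A$ and $U_A^\dagger$ unless you perform this merge.
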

\begin{proof}

First, we establish a decomposition of the operator $\mathcal{S}$ into the operators $A$ and $D=\sum_j k_j\ketbra{j}$:
\begin{align}
    \mathcal{S} &= \sum_{j=1}^M \ketbra{j}{j} \otimes (k_jL +H)\\
    &= \mathbb{I}\otimes H + \left(\sum_{j=1}^M k_j\ketbra{j}{j} \right)\otimes L\\
    &= \mathbb{I}\otimes \frac{(A-A^\dagger)}{2i} + \left(\sum_{j=1}^M k_j\ketbra{j}{j} \right)\otimes \frac{A+A^\dagger}{2} \\
    &= \mathbb{I}\otimes \frac{(A-A^\dagger)}{2i} + D\otimes \frac{A+A^\dagger}{2} \\    
    &= \frac{-i\mathbb{I}+D}{2}\otimes A + \frac{i\mathbb{I}+D}{2}\otimes A^\dagger \\
    &= \frac{1}{2} \left ( (-i\mathbb{I}+D) \otimes A + (i\mathbb{I}+D) \otimes A^\dagger \right ).
    \label{eq:SH_block_encoding}
\end{align}

To block encode $\mathcal{S}$, we first notice that $-i\mathbb{I}+D$ is diagonal in the computational basis.
This lets us use the diagonal block encoding method of Prop. 12 in \cite{takahira2021quantum}.
With $M$, the number of diagonal entries, this block encoding is constructed from $2M$ multi-controlled $R_Z(\phi)$ gates, each with error $\epsilon_R$.
The block encoding parameters in this method depend on the largest magnitude of the diagonal entries.
In our case, defining $k_{\textup{max}}\equiv\max_j |k_j|$, the largest magnitude entry is $\max_j|-i+k_j|=\sqrt{1+k^2_{\textup{max}}}$.
Prop. 12 in \cite{takahira2021quantum} then constructs a
$(\sqrt{1+k^2_{\textup{max}}}, 1, 2M\sqrt{1+k^2_{\textup{max}}}\epsilon_R)$ block encoding of $-i\mathbb{I}+D$. Note that while \cite{takahira2021quantum} 
assumes perfect gates and arrives at a zero-error block encoding, we have accounted for the error in the $2M$ multi-controlled $R_Z(\phi)$ gates using a triangle inequality upper bound.

Next, we form a block encoding of the tensor product
\begin{align}
(-i\mathbb{I}+D)\otimes A.
\end{align}
Lemma 30 of \cite{gilyen2019quantum} provides such an implementation, yielding a $(\alpha_{A}\sqrt{1+k_{\textup{max}}^2}, m_{A}+1, \epsilon_{A}\sqrt{1+k_{\textup{max}}^2}+2M\sqrt{1+k^2_{\textup{max}}}\epsilon_R\alpha_{A})$ block encoding of $(-i\mathbb{I}+D)\otimes A$.
Finally, we use the above block encoding and its Hermitian conjugate to form a block encoding of their sum.
Noting that, $(H, H)$ forms a $(1,1,0)$ state preparation pair for the vector $[\frac{1}{2}, \frac{1}{2}]$, Proposition 5 of \cite{takahira2021quantum} gives a 
\begin{align}
(\alpha_{A}\sqrt{1+k_{\textup{max}}^2},m_{A}+2, \epsilon_{A}\sqrt{1+k_{\textup{max}}^2}+2M\sqrt{1+k^2_{\textup{max}}}\epsilon_R\alpha_{A})
\end{align}
block encoding of 
\begin{align}
    \mathcal{S}=\frac{1}{2} \left ( (-i\mathbb{I}+D) \otimes A + (i\mathbb{I}+D) \otimes A^\dagger \right ).
\end{align}
This yields the block encoding with error $\epsilon$ as long as
\begin{align}
\epsilon_{A}\sqrt{1+k_{\textup{max}}^2}+2M\sqrt{1+k^2_{\textup{max}}}\epsilon_R\alpha_{A}\leq\epsilon.
\end{align}
\end{proof}

The above result yields all necessary block encoding parameters to consider a time evolution of the select operator. Now we use the above construction for the block encoding of $\mathcal{S}$ to establish a block encoding for $\mathcal{Q}(t)=e^{-it\mathcal{S}}$.

\begin{lemma}[LCU SELECT Operator]
\label{lem:lcu_select}
There exists a $(1,m_A+4,\epsilon_Q)$-block encoding of the $\textup{SELECT}$ operator 
$\mathcal{Q}(t) = \sum_{j=1}^M \ketbra{j}{j} \otimes e^{-it(k_j L + H)}=e^{-it\mathcal{S}}$
via qubitization using the following subroutines with the associated query counts:
\begin{itemize}
\item controlled implementations of $U_A$, an $(\alpha_A, m_A, \epsilon_A)$-block encoding of $A$, with 
\begin{align} \label{eq:qubitize_queries}
\ceil{e \sqrt{1+K^2}\alpha_A t + 2\ln \left (\frac{2\eta}{\epsilon_{\mathrm{exp}}}\right ) }
\end{align}
queries to $c$-$U_A$ and six queries to $c$-$c$-$U_A$, with $\eta= 4(\sqrt{2\pi}e^{1/13})^{-1} \approx 1.47762.$
\item $\epsilon_R$-accurate multi-controlled $R_Z(\phi)$ gates, with query count of $M$ times the number of calls to $c$-$U_A$ as given above
\end{itemize}
as long as the error parameters $\epsilon_{\mathrm{exp}}$, $\epsilon_A$, and $\epsilon_R$ satisfy
\begin{align}
\label{eq:lcu_select_error_budget}
\epsilon_{\mathrm{exp}}+\sqrt{1+K^2}t\epsilon_A+ 2Mt\sqrt{1+K^2}\alpha_A\epsilon_R\leq \epsilon_Q.    
\end{align}
\begin{proof}
Lemma \ref{lem:flex_ham_sim} establishes that one can implement a $(1, m+2, \epsilon)$ block encoding of $e^{-iHt}$ using $\ceil{\frac{e}{2}\alpha_\mathcal{S} t + \ln \left (\frac{2c}{\epsilon_{\mathrm{exp}}}\right ) }$ controlled queries to $U_H$ or $U_H^\dagger$, a $(\alpha, m, \epsilon_H)$ block encoding of a hermitian operator $H$, as long as $\epsilon_{\mathrm{exp}}+t\epsilon_H\leq\epsilon$, where $\epsilon_H$ is the block encoding error of the Hamiltonian. Note that we are neglecting the additional 2 queries in Lemma \ref{lem:flex_ham_sim} as they are a trivial addition to the complexity. Our proof follows by modifying this result for the LCHS select operator $\mathcal{S}$.

Let $U_{\mathcal{S}}$ be the block encoding of the Hamiltonian $\mathcal{S}$ in this case. Note that it meets the necessary criteria as it is Hermitian. Then we can use the referenced theorems to encode a block encoding of the following:
\begin{align}
    e^{-it\mathcal{S}} &= e^{-it\sum_{j=1}^M \ketbra{j}{j} \otimes k_jL +H} \\
    &= \sum_{j=1}^M \ketbra{j}{j} \otimes e^{-it(k_jL +H)} \equiv \mathcal{Q}(t).
\end{align}
Applying Lemma \ref{lem:flex_ham_sim}, an $\epsilon$-accurate block encoding of $e^{-it\mathcal{S}}$ can be implemented using no more than $\ceil{\frac{e}{2}\alpha_\mathcal{S} t + \ln \left (\frac{2\eta}{\epsilon_{\mathrm{exp}}}\right ) }$ controlled queries to $U_{\mathcal{S}}$ and $U_{\mathcal{S}}^\dagger$ (each of which requires 2 queries to $U_A$), as long as we ensure that
$\epsilon_{\mathrm{exp}}+t\epsilon_{\mathcal{S}}\leq\epsilon$. We can see this by decomposing the error using a series of triangle inequalities:
\begin{align}
    \| \mathcal{Q}(t, A) - \widetilde{\mathcal{Q}}(t, \widetilde{A}) \| & \leq \|\mathcal{Q}(t, A) - {\mathcal{Q}}(t, \widetilde{A})\| + \|\mathcal{Q}(t, \widetilde{A}) - \widetilde{\mathcal{Q}}(t, \widetilde{A}) \| \\
    & = \|e^{-it\cal{S}} - e^{-it\widetilde{\cal{S}}} \| + \epsilon_{\mathrm{exp}} \\
    & \leq t \|\cal{S} - \widetilde{\cal{S}}\| +\epsilon_{\mathrm{exp}} \\
    &= t\epsilon_{A}\sqrt{1+k_{\textup{max}}^2}+2Mt\sqrt{1+k^2_{\textup{max}}}\epsilon_R\alpha_{A} + \epsilon_{\mathrm{exp}}\\
    &\leq t\epsilon_{A}\sqrt{1+K^2}+2Mt\sqrt{1+K^2}\epsilon_R\alpha_{A} + \epsilon_{\mathrm{exp}},
\end{align} 
where we use $\widetilde{\cal{Q}}$ to mean the approximate unitary produced by qubitization and $\widetilde{\cal{S}}$ to mean the imperfect block encoding of $\cal{S}$, as well as leverage that $k_{\textup{max}}\leq K$.
In the third line we use the inequality $\| e^{-iHt} - e^{-i\widetilde{H}t}\| \leq t\|H - \widetilde{H}\|$ from Ref. \cite{chakraborty2018power}, and then recognize that this is the block encoding error of $\cal{S}$ in the fourth line. Finally, we use 
Lemma \ref{lem:SH}, which shows that each query to $U_{\mathcal{S}}$ uses two queries to $c$-$U_A$ and $M$ times that many queries to $\epsilon_R$-accurate multi-controlled $R_Z(\phi)$ gates, and that the parameters of the $\mathcal{S}$ block encoding are, 
\begin{align}
\alpha_{\mathcal{S}}&=\sqrt{1+k_{\textup{max}}^2}\alpha_A\\
m_{\mathcal{S}}&=m_A+2\\ 
\epsilon_{\mathcal{S}}&=\sqrt{1+k_{\textup{max}}^2}\epsilon_A+ 2M\sqrt{1+k_{\textup{max}}^2}\alpha_A\epsilon_R.
\end{align}
From this we have that robust Hamiltonian simulation can be used to implement a
\begin{align}
    (1,m_A+4,\epsilon)
\end{align}
block encoding of $e^{-it\mathcal{S}}$
using the query counts in the Lemma statement (using that $k_{\textup{max}}\leq K$), as long as the error parameters satisfy
the inequality in Equation \ref{eq:lcu_select_error_budget}.
\end{proof}
\end{lemma}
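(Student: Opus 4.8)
The plan is to reduce the construction of the SELECT operator to a single instance of robust Hamiltonian simulation applied to the effective Hamiltonian $\mathcal{S}$ of Eq.~\ref{eq:sel1}. First I would observe that $\mathcal{Q}(t) = e^{-it\mathcal{S}}$ holds exactly: since $\mathcal{S} = \sum_j \ketbra{j}{j}\otimes(k_jL+H)$ is block diagonal with respect to the $\ket{j}$ register, exponentiation acts blockwise as $e^{-it(k_jL+H)}$, which is precisely the desired controlled time evolution. Hence it suffices to block encode $e^{-it\mathcal{S}}$ given a block encoding of $\mathcal{S}$, and the latter is furnished by Lemma~\ref{lem:SH}, which yields a $(\sqrt{1+k_{\textup{max}}^2}\,\alpha_A,\, m_A+2,\, \epsilon_{\mathcal{S}})$-block encoding using one call each to $U_A$ and $U_A^\dagger$ together with $2M$ $\epsilon_R$-accurate multi-controlled $R_Z(\phi)$ gates, where $\epsilon_{\mathcal{S}} = \sqrt{1+k_{\textup{max}}^2}\,\epsilon_A + 2M\sqrt{1+k_{\textup{max}}^2}\,\alpha_A\epsilon_R$ and $k_{\textup{max}}=\max_j|k_j|$.

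Next I would invoke the robust Hamiltonian simulation primitive (Lemma~\ref{lem:flex_ham_sim}): given an $(\alpha,m,\epsilon_H)$-block encoding of a Hermitian operator, qubitization produces a $(1,m+2,\epsilon)$-block encoding of its time evolution using $\ceil{\tfrac{e}{2}\alpha t + \ln(2\eta/\epsilon_{\mathrm{exp}})}$ controlled queries to the input block encoding (plus a small constant overhead), provided $\epsilon_{\mathrm{exp}} + t\epsilon_H \le \epsilon$, with the constant $\eta$ inherited from that primitive. Applying this with $\alpha=\alpha_{\mathcal{S}}=\sqrt{1+k_{\textup{max}}^2}\,\alpha_A$ and $m=m_A+2$ gives a $(1,\,m_A+4,\,\epsilon_Q)$-block encoding of $\mathcal{Q}(t)$. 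Because each query to $U_{\mathcal{S}}$ unpacks into two queries to controlled-$U_A$ and $M$ times as many multi-controlled rotations, the $c$-$U_A$ query count is $2\ceil{\tfrac{e}{2}\sqrt{1+k_{\textup{max}}^2}\,\alpha_A t + \ln(2\eta/\epsilon_{\mathrm{exp}})} = \ceil{e\sqrt{1+k_{\textup{max}}^2}\,\alpha_A t + 2\ln(2\eta/\epsilon_{\mathrm{exp}})}$, and the rotation-gate count is $M$ times this. I would finish by using $k_{\textup{max}}\le K$ from Lemma~\ref{lem:class_error} to replace $k_{\textup{max}}$ by $K$ in all the bounds, and note that the doubly-controlled $U_A$ calls arise because the qubitization walk needs a controlled $U_{\mathcal{S}}$, which in turn contains controlled-$U_A$; these amount to a fixed constant (six) and do not affect the leading term.

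The remaining work, and the step I expect to be the main obstacle, is the error bookkeeping. I would decompose $\norm{\mathcal{Q}(t,A) - \widetilde{\mathcal{Q}}(t,\widetilde{A})}$ by a triangle inequality into (i) the exponentiation error $\epsilon_{\mathrm{exp}}$ incurred by qubitization acting on the \emph{exact} $\mathcal{S}$, and (ii) the propagated block-encoding error $\norm{e^{-it\mathcal{S}} - e^{-it\widetilde{\mathcal{S}}}}$, which I would control with the Lipschitz-type bound $\norm{e^{-iHt} - e^{-i\widetilde{H}t}} \le t\norm{H-\widetilde{H}}$ so that it is at most $t\epsilon_{\mathcal{S}} = t\sqrt{1+k_{\textup{max}}^2}\,\epsilon_A + 2Mt\sqrt{1+k_{\textup{max}}^2}\,\alpha_A\epsilon_R$. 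Adding (i) and (ii) and substituting $k_{\textup{max}}\le K$ reproduces exactly the claimed budget $\epsilon_{\mathrm{exp}} + \sqrt{1+K^2}\,t\epsilon_A + 2Mt\sqrt{1+K^2}\,\alpha_A\epsilon_R \le \epsilon_Q$. The one genuine subtlety is keeping the two distinct small-error channels — the block-encoding error $\epsilon_A$ of $A$ and the synthesis error $\epsilon_R$ of the multi-controlled rotations — tracked separately through both the $\mathcal{S}$ construction of Lemma~\ref{lem:SH} and the simulation step, and propagating the query multipliers ($2$ for $U_A$, $2M$ for rotations per call to $U_{\mathcal{S}}$) consistently so that nothing is double counted or dropped.
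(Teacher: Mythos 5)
Your proposal matches the paper's own argument: block encode $\mathcal{S}$ via Lemma \ref{lem:SH}, apply the robust qubitization result of Lemma \ref{lem:flex_ham_sim} to get a $(1,m_A+4,\epsilon_Q)$-block encoding of $e^{-it\mathcal{S}}$, unpack each $U_{\mathcal{S}}$ query into two $c$-$U_A$ calls plus $2M$ multi-controlled rotations, and control the error with the same triangle inequality and the bound $\|e^{-iHt}-e^{-i\widetilde{H}t}\|\leq t\|H-\widetilde{H}\|$ before substituting $k_{\textup{max}}\leq K$. The proof is correct and takes essentially the same route as the paper.
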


\subsection{Solution vector block encodings}
\label{subsec:soln_be}
Building from the previous lemmas, we establish a series of results that provide the costs and error budget requirements for block encodings of the solution vector. 
To start, Lemma \ref{lem:lchs_be} analyzes the task of conditionally preparing a state proportional to the solution vector. 
That is, given the output state, the solution vector can be prepared by measuring a register and, with a \textit{pre-fixed} failure-probability, post-selecting on a certain outcome.
While this does yield a block encoding of the solution vector, it may be preferable to implement the state preparation with a controllable failure probability, enabling the user to pay an implementation cost to set the failure rate to be arbitrarily low.
This can be achieved using amplitude amplification.
We are also motivated to consider the amplitude amplification costs in order to establish a fair comparison between the costs of the LCHS algorithm and the RLS algorithm; we assess the cost of implementing block encodings with with the same probability of success (or, nearly equivalently, with the same subnormalization).
Finally, Theorem \ref{thm:amplified_prep}, Corollary \ref{coro:error_budgeted}, and Theorem \ref{thm:error} analyze the costs of this amplified state preparation in several different settings.

\begin{lemma}
    [LCHS Block Encoding] 
\label{lem:lchs_be}
Consider the task of preparing a state on the quantum computer that is proportional to $u(t) = e^{At}u_0$.
Let the LCHS propagator of Eq. \ref{eq:LCHSdisc} be labeled as $\mathcal{L}\equiv \sum_{j=1}^M c_je^{-it(k_j L + H)}$ and let $c$ be the vector of coefficients $c_j$.
We can construct a $(\|c\|_1\|u_0\|,[m_{c}+m_Q,m_{c}+m_Q+m_N], \epsilon)$-block encoding of $u(t) \in\mathbb{C}^{2^n}$ using a single call to each of
\begin{itemize}
\item $U_Q$, an $(\alpha_Q, m_Q, \epsilon_Q)$ block encoding of $\mathcal{Q}(t) = \sum_{j=1}^M \ketbra{j}{j} \otimes e^{-it(k_j L + H)}$, 
\item $(O_{c,l}, O_{c,r})$, an $m_{c}$-qubit $(\epsilon_{c})$ state preparation pair for the coefficient vector $c$ (see Definition \ref{def:state_prep_pair}), 
\item $U_0$, a $(\|u_0\|,[0,m_N],\epsilon_0)$ block encoding of $u_0$,
\end{itemize}
as long as the error parameters $\epsilon_Q$, $\epsilon_{c}$, $\epsilon_0$, and $\epsilon_v$ satisfy 
\begin{align}
\epsilon_v + \|u_0\|(\epsilon_{c}+\|c\|_1\epsilon_Q)+\|c\|_1\epsilon_0\leq \epsilon,
\end{align}
where $\|u(t)-\mathcal{L}u_0\|\leq \epsilon_v$.
The block encoding yields a conditional state preparation that succeeds with probability lower bounded by
    \begin{align}
    \textbf{\textup{Pr}}(\textup{success}) &\geq\frac{(\|u(t)\|-\epsilon')^2}{\|u_0\|^2 \|c\|_1^2},
\end{align}
where $\epsilon'=\epsilon_v + \|u_0\|(\epsilon_{c}+\|c\|_1\epsilon_Q)+\|c\|_1\epsilon_0$.
\end{lemma}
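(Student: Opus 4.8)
The plan is to build the block encoding of $u(t)$ by composing the three given sub-block-encodings in the SELECT--PREP pattern and then to track how each sub-encoding's error propagates to the final state via triangle inequalities. First I would establish the \emph{ideal} identity: with perfect operations, $(\bra{0^{m_c}}\otimes\mathbb{I})(O_{c,l}^\dagger\otimes\mathbb{I})\,\mathcal{Q}\,(O_{c,r}\otimes\mathbb{I})(\mathbb{I}\otimes O_u)(\ket{0^{m_c}}\otimes\ket{0^{m_u}}) = \tfrac{1}{\alpha_c\|u_0\|}\sum_j c_j e^{-it\mathcal{H}_j}u_0 = \tfrac{1}{\alpha_c\|u_0\|}\mathcal{L}u_0$, exactly as computed in the paragraph preceding the lemma. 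Taking $\alpha_c = \|c\|_1$ (the state-preparation-pair subnormalization) gives the claimed subnormalization $\|c\|_1\|u_0\|$ for the block encoding, and $m_c = \lceil\log_2 M\rceil$ together with the $m_Q$, $m_N$ ancillas accounts for the stated qubit counts $[m_c+m_Q, m_c+m_Q+m_N]$.

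Next I would bound the deviation of the \emph{actual} (imperfect) map from the target operator $u(t)/(\|c\|_1\|u_0\|)$. The natural decomposition is into four error contributions, peeled off one at a time: (i) replacing $\mathcal{L}u_0$ by the exact solution $u(t)$ costs $\epsilon_v$ by hypothesis $\|u(t)-\mathcal{L}u_0\|\le\epsilon_v$; (ii) replacing the perfect $\mathcal{Q}$ by its $\epsilon_Q$-accurate block encoding $U_Q$ — since $\mathcal{Q}$ is applied to a normalized state and then weighted by coefficients summing in absolute value to $\|c\|_1$, and $u_0$ enters with norm $\|u_0\|$, this contributes $\|c\|_1\|u_0\|\epsilon_Q$ after the relevant normalizations; (iii) replacing the perfect PREP pair by the $\epsilon_c$-accurate state-preparation-pair for $c$ — by Definition \ref{def:state_prep_pair} the effective coefficients differ from $c_j$ with $\ell_1$-error at most $\epsilon_c$, and since the unitaries $e^{-it\mathcal{H}_j}$ are norm-one this contributes $\|u_0\|\epsilon_c$; (iv) replacing the perfect $O_u$ preparing $\ket{u_0}=u_0/\|u_0\|$ by the $(\|u_0\|,[0,m_N],\epsilon_0)$ block encoding of $u_0$, which contributes $\|c\|_1\epsilon_0$ after accounting for the $\|u_0\|$ rescaling and the fact that $\mathcal{L}$ has operator norm bounded by $\|c\|_1$. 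Summing by the triangle inequality gives exactly the stated budget condition $\epsilon_v + \|u_0\|(\epsilon_c + \|c\|_1\epsilon_Q) + \|c\|_1\epsilon_0 \le \epsilon$, and calling this sum $\epsilon'$.

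For the success probability I would argue that the (subnormalized) output vector, before the post-selection normalization, has norm within $\epsilon'/(\|c\|_1\|u_0\|)$ of $\|u(t)\|/(\|c\|_1\|u_0\|)$ by the bound just derived applied to norms; hence the probability of the successful measurement outcome is the squared norm of that subnormalized vector, which is at least $\left(\tfrac{\|u(t)\|-\epsilon'}{\|c\|_1\|u_0\|}\right)^2 = \tfrac{(\|u(t)\|-\epsilon')^2}{\|u_0\|^2\|c\|_1^2}$, matching the claim.

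The main obstacle I expect is the bookkeeping of \emph{where} each subnormalization factor lands — specifically making sure the $\|c\|_1$ from the PREP pair, the $\|u_0\|$ from $O_u$, and the $\alpha_Q$ from $U_Q$ (which Lemma \ref{lem:lcu_select} sets to $1$) combine so that the final subnormalization is precisely $\|c\|_1\|u_0\|$ and the error terms carry exactly the weights stated. This requires care because the errors are introduced at different ``layers'' of the circuit and get amplified or not depending on whether they sit inside or outside the coefficient-weighted sum; the cleanest route is to fix an ordering of substitutions (exact $\to$ perfect-ops $\to$ each imperfect op in turn), bound each step's increment against the appropriate operator norm ($\|\mathcal{L}\|\le\|c\|_1$, $\|e^{-it\mathcal{H}_j}\|=1$, $\|\ket{u_0}\|=1$), and then invoke the composition rules for block encodings (Lemma 30 and Proposition 5 of \cite{gilyen2019quantum}, and the LCU block-encoding lemma) only to certify the \emph{structure}, while doing the error accounting by hand as above.
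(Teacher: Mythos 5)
Your proposal is correct and follows essentially the same route as the paper's proof: composing the PREP pair, the SELECT operator $\mathcal{Q}$, and the initial-state encoding in the standard SELECT--PREP pattern, peeling off the errors $\epsilon_Q$, $\epsilon_c$, $\epsilon_0$, and $\epsilon_v$ via triangle inequalities with exactly the weights $\|c\|_1\epsilon_Q\|u_0\|$, $\|u_0\|\epsilon_c$, $\|c\|_1\epsilon_0$, and then using the reverse triangle inequality to lower bound the success probability by $(\|u(t)\|-\epsilon')^2/(\|u_0\|^2\|c\|_1^2)$. The only organizational difference is that the paper first certifies the intermediate $(\|c\|_1, m_c+m_Q, \epsilon_c+\|c\|_1\epsilon_Q)$ block encoding of $\mathcal{L}$ before composing with $\tilde{U}_0$, which is just a regrouping of the same bounds.
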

\begin{proof}
The operator $\mathcal{Q}(t) = \sum_{j=1}^M \ketbra{j}{j} \otimes e^{-it(k_j L + H)}$ serves as a SELECT operation in an LCU circuit, as it is a sum of controlled unitaries. Now, using $(O_{c,l},O_{c, r})$, assumed to be a $(\|c\|_1,m_{c},\epsilon_{c})$-state preparation pair for the vector $c$,
% Oracles defined in Equations \ref{eq:prepconj} and \ref{eq:prep}, 
we can construct the state preparation block encoding $U_u=(O_{c,l}^\dagger \otimes \mathbb{I}) \mathcal{Q} (O_{c, r} \otimes \mathbb{I})$.
We label the imperfect implementations of the unitaries as $\tilde{Q}$, and similarly for others. 
With this, can establish that $\tilde{U}_u$ is a $(\|c\|_1, m_{c}+m_Q, \epsilon_{c}+\|c\|_1\epsilon_Q)$ block encoding of the LCHS propagator $\mathcal{L}$: 
\begin{align}
        \left\|\mathcal{L}-\|c\|_1\bra{0^{ m_{c}+m_Q}}\tilde{U}_u  \ket{0^{m_{c}+m_Q}} \right\| 
        \leq & \left\|\mathcal{L}-\|c\|_1\bra{0^{m_{c}}}\tilde{O}_{c,l}^\dagger\mathcal{Q} \tilde{O}_{c, r} \ket{0^{m_{c}}} \right\|\\
        &+\|c\|_1\left\|\bra{0^{m_{c}}}\tilde{O}_{c,l}^\dagger\mathcal{Q} \tilde{O}_{c, r} \ket{0^{m_{c}}}-\bra{0^{m_{c}+m_Q}}\tilde{U}_u \ket{0^{m_{c}+m_Q}} \right\|\\
        = &\left\|\sum_{j=1}^M c_je^{-it(k_j L + H)}-\|c\|_1\bra{0^{m_{c}}}\tilde{O}_{c,l}^\dagger\ketbra{j}{j} \tilde{O}_{c, r} \ket{0^{m_{c}}}e^{-it(k_j L + H)} \right\|\\
        &+\|c\|_1\left\|\mathcal{Q} -\bra{0^{m_Q}}\tilde{U}_Q\ket{0^{m_Q}} \right\|\\
        \leq & \sum_{j=1}^M \left\|c_j-\|c\|_1\bra{0^{m_{c}}}\tilde{O}_{c,l}^\dagger\ketbra{j}{j} \tilde{O}_{c, r} \ket{0^{m_{c}}} \right\|+\|c\|_1\epsilon_Q\\
        \leq & \epsilon_{c}+\|c\|_1\epsilon_Q,
\end{align}
where we have used the triangle inequality, the unitarity of $e^{-it(k_j L + H)}$, and the definition of state preparation pair.
Finally,  we consider a block encoding of the solution state. We have by assumption that $\tilde{U}_0$ satisfies
$|u_0-\|u_0\| \tilde{U}_0\ket{0^{n}}|\leq \epsilon_0$.
Then $\tilde{U}_u$ can be applied to this approximate initial vector block encoding to conditionally prepare an approximation to $\mathcal{L}u_0$. 
Specifically, we have that $\tilde{U}_u\tilde{U}_0$ is a 
\begin{align}
(\|c\|_1\|u_0\|,[m_{c}+m_Q,m_{c}+m_Q+m_N],\|u_0\|(\epsilon_{c}+\|c\|_1\epsilon_Q)+\|c\|_1\epsilon_0) 
\end{align}
block encoding of $\mathcal{L}u_0$.

\begin{align}
\label{eq:v(t)_error}
\left\|\mathcal{L}u_0-\right.&\left.\|u_0\|\|c\|_1 \,
 \left( \bra{0^O 0^Q} \otimes \mathbb{I} \right) \tilde{U}_u \tilde{U}_0 \left( \ket{0^O 0^Q 0^S} \otimes \mathbb{I} \right) \right\|  \\
\leq & \left\|\mathcal{L}u_0-\|c\|_1 
\left( \bra{0^O 0^Q} \otimes \mathbb{I} \right) \tilde{U}_u \left( \ket{0^O 0^Q} \otimes \mathbb{I} \right)
u_0\right\|\\
&+\left\|\|c\|_1 
\left( \bra{0^O 0^Q} \otimes \mathbb{I} \right) \tilde{U}_u \left( \ket{0^O 0^Q} \otimes \mathbb{I} \right)
u_0-\|u_0\|\|c\|_1 \left( \bra{0^O 0^Q} \otimes \mathbb{I} \right) \tilde{U}_u \tilde{U}_0 \left( \ket{0^O 0^Q 0^S} \otimes \mathbb{I} \right)\right\|\\
\leq & \left\|\mathcal{L}-\|c\|_1 \left( \bra{0^O 0^Q} \otimes \mathbb{I} \right) \tilde{U}_u \left( \ket{0^O 0^Q} \otimes \mathbb{I} \right)\right|\|u_0\|+\|c\|_1\left| u_0-\|u_0\| \tilde{U}_0\ket{0^{n}}\right\|\\
\leq & \|u_0\|(\epsilon_{c}+\|c\|_1\epsilon_Q)+\|c\|_1\epsilon_0
\label{eq:L_bias}
\end{align}
Next, if $|u(t)-\mathcal{L}u_0|\leq \epsilon_v$, then by the triangle inequality we have that $\tilde{U}_u\tilde{U}_0$ is also a 
\begin{align}
(\|c\|_1\|u_0\|,[m_{c}+m_Q,m_{c}+m_Q+m_N], \epsilon_v + \|u_0\|(\epsilon_{c}+\|c\|_1\epsilon_Q)+\|c\|_1\epsilon_0)
\end{align}
block encoding of $u(t)$.
The probability that the conditional quantum state $\left( \bra{0^O 0^Q} \otimes \mathbb{I} \right) \tilde{U}_u \tilde{U}_0 \left( \ket{0^O 0^Q 0^S} \otimes \mathbb{I} \right)$ is obtained can be lower bounded as
\begin{align}
    \textbf{Pr}(\text{success}) &= \left\|\left( \bra{0^O 0^Q} \otimes \mathbb{I} \right) \tilde{U}_u \tilde{U}_0 \left( \ket{0^O 0^Q 0^S} \otimes \mathbb{I} \right)\right\|^2\\
 &\geq\frac{(\|u(t)\|-\epsilon')^2}{\|u_0\|^2 \|c\|_1^2},
\end{align}
where $\epsilon'=\epsilon_v + \|u_0\|(\epsilon_{c}+\|c\|_1\epsilon_Q)+\|c\|_1\epsilon_0$ and in the second line we have used the fact that $|u(t)-\|c\|_1\|u_0\| \left( \bra{0^O 0^Q} \otimes \mathbb{I} \right) \tilde{U}_u \tilde{U}_0 \left( \ket{0^O 0^Q 0^S} \otimes \mathbb{I} \right)|\leq\epsilon'$ implies that $|\left( \bra{0^O 0^Q} \otimes \mathbb{I} \right) \tilde{U}_u \tilde{U}_0 \left( \ket{0^O 0^Q 0^S} \otimes \mathbb{I} \right)|\geq \frac{|u(t)|-\epsilon'}{\|u_0\|\|c\|_1}$ by the reverse triangle inequality.
\end{proof}

We discuss how the above result is useful for an important aspect of simulating differential equations on a quantum computer: norm estimation. In most applications of such algorithms, it is important to extract an estimate of some property of the solution state $u(t)$.
By encoding the desired property into an amplitude estimate (see e.g. \cite{penuel2024feasibility}), the quantum computer can only give an estimate of the desired property for the normalized vector $u(t)/\|u(t)\|$.
So, the resulting estimate must be rescaled by $\|u(t)\|$ to yield an estimate of the desired property.
Fortunately, in some cases the quantum computer can be used to obtain an estimate of $\|u(t)\|$.
In particular, the block encoding provided in Lemma \ref{lem:lchs_be} can be used to estimate $\|u(t)\|$ as long as sufficiently accurate estimates of $\|u_0\|$ is known ($\|c\|_1$ can be computed exactly).
This is achieved as follows.
A Hadamard test or, more efficiently, amplitude estimation \cite{brassard2000quantum} can be used to estimate $\left\|\left( \bra{0^O 0^Q} \otimes \mathbb{I} \right) \tilde{U}_u \tilde{U}_0 \left( \ket{0^O 0^Q 0^S} \otimes \mathbb{I} \right)\right\|^2$.
The proof of Lemma \ref{lem:lchs_be} shows that an estimate of the above quantity is an estimate of 
$\|\mathcal{L}u_0\|$ with bias no larger than the controllable amount $(\|u_0\|(\epsilon_{c}+\|c\|_1\epsilon_Q)+\|c\|_1\epsilon_0)$ as given by Eq. \ref{eq:L_bias}.
Furthermore, this also gives an estimate of
$\|u(t)\|$ with bias at most $\epsilon'$, which can also be controlled.
In this way, the parameters $\|\mathcal{L}u_0\|$ and $\|u(t)\|$, which may be needed for different purposes can be estimated using the block encoding from Lemma \ref{lem:lchs_be}.

We also address a potential concern one might have in regards to how the block encoding subnormalization might scale with $t$.
As we will show in the subsequent theorem, the subnormalization of this ``unamplified'' block encoding of $u(t)$ governs the cost of using amplitude amplification to increase the success probability of preparing the target state.
The concern might then be that the normalization $\|c\|_1 = \sum_{j=1}^M |c_j|$ scales poorly with $t$, causing the cost of amplitude amplification to scale poorly with $t$.
This is a valid concern because the number of terms in $c$ scales as
$M \in O\left (t \|L\| \log \left ( \frac{1}{\epsilon}\right )^{1+1/\beta} \right )$. However, applying [\cite{an2023linear}, Lemma 11], regardless of the value of $M$ the nature of approximating a smooth function with Gaussian quadrature implies that $\|c\|_1 \in O(1)$. This implies that long time simulations requiring large $M$ will not lead to a corresponding increase in cost of amplitude amplification. To give accurate costs for the entire approach, we show how to numerically compute these coefficients exactly in Section \ref{sec:numerics} as the goal of this work is to compute query counts that are sensitive to all constants.

Now, the final step of the constant factor analysis is to calculate the number of calls to LCHS, and therefore the block encoding $U_A$, in order to ensure a failure rate below $\epsilon$ for the algorithm.

\begin{theorem}[Amplified output] \label{thm:amplified_prep}
We can construct a $(\|\mathcal{L}u_0\|, [m_{c}+m_A+5,m_N+m_{c}+m_A+5], \epsilon)$-block encoding of $u(t)$
using the following block encoding subroutines with the associated query counts: 
\begin{itemize}
\item $(O_{c,l}, O_{c,r})$, an $m_{c}$-qubit $\epsilon_{c}$ state preparation pair for the coefficient vector $c$ (see Definition \ref{def:state_prep_pair}) and $U_0$, a $(\|u_0\|,[0,m_N],\epsilon_0)$ block encoding of $u_0$, both with query count
\begin{align}
\label{eq:aa_cost}
C_{_{LCHS}}(\Delta,\epsilon_{\mathrm{AA}})=\ceil{\sqrt{8\ceil{\frac{4}{\Delta^2}\ln(8/\pi \epsilon_{\mathrm{AA}}^2) e^2 }\ln\left (\frac{64\frac{\sqrt{2}}{\Delta}\ln^{1/2}(8/\pi \epsilon_{\mathrm{AA}}^2)}{3\sqrt{\pi} \epsilon_{\mathrm{AA}}}\right )}+1},\,\,\,\, \textup{where}\,\, \Delta=2\frac{\|u(t)\|-\epsilon_{_{LCHS}}}{\|u_0\|\, \|c\|_1}
% C_{_{LCHS}}(w,\epsilon_{\mathrm{AA}})=\frac{\eta\|c\|_1\|u_0\|}{(\|u(t)\| -2\|c\|_1\|u_0\|\epsilon/\|u(t)\|)}\log^{\nu}\left(\frac{1}{\epsilon_{\mathrm{AA}}}\right)
\end{align}
\item $c$-$U_A$, an $(\alpha_A, m_A, \epsilon_A)$-block encoding of $A$ with query count
\begin{align}
C_A=C_{_{LCHS}}(\Delta,\epsilon_{\mathrm{AA}})\left(e\sqrt{1+K^2}\alpha_A t + 2\log(\frac{2 \eta}{\epsilon_{\mathrm{exp}}})\right),
\end{align}
where $\eta= 4(\sqrt{2\pi}e^{1/13})^{-1} \approx 1.47762.$
\item $\epsilon_R$-accurate multi-controlled $R_Z(\phi)$ gates, with query count
\begin{align}
C_R = MC_A
\end{align}
as long as the error parameters $\epsilon_{A}$, $\epsilon_{\text{exp}}$, $\epsilon_{c}$, $\epsilon_0$, $\epsilon_{\text{trunc}}$, $\epsilon_{\text{disc}}$, and $\epsilon_{\mathrm{AA}}$ satisfy
\begin{align}
    \|u_0\|(\epsilon_{\mathrm{trunc}}&+\epsilon_{\mathrm{disc}})+\left(\|u(t)\|+\|u_0\|(\epsilon_{\mathrm{trunc}}+\epsilon_{\mathrm{disc}}) \right)\left(\epsilon_{\mathrm{AA}}+\frac{9C_{_{LCHS}}(\Delta,\epsilon_{\mathrm{AA}})}{2\|c\|_1\|u_0\|}\left( \|u_0\|\left(\epsilon_{c}\right. \right. \right. \nonumber\\ & \left. \left. \left. +\|c\|_1(\epsilon_{\mathrm{exp}}+\sqrt{1+K^2}t\epsilon_A+ 2Mt\sqrt{1+K^2}\alpha_A\epsilon_R)\right)+\|c\|_1\epsilon_0\right)\right)\leq \epsilon.
\end{align}
and as long as $\epsilon\leq 3C_{_{LCHS}}(\Delta,\epsilon_{\mathrm{AA}})\left(\|u(t)\|+\|u_0\|(\epsilon_{\mathrm{trunc}}+\epsilon_{\mathrm{disc}}) \right)/8$, which is used to ensure the performance of the amplitude amplification step.
\end{itemize}
\end{theorem}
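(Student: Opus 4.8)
The plan is to obtain the amplified block encoding by feeding the conditional state preparation of Lemma~\ref{lem:lchs_be} into the robust fixed-point oblivious amplitude amplification (FPOAA) machinery whose constant-factor query cost $C_{\mathrm{FPOAA}}$ is worked out in Appendices~\ref{app:polynomial_degree_bound} and \ref{app:robust_AA}. First I would recall the ingredients supplied by the earlier lemmas: Lemma~\ref{lem:lchs_be} builds the unitary $\tilde U_u\tilde U_0$ from a single call to $\mathcal{Q}$ (hence, via Lemma~\ref{lem:lcu_select}, from $e\sqrt{1+K^2}\alpha_A t + 2\log(2\eta/\epsilon_{\mathrm{exp}})$ controlled calls to $U_A$ and $M$ times that many multi-controlled $R_Z$ gates), together with one call each to $O_{c,l}$, $O_{c,r}$ and $U_0$. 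In the ideal case this circuit flags, in the $\ket{0^{m_c}0^{m_Q}}$ subspace, the normalized solution direction $v(t)/\|v(t)\| = \mathcal{L}u_0/\|\mathcal{L}u_0\|$ with amplitude $a = \|\mathcal{L}u_0\|/(\|c\|_1\|u_0\|)$. Since $\|\mathcal{L}u_0\|\ge \|u(t)\|-\epsilon_{_{LCHS}}$, this amplitude is bounded below by $\Delta/2$ with $\Delta$ as in Eq.~\eqref{eq:aa_cost}, which is precisely the lower bound the fixed-point cost formula $C_{\mathrm{FPOAA}}(\Delta,\epsilon_{\mathrm{AA}})$ consumes.

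Next I would invoke the appendix result that $C_{\mathrm{LCHS}}:=C_{\mathrm{FPOAA}}(\Delta,\epsilon_{\mathrm{AA}})$ rounds of FPOAA produce a unitary $W$ acting on one extra ancilla qubit (the $+1$ that upgrades the ancilla count of Lemma~\ref{lem:lchs_be}'s block encoding from $m_c+m_A+4$ to $m_c+m_A+5$) such that, in the ideal case, $(\bra{0}\otimes\mathbb{I})W(\ket{0}\otimes\mathbb{I})$ agrees with $v(t)/\|v(t)\|$ up to error $\epsilon_{\mathrm{AA}}$. Rescaling by the subnormalization $\|\mathcal{L}u_0\|=\|v(t)\|$ turns $W$ into a block encoding of $v(t)$, and the triangle inequality with $\|u(t)-v(t)\|\le \epsilon_v$ (bounded by $\|u_0\|(\epsilon_{\mathrm{trunc}}+\epsilon_{\mathrm{disc}})$ via Lemma~\ref{lem:class_error}) converts it into a block encoding of $u(t)$ with subnormalization $\|\mathcal{L}u_0\|$, as claimed. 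The query counts are then immediate: each FPOAA round applies $\tilde U_u\tilde U_0$, its inverse, and reflections, so the $c$-$U_A$ count is $C_{\mathrm{LCHS}}$ times the per-round count of Lemma~\ref{lem:lcu_select}, the $O_{c,l}$, $O_{c,r}$, $U_0$ counts are each $C_{\mathrm{LCHS}}$, and the $R_Z$ count is $M$ times the $c$-$U_A$ count.

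The substantive step is the error budget, where I would carefully separate two qualitatively different error channels. The approximation error $\epsilon_v$ lives \emph{outside} the amplified circuit — it is the distance between $\mathcal{L}u_0$ and $e^{At}u_0$ — and so it contributes additively exactly once, bounded by $\|u_0\|(\epsilon_{\mathrm{trunc}}+\epsilon_{\mathrm{disc}})$. By contrast, the implementation errors of the block encoding — $\epsilon_c$ from the state-preparation pair, $\epsilon_0$ from $U_0$, and $\epsilon_Q=\epsilon_{\mathrm{exp}}+\sqrt{1+K^2}t\epsilon_A+2Mt\sqrt{1+K^2}\alpha_A\epsilon_R$ from Lemma~\ref{lem:lcu_select} — accumulate across the $C_{\mathrm{LCHS}}$ amplification rounds. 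Using the robustness bound of Appendix~\ref{app:robust_AA}, the per-round circuit error, which after dividing out the subnormalization $\|c\|_1\|u_0\|$ equals $\tfrac{1}{\|c\|_1\|u_0\|}\big(\|u_0\|(\epsilon_c+\|c\|_1\epsilon_Q)+\|c\|_1\epsilon_0\big)$, is inflated by a factor $\tfrac{9}{2}C_{\mathrm{LCHS}}$ in the amplified state. Adding $\epsilon_{\mathrm{AA}}$ for the intrinsic FPOAA infidelity, multiplying this normalized-state error by $\|v(t)\|\le \|u(t)\|+\|u_0\|(\epsilon_{\mathrm{trunc}}+\epsilon_{\mathrm{disc}})$, and adding back $\epsilon_v$ reproduces the displayed inequality term by term; the side condition $\epsilon\le 3C_{\mathrm{LCHS}}(\|u(t)\|+\|u_0\|(\epsilon_{\mathrm{trunc}}+\epsilon_{\mathrm{disc}}))/8$ is the regime in which the Appendix~\ref{app:robust_AA} guarantee is valid.

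I expect the main obstacle to be this last step: correctly invoking the robust FPOAA lemma — that is, combining fixed-point amplification (which needs only a lower bound $\Delta$, not exact knowledge of the amplitude) with obliviousness (since we amplify a fixed input whose target direction is unknown), while tracking how each error source propagates — and in particular pinning down the constant $\tfrac{9}{2}$ in the error-amplification factor and the threshold in the side condition. Everything else reduces to bookkeeping of parameters already fixed by Lemmas~\ref{lem:SH}, \ref{lem:lcu_select}, and \ref{lem:lchs_be}.
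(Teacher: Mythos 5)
Your proposal follows essentially the same route as the paper's proof: it composes the conditional block encoding of Lemma \ref{lem:lchs_be} (compiled via Lemma \ref{lem:lcu_select}) with the robust fixed-point oblivious amplitude amplification of Lemma \ref{lem:robust_fp_obl_amp}, treats the classical LCHS approximation error $\epsilon_v$ as a single additive contribution while the implementation errors are inflated by the $\tfrac{9}{2}C_{\mathrm{LCHS}}$ robustness factor plus $\epsilon_{\mathrm{AA}}$ and rescaled by $\|v(t)\|$, and identifies the side condition as enforcing the $\epsilon_x/\alpha_x\leq 1/12$ hypothesis of the amplification lemma — all matching the paper's argument, including the query bookkeeping and the extra ancilla. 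This is correct and essentially identical in structure, so no further comparison is needed.
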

\begin{proof}
Our goal is to establish a quantum algorithm $U$ such that $\|u(t)-\alpha\bra{0}^aU\ket{0}^b\|\leq \epsilon$, for some subnormalization $\alpha$ and some $a$- and $b$-qubit registers.
Letting $\mathcal{L}u_0$ be the LCHS approximation to $u(t)$, then by the triangle inequality we have that
$\|u(t)-\alpha\bra{0}^aU\ket{0}^b\|\leq \|u(t)-\mathcal{L}u_0\|+\|\mathcal{L}u_0-\alpha\bra{0}^aU\ket{0}^b\|\equiv\epsilon_v+\epsilon_{\mathrm{qc}}$.
Lemma \ref{lem:class_error} ensures that $\mathcal{L}u_0$ is an $\epsilon_v$-accurate approximation to $u(t)$, provided that $\|u_0\|(\epsilon_{\mathrm{trunc}}+\epsilon_{\mathrm{disc}})\leq\epsilon_v$.
Thus, an $(\alpha, [a,b],\epsilon_{\textup{qc}})$ block encoding of $\mathcal{L}u_0$ is also an $\epsilon$-accurate block encoding of $u(t)$ by ensuring that $\|u_0\|(\epsilon_{\mathrm{trunc}}+\epsilon_{\mathrm{disc}})+\epsilon_{\mathrm{qc}}\leq\epsilon$. Later in the proof we will account for the costs associated with $\epsilon_{\mathrm{trunc}}$ and $\epsilon_{\mathrm{disc}}$.

The quantum algorithm $U$ uses amplitude amplification to produce an amplified version of the conditional state that is output from LCHS (as described in Lemma \ref{lem:lchs_be}).
We use Lemma \ref{lem:robust_fp_obl_amp} for the costs of robust fixed-point oblivious amplitude amplification.
The costs of amplitude amplification depend on the given state preparation routine, which in this case is the conditional state block encoding properties.
As we've already accounted for the LCHS error $\epsilon_v$ above, we will cost out amplitude amplification in targeting the approximate solution $v(t)=\mathcal{L}u_0$, as opposed to $u(t)$.
This is achieved by setting $\epsilon_v$ of Lemma \ref{lem:lchs_be} to zero.
Lemma \ref{lem:lchs_be} establishes a
\begin{align}
(\|c\|_1\|u_0\|,[m_{c}+m_Q,m_{c}+m_Q+m_N], \epsilon_{_{LCHS}})
\end{align}
block encoding of $\mathcal{L}u_0$ 
with an overlap lower bound of $\frac{\Delta}{2} \leq \frac{\|u(t)\|-\epsilon_{_{LCHS}}}{\|u_0\|\, \|c\|_1} \leq \sqrt{\text{Pr}(\ket{0}^{m_N})}$, where, to ensure the correct algorithm performance, the amplitude amplification gap parameter $\Delta$ (see Lemma \ref{lem:robust_fp_obl_amp}) is set to twice the overlap lower bound.
Therefore, applying the amplitude amplification results of Lemma \ref{lem:robust_fp_obl_amp},
we obtain a 
\begin{align}
(\|\mathcal{L}u_0\|_1,[m_{c}+m_Q+1,m_{c}+m_Q+m_N+1], \epsilon_{\mathrm{qc}})
\end{align}
block encoding of $\mathcal{L}u_0$ using 
\begin{align}
C_{_\mathrm{LCHS}}(\Delta,\epsilon_{\mathrm{AA}})=\ceil{\sqrt{8\ceil{\frac{4}{\Delta^2}\ln(8/\pi \epsilon_{\mathrm{AA}}^2) e^2 }\ln\left (\frac{64\frac{\sqrt{2}}{\Delta}\ln^{1/2}(8/\pi \epsilon_{\mathrm{AA}}^2)}{3\sqrt{\pi} \epsilon_{\mathrm{AA}}}\right )}+1}
\end{align}  
calls to the LCHS block encoding, where $\Delta$ is chosen such that $\Delta\leq2\frac{\|u(t)\|-\epsilon_{_{LCHS}}}{\|u_0\|\, \|c\|_1}$, and we ensure that the error parameters satisfy
\begin{align}
\|\mathcal{L}u_0\|\left(\epsilon_{\mathrm{AA}} +\frac{9C_{_{LCHS}}(\Delta,\epsilon_{\mathrm{AA}})\epsilon_{_{LCHS}}}{2\|c\|_1\|u_0\|}\right)\leq
(\|u(t)\| + \epsilon_v) \left ( \epsilon_{\mathrm{AA}} + \frac{{9\epsilon_{\mathrm{LCHS}}}}{2\|c\|_1 \|u_0\| } C_{\mathrm{LCHS}}(\Delta, \epsilon_{\mathrm{AA}})\right )\leq\epsilon_{\mathrm{qc}},
\label{eq:looser_error_bound}
\end{align}
where we have used a triangle inequality and the definition $\epsilon_v=\|u(t) - \mathcal{L}u_0\|$ to establish the first inequality.
Then, with the assumption $\epsilon\leq 3C_{_{LCHS}}(\Delta,\epsilon_{\mathrm{AA}})(\|u(t)\|+\epsilon_v)/8$ from the theorem statement and the fact that $\epsilon_{\textup{qc}}\leq\epsilon$, this ensures the bound of $\epsilon_{_{LCHS}}/\|c\|_1\|u_0\|\leq 1/12$ as required by Lemma \ref{lem:robust_fp_obl_amp}.

Next, we compile the LCHS block encoding into its subroutines according to Lemma \ref{lem:lchs_be}.
It uses a single call to $U_Q$, an $(\alpha_Q, m_Q, \epsilon_Q)$ block encoding of $\mathcal{Q}(t) = \sum_{j=1}^M \ketbra{j}{j} \otimes e^{-it(k_j L + H)}$, a single call to $(O_{c,l}, O_{c,r})$, an $m_{c}$-qubit $(\epsilon_{c})$ state preparation pair for the coefficient vector $c$ (see Definition \ref{def:state_prep_pair}), and a single call to $U_0$, a $(\|u_0\|,[0,m_N],\epsilon_0)$ block encoding of $u_0$.
Lemma \ref{lem:lchs_be} states that their error parameters must satisfy
\begin{align}
\|u_0\|(\epsilon_{c}+\|c\|_1\epsilon_Q)+\|c\|_1\epsilon_0\leq \epsilon_{LCHS}. 
\end{align}

Finally, we compile $U_Q$ into its subroutines according to Lemma \ref{lem:lcu_select}.
$U_Q$ is a $(1,m_A+4,\epsilon_Q)$-block encoding of the LCU SELECT operator 
$\mathcal{Q}(t) = \sum_{j=1}^M \ketbra{j}{j} \otimes e^{-it(k_j L + H)}=e^{-it\mathcal{S}}$.
It uses 
\begin{align}
e\sqrt{1+K^2}\alpha_A t + 2\log\left (\frac{2\eta}{\epsilon_{\text{exp}}}\right )
\end{align}
queries to controlled $U_A$ and $U_A^\dagger$ plus 4 additional controlled queries (trivial contribution), for the $(\alpha_A, m_A, \epsilon_A)$-block encoding $U_A$,
it uses $M$ times that many calls to $\epsilon_R$-accurate multi-controlled $R_Z(\phi)$ gates,
and it requires that the error parameters $\epsilon_{\mathrm{exp}}$, $\epsilon_A$, and $\epsilon_R$ satisfy
\begin{align}
\epsilon_{\mathrm{exp}}+\sqrt{1+K^2}t\epsilon_A+ 2Mt\sqrt{1+K^2}\alpha_A\epsilon_R\leq \epsilon_Q.
\end{align}
Putting all of these results together, 
$U$ is a 
$(\|\mathcal{L}u_0\|, [m_{c}+m_A+5,m_N+m_{c}+m_A+5], \epsilon)$-block encoding of $u(t)$,
using the following total number of calls to controlled versions of $U_A$,
\begin{align}
C_A=C_{_{LCHS}}(\Delta,\epsilon_{\mathrm{AA}})\left(e\sqrt{1+K^2}\alpha_A t + 2\log\left (\frac{2 \eta}{\epsilon_{\mathrm{exp}}}\right )\right),
\end{align} 
and ensuring that the error parameters satisfy
\begin{align}
    \|u_0\|(\epsilon_{\mathrm{trunc}}&+\epsilon_{\mathrm{disc}})+\left(\|u(t)\|+\|u_0\|(\epsilon_{\mathrm{trunc}}+\epsilon_{\mathrm{disc}}) \right)\left(\epsilon_{\mathrm{AA}}+\frac{9C_{_{LCHS}}(\Delta,\epsilon_{\mathrm{AA}})}{2\|c\|_1\|u_0\|}\left( \|u_0\|\left(\epsilon_{c}\right. \right. \right. \nonumber\\ & \left. \left. \left. +\|c\|_1(\epsilon_{\mathrm{exp}}+\sqrt{1+K^2}t\epsilon_A+ 2Mt\sqrt{1+K^2}\alpha_A\epsilon_R)\right)+\|c\|_1\epsilon_0\right)\right)\leq \epsilon.   
\end{align}
\end{proof}

The above result is stated in terms of several error parameters that need only satisfy a single inequality. This allows the user to budget this error as they see fit. 
However, it may be useful in some cases to simply use a pre-budgeted result. 
That is, the decisions about how much error to allow each subroutine are pre-set such that the total query costs are given in terms of a single error parameter, the overall error $\epsilon$.
The following corollary provides this.
\begin{corollary}[Error-Budgeted Version] \label{coro:error_budgeted}
We can construct a $(\|\mathcal{L}u_0\|, [m_{c}+m_A+5,m_N+m_{c}+m_A+5], \epsilon)$-block encoding of $u(t)$ using the following block encoding subroutines with the associated query counts: 
\begin{itemize}
\item $(O_{c,l}, O_{c,r})$, an $m_{c}$-qubit $\left(\frac{\|c\|_1\epsilon}{36\|v(t)\|C_{_{LCHS}}}\right)$-accurate state preparation pair for the coefficient vector $c$ (see Definition \ref{def:state_prep_pair}) and $U_0$, a $(\|u_0\|,[0,m_N],\|u_0\|\epsilon/36\|u(t)\|C_{_{LCHS}})$ block encoding of $u_0$, both with query count
\begin{align}
C_{_{LCHS}}(\Delta,\epsilon/8\|v(t)\|)=\ceil{\sqrt{8\ceil{\frac{4}{\Delta^2}\ln(512\|u(t)\|/\pi \epsilon^2) e^2 }\ln\left (\frac{512\|v(t)\|\frac{\sqrt{2}}{\Delta}\ln^{1/2}(512\|v(t)\|/\pi \epsilon^2)}{3\sqrt{\pi} \epsilon}\right )}+1},
\end{align}
where $\Delta=2\frac{\|v(t)\| -2\|c\|_1\|u_0\|\epsilon/\|v(t)\|}{\|c\|_1\|u_0\|}$.
\item $c$-$U_A$, an $\left(\alpha_A, m_A, \frac{\epsilon}{36\|v(t)\|C_{_{LCHS}}\sqrt{1+K^2}t}\right)$-block encoding of $A$ with query count
\begin{align}
C_A = C_{_{LCHS}}(\Delta,\epsilon/8\|v(t)\|)\left(e\sqrt{1+K^2}\alpha_A t + 2\log\left(\frac{72\eta \|v(t)\|C_{\mathrm{LCHS}}}{\epsilon}\right )\right)
\end{align}
\item $\left(\frac{\epsilon}{72\|v(t)\|C_{_{LCHS}}Mt \sqrt{1+K^2}\alpha_At}\right)$-accurate multi-controlled $R_Z(\phi)$ gates, with query count
\begin{align}
C_R = MC_A
\end{align} 
as long as $\epsilon\leq 3C_{_{LCHS}}\|v(t)\|/8$, which is used to ensure the performance of the amplitude amplification step. 
\end{itemize}
\end{corollary}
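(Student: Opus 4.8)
The plan is to derive this corollary as a pre-budgeted specialization of Theorem~\ref{thm:amplified_prep}: the query-count formulas, error-parameter roles, and block-encoding tuple there already have the right shape, so the task reduces to choosing a concrete allocation of the component errors $\epsilon_{\mathrm{trunc}},\epsilon_{\mathrm{disc}},\epsilon_{\mathrm{AA}},\epsilon_c,\epsilon_0,\epsilon_{\mathrm{exp}},\epsilon_A,\epsilon_R$ and verifying the single master inequality of that theorem. First I would fix $\epsilon_{\mathrm{AA}}=\epsilon/8\|v(t)\|$ with $\|v(t)\|=\|\mathcal{L}u_0\|$; since this term enters the master inequality only through the standalone product $(\|u(t)\|+\|u_0\|(\epsilon_{\mathrm{trunc}}+\epsilon_{\mathrm{disc}}))\,\epsilon_{\mathrm{AA}}$, and (using the side hypothesis) that prefactor is at most $\|v(t)\|$ up to controllable slack, this choice contributes at most about $\epsilon/8$. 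It also pins down $C_{\mathrm{LCHS}}=C_{\mathrm{FPOAA}}(\Delta,\epsilon/8\|v(t)\|)$, and substituting $\ln(8/\pi\epsilon_{\mathrm{AA}}^2)$ into the formula of Theorem~\ref{thm:amplified_prep} yields the displayed expression for $C_{\mathrm{LCHS}}$.

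Next I would split the remaining budget between the \emph{classical} LCHS error, which Lemma~\ref{lem:class_error} controls via $\|u_0\|(\epsilon_{\mathrm{trunc}}+\epsilon_{\mathrm{disc}})\le\epsilon_v$ and which also reappears inside the amplified prefactor, and the \emph{quantum} error $\epsilon_{\mathrm{LCHS}}$, which Lemma~\ref{lem:lchs_be} bundles as $\|u_0\|(\epsilon_c+\|c\|_1\epsilon_Q)+\|c\|_1\epsilon_0$ and which Lemma~\ref{lem:lcu_select} further decomposes through $\epsilon_Q\ge\epsilon_{\mathrm{exp}}+\sqrt{1+K^2}t\epsilon_A+2Mt\sqrt{1+K^2}\alpha_A\epsilon_R$. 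Because $\epsilon_{\mathrm{LCHS}}$ enters the master inequality weighted by $\tfrac{9}{2}C_{\mathrm{LCHS}}/(\|c\|_1\|u_0\|)$ and then again by the $(\|u(t)\|+\cdots)\approx\|v(t)\|$ prefactor, back-solving so that this whole block is at most a fixed fraction of $\epsilon$ is what forces the denominators $\|v(t)\|C_{\mathrm{LCHS}}$ and the numerical constants $36,72$ appearing in the statement; concretely I would set $\epsilon_c=\|c\|_1\epsilon/(36\|v(t)\|C_{\mathrm{LCHS}})$, $\epsilon_0=\|u_0\|\epsilon/(36\|u(t)\|C_{\mathrm{LCHS}})$, allocate $\|c\|_1\epsilon_Q$ the matching fraction, and then split $\epsilon_Q$ three ways to obtain $\epsilon_{\mathrm{exp}}$, $\epsilon_A=\epsilon/(36\|v(t)\|C_{\mathrm{LCHS}}\sqrt{1+K^2}t)$, and $\epsilon_R=\epsilon/(72\|v(t)\|C_{\mathrm{LCHS}}Mt\sqrt{1+K^2}\alpha_A t)$, exactly as in the statement. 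One then checks that each of the finitely many resulting terms is at most $\epsilon$ over an integer and that the integers sum to at most one, which is precisely the master inequality of Theorem~\ref{thm:amplified_prep}.

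Finally, substituting the chosen $\epsilon_{\mathrm{exp}}$ into the $C_A$ formula of Theorem~\ref{thm:amplified_prep} replaces $2\log(2\eta/\epsilon_{\mathrm{exp}})$ by $2\log(72\eta\|v(t)\|C_{\mathrm{LCHS}}/\epsilon)$, giving the stated $C_A$; then $C_R=MC_A$ and the block-encoding tuple $(\|\mathcal{L}u_0\|,[m_c+m_A+5,m_N+m_c+m_A+5],\epsilon)$ carry over verbatim, and the hypothesis $\epsilon\le 3C_{\mathrm{LCHS}}\|v(t)\|/8$ is just the theorem's hypothesis $\epsilon\le 3C_{\mathrm{LCHS}}(\|u(t)\|+\|u_0\|(\epsilon_{\mathrm{trunc}}+\epsilon_{\mathrm{disc}}))/8$ after writing $\|v(t)\|$ for the bracketed quantity. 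The main obstacle is purely the bookkeeping in the middle step: the master inequality is nonlinear in the error parameters (through $\epsilon_Q$ sitting inside the amplified term, and through $C_{\mathrm{LCHS}}$ itself depending on $\epsilon_{\mathrm{AA}}$), so one must fix the parameters in the right order and must confirm that the slack left by the $\le\epsilon$ bound genuinely absorbs the places where $\|u(t)\|$ has been replaced by $\|v(t)\|$ in the various prefactors and in $\Delta$.
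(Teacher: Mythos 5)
Your proposal is correct and follows essentially the same route as the paper's proof: the paper likewise specializes Theorem \ref{thm:amplified_prep} by evenly distributing $\epsilon$ among eight contributions ($\|u_0\|\epsilon_{\mathrm{trunc}}$, $\|u_0\|\epsilon_{\mathrm{disc}}$, $\|v(t)\|\epsilon_{\mathrm{AA}}$, and the five amplified terms each set to $\epsilon/8$), justifying the replacement of $\|u(t)\|+\epsilon_v$ by $\|v(t)\|$ via Eq. \ref{eq:looser_error_bound}, and then substitutes the resulting $\epsilon_{\mathrm{exp}}$ into the $C_A$ formula to get the $2\log(72\eta\|v(t)\|C_{\mathrm{LCHS}}/\epsilon)$ term. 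Your explicit allocations for $\epsilon_{\mathrm{AA}}$, $\epsilon_c$, $\epsilon_0$, $\epsilon_A$, $\epsilon_R$ coincide with the paper's choices, so no gap remains.
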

\begin{proof}
Starting from the total error budget 
\begin{align}
    \|u_0\|(\epsilon_{\mathrm{trunc}}&+\epsilon_{\mathrm{disc}})+\|v(t)\|\left(\epsilon_{\mathrm{AA}}+\frac{9C_{_{LCHS}}(\Delta,\epsilon_{\mathrm{AA}})}{2\|c\|_1\|u_0\|}\left( \|u_0\|\left(\epsilon_{c}\right. \right. \right. \nonumber\\ & \left. \left. \left. +\|c\|_1(\epsilon_{\mathrm{exp}}+\sqrt{1+K^2}t\epsilon_A+ 2Mt\sqrt{1+K^2}\alpha_A\epsilon_R)\right)+\|c\|_1\epsilon_0\right)\right)\leq \epsilon  
\end{align}
we can choose to evenly distribute $\epsilon$ among the following eight contributions:
\begin{align}
    \|u_0\|\epsilon_{\mathrm{trunc}} &= \epsilon/8 & 
    \|u_0\|\epsilon_{\mathrm{disc}} &= \epsilon/8 \\
    \|v(t)\|\epsilon_{\mathrm{AA}} &= \epsilon/8 & 
    9\|v(t)\|C_{_{LCHS}}\epsilon_{c}/2\|c\|_1 &= \epsilon/8 \\
    9\|v(t)\|C_{_{LCHS}}\epsilon_{0}/2\|u_0\| &= \epsilon/8 &    
    9\|v(t)\|C_{_{LCHS}}\epsilon_{\mathrm{exp}}/2 &= \epsilon/8 \\
    9\|v(t)\|C_{_{LCHS}}\sqrt{1+K^2}t\epsilon_A/2 &= \epsilon/8 &  9\|v(t)\|C_{_{LCHS}}Mt\sqrt{1+K^2}\alpha_A\epsilon_R &= \epsilon/8
\end{align}
Note that we have used $\|v(t)\|$ above in place of the $\|u(t)\|+\epsilon_v$ used in Theorem \ref{thm:amplified_prep}.
This is a valid replacement as can be seen in Eq. \ref{eq:looser_error_bound}, where $\|\mathcal{L}u_0\|=\|v(t)\|$, and it also simplifies the budgeting of error to $\epsilon_{\textup{trunc}}$ and $\epsilon_{\textup{disc}}$.
This sets the error requirement for each of the operations as given in the theorem statement.
With this error budgeting and using the result in Theorem \ref{thm:amplified_prep}, the total number of calls to $U_A$ as a function of $\epsilon$ is
\begin{align}
C_A = C_{_{LCHS}}(\Delta,\epsilon_{\mathrm{AA}})\left(e\sqrt{1+K^2}\alpha_A t + 2\log(\frac{72\eta \|v(t)\|C_{_{LCHS}}(\Delta,\epsilon_{\mathrm{AA}})}{\epsilon})\right),
\end{align}

where $C_{_{LCHS}}(\Delta,\epsilon_{\mathrm{AA}})$, given in Equation \ref{eq:aa_cost},
is the number of calls to the state preparation pair $(O_{c,l}, O_{c,r})$ and the initial state preparation $U_0$
and, using Lemma \ref{lem:class_error},
\begin{align}
    k_{\text{max}} = \left (\frac{2\beta}{\cos (\beta \pi /2)} W_0\left ( \left (\frac{8B_\beta\|u_0\|}{\epsilon} \right )^{1/\beta} \frac{\cos(\beta \pi /2)}{2\beta} \right ) \right )^{1/\beta}.
\end{align}
The total number of multi-controlled rotations (due to $U_S$) is
\begin{align}
MC_A,
\end{align}
where, using Lemma \ref{lem:class_error},
\begin{align}
    M &= t \|L\| e k_{\text{max}} \ceil{-\frac{\log(e)}{4} W_{-1}\left (\frac{-3 C_\beta}{2\pi e^{1/3} \log(e)} \frac{\epsilon}{\left (\frac{2\beta\|u_0\|m}{\cos (\beta \pi /2)} W_0\left ( \left (\frac{8B_\beta\|u_0\|}{\epsilon} \right )^{1/\beta} \frac{\cos(\beta \pi /2)}{2\beta} \right ) \right )^{1/\beta}}\right )}.
\end{align}
Since each call to the LCHS block encoding uses two calls to the state preparation oracles (one for $O_{c,l}$ and one for $O_{c,r}$), the total number of calls to these oracles is that given in the theorem statement.
\end{proof}

The above results are developed in a modular way that allows for individual lemmas to be updated or used independently.
Yet, for pedagogical purposes, it may be helpful to compile these results into a self-contained presentation, allowing the reader to more-easily follow the overall flow of the analysis.
Below we provide such a self-contained analysis of how the total error can be accounted for from the errors in the constituent parts of the algorithm.

\begin{theorem}[Self-contained error analysis] \label{thm:error}
    Given the solution to Equation \ref{eq:ode} $u(t) = e^{-At}u_0$, such that $\|A\| \leq \alpha $, let $\widetilde{U}_v$ be a $(\gamma_c, [m_{c} + m_Q, m_{c} + m_Q], \epsilon_{LCHS})$ block encoding of the LCHS propagator $\cal{L}$, and ${U}_0$ be the initial state preparation unitary such that $v(t) := \mathcal{L}U_0 \ket{0}$, and let $\|c\|_1 \leq \gamma_C$. If the imperfect output vector from the quantum computer is written $\tilde{v}(t)$, then the total error has the bound: 
    \begin{align}
        \|u(t) - \tilde{v}(t)\| \leq  \epsilon_v + (\|u(t)\| + \epsilon_v) \left ( \epsilon_{\mathrm{AA}} + \frac{{9\epsilon_{\mathrm{LCHS}}}}{2\|c\|_1 \|u_0\| } C_{\mathrm{LCHS}}(\Delta, \epsilon_{\mathrm{AA}})\right ),
    \end{align}
    where 
    \begin{equation}
         \|u_0\| \biggr( \|c\|_1 \Bigr ( t\alpha_A\sqrt{(1+K^2)}\epsilon_A + 2tM\sqrt{(1+K^2)}\alpha_A \epsilon_R + \epsilon_{\mathrm{exp}} \Bigr) +\epsilon_{c} \biggr) + \gamma_C \epsilon_0  \leq \epsilon_{\mathrm{LCHS}} ,
    \end{equation}
    and 
    \begin{equation}
        \epsilon_v \leq  \|u_0\|({\epsilon_{\mathrm{trunc}}} + {\epsilon_{\mathrm{disc}}} ),
    \end{equation}
and $C_{\mathrm{LCHS}}(\Delta, \epsilon_{\mathrm{AA}})$, defined in Equation \ref{eq:aa_cost}, represents the number of calls to the LCHS block encoding made by amplitude amplification.
\end{theorem}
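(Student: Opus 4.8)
The plan is to decompose the total error into the classical LCHS truncation/discretization error and the quantum implementation error, and then strip the quantum error apart one layer at a time using the lemmas already proved; in effect this theorem repackages the error accounting inside the proof of Theorem~\ref{thm:amplified_prep} into a single self-contained statement. I would first write, with $\mathcal{L}u_0$ the infinite-precision output of the discretized LCHS propagator,
\begin{equation}
\|u(t) - \tilde{v}(t)\| \;\leq\; \|u(t) - \mathcal{L}u_0\| \;+\; \|\mathcal{L}u_0 - \tilde{v}(t)\| \;\equiv\; \epsilon_v + \epsilon_{\mathrm{qc}}.
\end{equation}
The first term is exactly the quantity controlled by Lemma~\ref{lem:class_error}, which gives $\epsilon_v \leq \|u_0\|(\epsilon_{\mathrm{trunc}} + \epsilon_{\mathrm{disc}})$ once $K$ and $M$ are fixed as in that lemma; this is the third displayed inequality of the theorem.

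For $\epsilon_{\mathrm{qc}}$ I would invoke Lemma~\ref{lem:lchs_be}, applied with its internal parameter $\epsilon_v$ set to $0$ since the LCHS-approximation error is already booked in Step~1. That lemma produces a $(\|c\|_1\|u_0\|,\cdot,\epsilon_{\mathrm{LCHS}})$ block encoding of $\mathcal{L}u_0$ whose post-selection amplitude is lower bounded by $(\|u(t)\| - \epsilon_{\mathrm{LCHS}})/(\|c\|_1\|u_0\|)$. Feeding this into robust fixed-point oblivious amplitude amplification (Lemma~\ref{lem:robust_fp_obl_amp}), with gap $\Delta = 2(\|u(t)\|-\epsilon_{\mathrm{LCHS}})/(\|c\|_1\|u_0\|)$ and amplification precision $\epsilon_{\mathrm{AA}}$, gives a block encoding of $\mathcal{L}u_0$ with subnormalization $\|\mathcal{L}u_0\|$ and error
\begin{equation}
\epsilon_{\mathrm{qc}} \;\leq\; \|\mathcal{L}u_0\|\left(\epsilon_{\mathrm{AA}} + \frac{9\,C_{\mathrm{LCHS}}(\Delta,\epsilon_{\mathrm{AA}})\,\epsilon_{\mathrm{LCHS}}}{2\|c\|_1\|u_0\|}\right),
\end{equation}
at a cost of $C_{\mathrm{LCHS}}(\Delta,\epsilon_{\mathrm{AA}})$ calls to the LCHS block encoding. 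Bounding $\|\mathcal{L}u_0\| \leq \|u(t)\| + \epsilon_v$ by the reverse triangle inequality and substituting back into Step~1 produces the master inequality claimed in the statement.

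It remains to express $\epsilon_{\mathrm{LCHS}}$ through the constituent error parameters. Lemma~\ref{lem:lchs_be} requires $\|u_0\|(\epsilon_{c} + \|c\|_1\epsilon_Q) + \|c\|_1\epsilon_0 \leq \epsilon_{\mathrm{LCHS}}$, and Lemma~\ref{lem:lcu_select} requires $\epsilon_{\mathrm{exp}} + \sqrt{1+K^2}\,t\,\epsilon_A + 2Mt\sqrt{1+K^2}\,\alpha_A\,\epsilon_R \leq \epsilon_Q$, where I use $k_{\max}\leq K$ from Lemma~\ref{lem:SH}. Substituting the second chain into the first, and replacing $\|c\|_1$ by the upper bound $\gamma_C$ in the $\epsilon_0$ contribution, yields precisely the first displayed inequality of the theorem.

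The delicate point is the amplitude-amplification layer: the factor $\tfrac{9}{2}C_{\mathrm{LCHS}}$ multiplying $\epsilon_{\mathrm{LCHS}}$ comes from the robustness analysis of fixed-point amplitude amplification, which is only valid when the block-encoding error is small relative to the subnormalization, i.e.\ $\epsilon_{\mathrm{LCHS}}/(\|c\|_1\|u_0\|) \leq 1/12$. I would verify, exactly as in the proof of Theorem~\ref{thm:amplified_prep}, that this is forced by the inherited hypothesis $\epsilon \leq 3C_{\mathrm{LCHS}}(\Delta,\epsilon_{\mathrm{AA}})(\|u(t)\|+\epsilon_v)/8$ together with $\epsilon_{\mathrm{qc}} \leq \epsilon$, and I would keep the gap parameter $\Delta$ consistent between the overlap bound emerging from Lemma~\ref{lem:lchs_be} and the one supplied to Lemma~\ref{lem:robust_fp_obl_amp}. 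Everything else is routine bookkeeping with triangle inequalities.
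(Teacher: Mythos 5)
Your proposal is correct and follows essentially the same route as the paper: the same split $\|u(t)-\tilde v(t)\|\le\epsilon_v+\epsilon_{\mathrm{qc}}$, the same use of Lemma~\ref{lem:robust_fp_obl_amp} with $\|\mathcal{L}u_0\|\le\|u(t)\|+\epsilon_v$ and $\alpha_{v(t)}=\|c\|_1\|u_0\|$, and the same assembly of $\epsilon_{\mathrm{LCHS}}$ from the subroutine errors. The only difference is presentational: since the theorem is meant to be self-contained, the paper re-derives the $\epsilon_{\mathrm{LCHS}}$ and $\epsilon_Q$ bounds by explicit triangle-inequality chains rather than citing Lemmas~\ref{lem:lchs_be} and~\ref{lem:lcu_select} as you do, but the content is identical.
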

\begin{proof}
We begin by considering a quantum circuit $\widetilde{U}_{AA}$ which is a block encoding of the amplified approximate solution vector $\tilde{v}(t)$ described in Lemma \ref{lem:robust_fp_obl_amp}. Noting that $e^{-At}u_0 = \int_{\mathbb{R}}g(k)U(t,k)\dd k u_0 = u(t)$, and $v(t)$ is obtained via applying the block encoding of the approximate LCHS propagator and approximate state prep unitaries $\widetilde{U}_v$ and $\widetilde{U}_0$, we bound the following:
{\allowdisplaybreaks
\begin{align}
    \|u(t) - \tilde{v}(t)\| \equiv& \left \|\int_{\mathbb{R}}g(k)U(t,k)\dd k u_0  - \|v(t)\| \bra{0}^{m_Q+m_{c}} \widetilde{U}_{AA} \ket{0}^{m_Q+m_{c}} \ket{0}^{\ceil{\log(d)}}\right \| \\
    & \leq \left \|\int_{\mathbb{R}}g(k)U(t,k)\dd k u_0 - v(t) \right \| + \left \|v(t) -  \|v(t)\| \bra{0}^{m_Q+m_{c}} \widetilde{U}_{AA} \ket{0}^{m_Q+m_{c}} \ket{0}^{\ceil{\log(d)}} \right \|
\end{align}}
Now, the second term is exactly in the form of Lemma \ref{lem:robust_fp_obl_amp}, so we can use this bound directly to obtain 

%%%%%%%%%%%%%
\begin{align}
    \|u(t) - \tilde{v}(t)\| &\leq \underbrace{\left \|\int_{\mathbb{R}}g(k)U(t,k)\dd k u_0 - v(t) \right \|}_{:= \epsilon_v} +  \|v(t)\| \left ( \epsilon_{\mathrm{AA}} + \frac{9\epsilon_{\mathrm{LCHS}}}{2\alpha_{v(t)}} C_{\mathrm{LCHS}}(\Delta, \epsilon_{\mathrm{AA}})\right ) \\
    &= \epsilon_v + (\|u(t)\| + \epsilon_v) \left ( \epsilon_{\mathrm{AA}} + \frac{{9\epsilon_{\mathrm{LCHS}}}}{2\|c\|_1 \|u_0\| } C_{\mathrm{LCHS}}(\Delta, \epsilon_{\mathrm{AA}})\right ),
\end{align}

where $C_{\mathrm{LCHS}}(\Delta, \epsilon_{\mathrm{AA}})$ is the number of calls made by amplitude amplification (see Lemma \ref{lem:robust_fp_obl_amp}) to the LCHS block encoding (see Lemma \ref{lem:lchs_be}), in the second line, we upper bound $\|v(t)\| \leq \epsilon_v + \|u(t)\|$, and we replace $\alpha_{v(t)} = \|c\|_1 \|u_0\|$. Here $\epsilon_{\mathrm{LCHS}}$ contains all the errors in imperfectly implementing the algorithm (ie. due to Hamiltonian simulation, imperfect state preparation pairs, and initial state prep). It thus remains to bound these two errors. Let us begin with $\epsilon_v$, noting that $v(t) := \sum_j c_j U(t,k_j) u_0$:
\begin{align}
    \epsilon_v \equiv & \left | \left |\int_\mathbb{R} g(k) U(k) \dd k \; u_0  - \sum_j c_j U_j u_0   \right | \right | \\
    \leq & \underbrace{\left | \left | \int_{-K}^{K} g(k) U(k) \dd k - \sum_j c_j U_j \right | \right |}_{\epsilon_{\mathrm{disc}}} \|u_0\|  \\
    & + \underbrace{\left | \left | \int_\mathbb{R} g(k) U(k) \dd k - \int_{-K}^K g(k) U(k) \dd k \right | \right |}_{\epsilon_{\mathrm{trunc}}} \|u_0\| \\ 
    \leq & \|u_0\|({\epsilon_{\mathrm{trunc}}} + {\epsilon_{\mathrm{disc}}} ).
\end{align}
Now it remains to bound $\epsilon_{LCHS}$. If we recall the definition of the block encoding $\widetilde{U}_v$ from before, and the imperfect state preparation  unitary $\widetilde{U}_0$, then $\epsilon_{LCHS}$ is understood as the upper bound on the block encoding error $\|v(t) - \tilde{v}(t)\| \leq \epsilon_{\mathrm{LCHS}}$. We proceed by bounding this block encoding error 
%\allowdisplaybreaks
\begin{align}
     \|v(t) - \tilde{v}(t)\| =& \left \|\sum_j c_j U_j u_0 - (\bra{0}^{m_{c} + m_Q} \otimes \mathbb{I}) \widetilde{U}_v (\ket{0}^{m_{c} + m_Q} \otimes \mathbb{I}) \widetilde{U}_0 \ket{0}^{m_0}\right \|
    \\
    \leq & \Biggr \| \|c\|_1 \bra{0}^{m_{c}}(O_{c,l}^\dagger \otimes \mathbb{I}_Q ) \mathcal{Q} (O_{c,r} \otimes \mathbb{I}_Q )\ket{0}^{m_{c}} -  \|c\|_1 \bra{0}^{m_{c}}(O_{c,l}^\dagger \otimes \mathbb{I}_Q ) \widetilde{\mathcal{Q}}(\widetilde{A}, t) (O_{c,r} \otimes \mathbb{I}_Q )\ket{0}^{m_{c}} \Biggr \| \|u_0 \| \\
    & + \Biggr \| \|c\|_1 \bra{0}^{m_{c}}(O_{c,l}^\dagger \otimes \mathbb{I}_Q ) \widetilde{\mathcal{Q}}(\widetilde{A}, t) (O_{c,r} \otimes \mathbb{I}_Q )\ket{0}^{m_{c}} u_0 - (\bra{0}^{m_{c} + m_Q} \otimes \mathbb{I}) \widetilde{U}_v (\ket{0}^{m_{c} + m_Q} \otimes \mathbb{I}) \widetilde{U}_0 \ket{0}^{m_0}\Biggr \| \\
    \leq & \|c\|_1 \underbrace{\left \| \mathcal{Q}(A,t) - \widetilde{Q}(\widetilde{A}, t) \right \|}_{\epsilon_{\mathrm{Q}}} \|u_0\| \\
    & + \Biggr \| \|c\|_1 \bra{0}^{m_{c}}(O_{c,l}^\dagger \otimes \mathbb{I}_Q ) \widetilde{\mathcal{Q}}(\widetilde{A}, t) (O_{c,r} \otimes \mathbb{I}_Q )\ket{0}^{m_{c}} u_0 - (\bra{0}^{m_{c} + m_Q} \otimes \mathbb{I}) \widetilde{U}_v (\ket{0}^{m_{c} + m_Q} \otimes \mathbb{I}) \widetilde{U}_0 \ket{0}^{m_0}\Biggr \| ,
\end{align}
\noindent where in the last line we utilize the fact that we can use sub-multiplicativity to pull out $\bra{0}^{m_{c}}(O_{c,l}^\dagger \otimes \mathbb{I}_Q)$ on the left, which has norm one, and perform the same on the right with the corresponding oracle. Now, we bound ${\epsilon_Q}$:
\begin{align}
    \| \mathcal{Q}(t, A) - \widetilde{\mathcal{Q}}(t, \widetilde{A}) \| & \leq \|\mathcal{Q}(t, A) - {\mathcal{Q}}(t, \widetilde{A})\| + \|\mathcal{Q}(t, \widetilde{A}) - \widetilde{\mathcal{Q}}(t, \widetilde{A}) \| \\
    & = \|e^{-it\cal{S}} - e^{-it\widetilde{\cal{S}}} \| + \epsilon_{\mathrm{exp}} \\
    & \leq t \|\cal{S} - \widetilde{\cal{S}}\| +\epsilon_{\mathrm{exp}} \\
    &= t\alpha_A\sqrt{2(1+k^2_{\max})}\epsilon_A + tMk_{\max}\alpha_A \epsilon_R + \epsilon_{\mathrm{exp}} := \epsilon_Q,
\end{align} 
where we use $\widetilde{\cal{Q}}$ to mean the approximate unitary produced by qubitization, and $\widetilde{\cal{S}}$ to mean the imperfect block encoding of $\cal{S}$. The error $\epsilon_{\mathrm{exp}}$ is therefore the error originating from the imperfect unitary produced by qubitization, and therefore dictates the query cost of qubitization in Lemma \ref{lem:lcu_select}.  In the third line we use the inequality $\| e^{-iHt} - e^{-i\widetilde{H}t}\| \leq t\|H - \widetilde{H}\|$ from Ref. \cite{chakraborty2018power}, and then recognize that this is the block encoding error of $\cal{S}$ in the fourth line. Plugging this back in, we resume to bounding of the total algorithmic error:

{\allowdisplaybreaks
\begin{align}
    \|v(t) - \tilde{v}(t)\| \leq & \|u_0\|\|c\|_1 \epsilon_Q \\
    &+ \Biggr \| \|c\|_1 \bra{0}^{m_{c}}(O_{c,l}^\dagger \otimes \mathbb{I}_Q ) \widetilde{\mathcal{Q}}(\widetilde{A}, t) (O_{c,r} \otimes \mathbb{I}_Q )\ket{0}^{m_{c}} u_0 \\
    &- (\bra{0}^{m_{c} + m_Q} \otimes \mathbb{I}) \widetilde{U}_v (\ket{0}^{m_{c} + m_Q} \otimes \mathbb{I}) \widetilde{U}_0 \ket{0}^{m_0}\Biggr \| \\
    =& \|u_0\| \|c\|_1 \epsilon_Q
    + \Biggr \| \sum_j c_j \widetilde{U}(t, k_j) u_0 - \gamma_c \left (\sum_j \frac{\tilde{c}_j}{\|\tilde{c}\|_1} \widetilde{U}(t, k_j)  \right ) \widetilde{U}_0 \ket{0}^{m_0}\Biggr \| \\
    \leq & \|u_0\| \|c\|_1 \epsilon_Q + \Biggr \| \sum_j c_j \widetilde{U}(t, k_j) u_0 - \gamma_c \left (\sum_j \frac{\tilde{c}_j}{\|\tilde{c}\|_1} \widetilde{U}(t, k_j)  \right ) u_0 \Biggr \| \\
    &+ \Biggr \| \gamma_c \left (\sum_j \frac{\tilde{c}_j}{\|\tilde{c}\|_1} \widetilde{U}(t, k_j)  \right ) u_0  - \gamma_c \left (\sum_j \frac{\tilde{c}_j}{\|\tilde{c}\|_1} \widetilde{U}(t, k_j)  \right ) \widetilde{U}_0 \ket{0}^{m_0}\Biggr \| \\
    \leq & \|u_0\| \|c\|_1 {\epsilon_Q} + \underbrace{\sum_j \Biggr |  \left (c_j  - \gamma_c \frac{\tilde{c}_j}{\|\tilde{c}\|_1}   \right )\Biggr |}_{\epsilon_{c}} \|u_0\| + \gamma_c \underbrace{\|u_0 - \widetilde{U}_0\ket{0}^{m_0} \|}_{\epsilon_0} \\
    \leq & \|u_0\| \biggr( \|c\|_1 \Bigr ( t\alpha_A\sqrt{(1+K^2)}\epsilon_A + 2tM\sqrt{(1+K^2)}\alpha_A \epsilon_R + \epsilon_{\mathrm{exp}} \Bigr) +\epsilon_{c} \biggr) + \gamma_C \epsilon_0 \leq \epsilon_{\mathrm{LCHS}}
\end{align}}
Plugging in the values of the $\epsilon_v$ and $\epsilon_{\mathrm{LCHS}}$ completes the proof.
\end{proof}
Above the error in the imperfect state preparation pair $(\widetilde{O}_{c,l}, \widetilde{O}_{c,r})$ is bounded using the worst case error in the coefficients from either oracle to avoid cumbersome notation. For example, we considered individual errors from each of the pairs we would require including $(\tilde{c}_{j,l}, \tilde{c}_{j,r})$, and the block encoding parameter in the bound is replaced with $\sqrt{\gamma_{c,l}\gamma_{c,r}}$. However, this notation is unnecessarily cumbersome for our purposes, and in practice, each coefficient would likely be subject to the same decimal rounding error anyway. \\

Notice that the robust fixed point amplitude amplification algorithm given in Lemma  \ref{lem:robust_fp_obl_amp} gives an implicit relationship for the error. Intuitively this is because the larger the $\epsilon_{\mathrm{LCHS}}$ of the incoming state, the more total error that will accumulate as one applies Grover iterates to meet the target amplitude amplification precision $\epsilon_{\mathrm{AA}}$. Unfortunately, this also complicates the study of the total error in the end to end implementation. In Section \ref{sec:numerics}, we provide strategies to study the complexity of the simulation in presence of non-linear constraints.

\section{Constant Factor Query Counts} \label{sec:numerics}
In this Section we provide a numerical analysis of the constant factor bounds presented in this paper. Throughout the section we take $\epsilon_A = \epsilon_0 = \epsilon_R = 0$, meaning that we assume access to a perfect block encoding, initial state prep oracle, and perfect rotation gates. The reason for these choices is that in practice, a choice of these errors are motivated by some lower-level details of the implementation that do not have a significant bearing on the higher-level algorithmic comparison that we are aiming for.
With these choices, the result of Theorem \ref{thm:error} now states:
\begin{equation}\label{eq:constraint}
    \epsilon_v + (\|u(t)\| + \epsilon_v) \left ( \epsilon_{\mathrm{AA}} + \frac{9}{2} \frac{{\epsilon_{\mathrm{LCHS}}}}{\|c\|_1 \|u_0\|} C_{\mathrm{LCHS}}(\Delta, \epsilon_{\mathrm{AA}})\right ) \leq \epsilon,
\end{equation}
where the LCHS error simplifies to
\begin{equation}
     \epsilon_{LCHS} = \|c\|_1 \|u_0\|  \epsilon_{\mathrm{exp}}, 
\end{equation}
yielding the overall constraint:
\begin{equation}
    \epsilon_v + (\|u(t)\| + \epsilon_v) \left ( \epsilon_{\mathrm{AA}} + \frac{9}{2} {\epsilon_{\mathrm{exp}}} C_{\mathrm{LCHS}}(\Delta, \epsilon_{\mathrm{AA}})\right ) \leq \epsilon.
\end{equation}

In Appendix \ref{app:polynomial_degree_bound}, we provide an explicit expression for $C_{\mathrm{LCHS}}(\Delta, \epsilon_{\mathrm{AA}})$ in Theorem \ref{thm:sgn_approx}. The goal of this Section is therefore to minimize the cost of the algorithm, defined as the number of queries to $U_A$, where the cost follows the form of Theorem \ref{thm:amplified_prep}:
\begin{equation} \label{eq:cost_A}
C_A=C_{_{LCHS}}(\Delta,\epsilon_{\mathrm{AA}})\ceil{e \sqrt{1+K^2}\alpha_A t + 2\ln \left (\frac{2\eta}{\epsilon_{\mathrm{exp}}}\right ) }    
\end{equation}

Finding the minimum of $C_A$ is therefore a constrained optimization problem. Surprisingly, we find in this section that the constraints can be lifted by exactly solving $\ref{eq:constraint}$, again, in terms of Lambert-W functions. Through lifting the constraints, we are then able to perform a simple optimization of the cost function in the subsequent subsection.

\subsection{Lifting Optimization Constraints}
Without further approximation, the derived formulas for the cost and error in Equations \ref{eq:cost_A} and \ref{eq:constraint} can be not only difficult to interpret analytically, but also hard to deal with numerically given that they admit a constrained optimization problem.  Note that it is not straightforward to solve this expression either due to the fact that, in accordance with the construction presented in Ref. \cite{martyn2021grand}, $\Delta$ is a lower bound on the amplitude of the quantum state outputted by the LCHS algorithm, meaning 
\begin{equation} \label{eq:Delta}
    \Delta = \frac{2(\|u(t)\|-(\|c_1\|  \|u_0\| \epsilon_{\mathrm{exp}}) - \epsilon_v))}{\|u_0\|\|c\|_1}
\end{equation}
which is clearly a function of various error parameters. Here, we show that the constraints can be lifted by solving these equations using two different strategies. The first approach is to directly solve this equation for $\epsilon_{\mathrm{AA}}$ again by using a Lambert-W function as in the sections before. To accomplish this, we must manipulate $C_{_{LCHS}}$ such that it takes the functional form of $C_{_{LCHS}}(\Delta,\epsilon_{\mathrm{AA}}) = \mathrm{poly}(\Delta^{-1})\log(\epsilon^{-1}_{\mathrm{AA}})$, which is achieved via an upper bound in Proposition \ref{prop:degree_upper} 
\begin{equation}
    C_{\mathrm{LCHS}} \leq \frac{8e}{\Delta}\sqrt{1+\frac{1}{e}}\ln\left (\frac{64\sqrt{2}}{3 \sqrt{\pi} \Delta \epsilon} \right )
\end{equation}
Saturating this equality then yields a solvable expression for $\epsilon_{\mathrm{AA}}$ which we state below:
\begin{equation}
    \epsilon_{\mathrm{AA}} = -\frac{16e\sqrt{1+1/e}}{\Delta}\epsilon_{\mathrm{exp}} W_{-1}\left [\frac{-16\sqrt{2}}{ 27e\sqrt{\pi + \pi/e}} \frac{1}{\epsilon_{\mathrm{exp}}}\exp\left(-\frac{(\epsilon - \epsilon_{v})}{(\|u(t)\| + \epsilon_{v})\epsilon_{\mathrm{exp}}} \frac{\Delta}{36e\sqrt{1+1/e}} \right ) \right ].
\end{equation}
This expression can be obtained in the same fashion as was shown in previous proofs with Lambert-W functions. Now, if we write the above expression like $\epsilon_{\mathrm{AA}} = f(\epsilon, \epsilon_{LCHS})$, then we can rewrite the cost function in terms of the above function such that $C_A(\epsilon_v, \epsilon_{LCHS}, \epsilon_{\mathrm{AA}}) = C_A(\epsilon_v,  \epsilon_{LCHS}, f(\epsilon, \epsilon_{LCHS}))$. By then fixing a choice of total error tolerance $\epsilon$, we have effectively lifted the constraints and can optimize the cost function with a wide variety of optimization algorithms. We label this approach $\mathtt{Sol(\epsilon_{\mathrm{AA}})}$, since we have found a solution to Equation \ref{eq:constraint} for $\epsilon_{\mathrm{AA}}$ to lift the constraints. Note that the optimizer still must be configured to ensure that the required properties of $W_{-1}(x)$ are satisfied, i.e. $x\in(-1/e, 0)$ to ensure $W_{-1}(x) \in \mathbb{R}$ and uniquely defined. \\ 

Another approach is to further lower-bound $\Delta$ (thus upper-bounding $C_{\mathrm{LCHS}}$) to remove its dependence on other error parameters we may wish to solve for. Using the fact that $\Delta \geq \frac{2(\|u(t)\|-\epsilon)}{\|u(0)\|\|c\|_1}$, by using the fact that the total error $\epsilon$ is greater than the error terms that subtract the numerator in Equation \ref{eq:Delta}. This now allows us to solve the constraint in Equation \ref{eq:constraint} for 
\begin{equation}
    \epsilon_{\mathrm{exp}} = \left [\frac{\epsilon-\epsilon_v}{\|u(t) \| +\epsilon_v} - \epsilon_{\mathrm{AA}} \right ]\frac{2}{9C_{\mathrm{LCHS}}}, 
\end{equation}
where in this case $C_{\mathrm{LCHS}}$ need not be simplified using Proposition \ref{prop:degree_upper}. This method is simpler, however, it creates a self-referential inequality by introducing the dependence $\Delta(\epsilon)$. For our purposes, we can still obtain a function for $\epsilon_{\mathrm{exp}}$ and configure our optimizer so so as to not explore parameter regimes where this becomes problemat(i.e.i.e. choosing negative $\epsilon_v$). This method is labeled $\mathtt{Sol(\epsilon_{\mathrm{exp}})}$. 

\subsection{Numerical Results} \label{subsec:numeric_results}
Here we perform numerical analysis of the LCHS cost function derived in this paper (Equation \ref{eq:cost_A}), where cost is defined as the queries to $U_A$. We take two different approaches to calculating this cost; one sets the error of each imperfect subroutine to the same value, and slowly tunes this parameter until Equation \ref{eq:constraint} is satisfied, while the other performs an optimization of cost function to find an optimal error balancing. In the latter case, both of the methods $\mathtt{Sol(\epsilon_{\mathrm{AA}})}$ and $\mathtt{Sol(\epsilon_{\mathrm{exp}})}$ are employed to lift the constraints of the optimization problem, and the results are compared. In the optimized cases, the parameter $\beta \in (0,1)$ is also optimized, whereas in the former approach we choose $\beta=0.75$. This choice is made based on the numerical observations that the optimal $\beta$ should be roughly between $[0.7, 0.8]$, which agree with the conclusions made by \cite{an2023quantum}. To perform the optimization, we utilize the Gradient Boosted Regression Trees (GBRT) from $\mathtt{SciKit \: Optimize}$ \cite{pedregosa2011scikit}. The reasons for this choice are that the algorithm can optimize arbitrary cost functions, the user can easily set bounds on parameters for optimization, and the algorithm has seen success in optimizing Hamiltonian simulation algorithms in the past in Ref. \cite{pocrnic2024composite}. Setting bounds on possible parameter values is especially important given the restrictions on $\beta$, as well as the Lambert-W function used in $\mathtt{Sol(\epsilon_{\mathrm{AA}})}$. 

We compare our results to Ref. \cite{jennings2024cost}. We compute their cost formula for the more general case of \textit{unstable dynamics}, as perviously discussed in Section \ref{sec:introduction}. In this case, no assumptions are made on the log-norm of the generator, preventing any fast-forwardability. 
We generated the RLS data using the following \href{https://www.desmos.com/calculator/wwryu4lhnx}{Desmos module}.
This interactive module also allows for exploration of the LCHS costs.
We take measures to make this comparison as fair as possible. For instance, we consider all the same errors present in that analysis up to errors that do not apply to their method (such as the integral truncation error). For example, their analysis is done under the assumptions that the block encoding synthesis and state preparations are done perfectly, as is the approach taken in this section. Without other well-motivated numerical choices, we also choose $\|u_0\| = \|u(t)\| = 1$ for each algorithm. The final choice to make is that of the $C_{\max}$ parameter that appears in Equation 13 of \cite{jennings2024cost}. In Appendix \ref{sec:Cmax} we provide a discussion on how this is done for a fair comparison. The final parameters to fix are $\alpha = \|L\| = 1$, as they have the same scaling as $t$, and we set the total error to $\epsilon=10^{-10}$ as in Ref. \cite{jennings2024cost}. The results of these numerics are shown in Figure \ref{fig:cost_plot}. 
\FloatBarrier
\begin{figure}[htbp!] 
    \centering\includegraphics[width=0.85\textwidth]{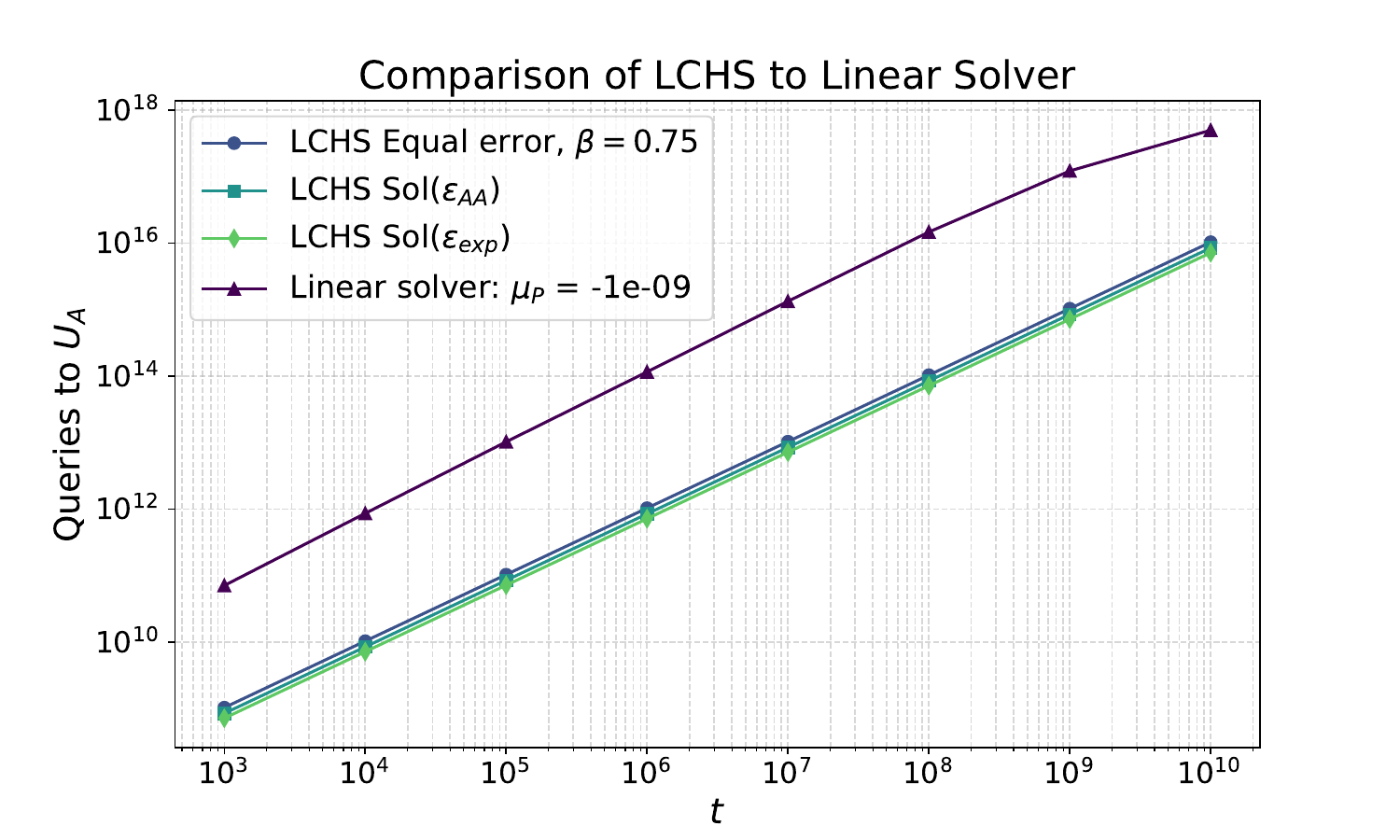}
    \caption{A comparison of the LCHS constant factor bounds and the randomization linear solver (RLS) method from Ref \cite{jennings2024cost}. The numerics are conducted such that the total error satisfies $\epsilon\leq 10^{-10}$, and the block encoding constant is $\alpha=1$, though we expect comparable speedups to persist when varying these parameters. For the equal-error data, the errors for each subroutine are evenly distributed such that the total error is less than $\epsilon\leq 10^{-10}$, and for the other cases the optimizer finds a more efficient distribution. In the optimized cases it is found that $\beta \in [0.7403, 0.8127]$, yielding $\|c\|_1 \in [1.385,1.585]$. We note that, although the randomization linear solver cost data appears to become sub-linear, the curve actually eventually returns to being linear, though with a smaller intercept; the location of this transition is governed by the $\mu_P$ parameter in \cite{jennings2024cost}. This is likely a numerical artifact given that we are limited in our accuracy regarding how closely $\mu_P$ can approach zero. Most of the data on this plot can be explored interactively at the following \href{https://www.desmos.com/calculator/wwryu4lhnx}{link}.} \label{fig:cost_plot}
\end{figure} 
In terms of the comparison, we find that the LCHS algorithm, based on the formulas derived in this paper, requires significantly less queries to $U_A$ compared to the randomization linear solver (RLS). The naive equal error budgeting LCHS implementation achieves an advantage over RLS that is nearly two orders in magnitude, whereas for the optimized version, the improvement is even greater. Due to our high demand on the precision and output success, the cost of amplitude amplification is roughly accounting for 2 order of magnitude in the cost, which can be further decreased if the user is willing to take on some probability of failure. This, however, does not affect the magnitude of the speedup, as the same amplitude amplification routine is being used for the RLS data. Note that in the less general case where the generator $A$ is strictly positive definite, it is possible for RLS to be fast-forwarded, achieving $\sqrt{t}$ scaling. In this case, as seen in Figure 1 of Ref. \cite{jennings2024cost}, the approaches will become more comparable in cost with the advantage of LCHS asymptotically shrinking, and eventually disappearing when the cost of state preparation is not taken into account. However, if such an analogous speedup is achieved for LCHS, then the constant factor bounds for LCHS are expected to be strictly better in all cases, unless the analysis of the RLS can be significantly improved. The latter case should by no means be excluded from the realm of possibility, given that the analysis of the quantum linear system algorithm continues to improve, noting the recent improvement by Dalzell \cite{dalzell2024shortcut}. It should also be noted that while our bounds give significantly smaller $U_A$ query estimates, they do not imply that LCHS will necessarily perform strictly better in practice. However, the bounds provide the means for improving the resource estimates of quantum ODE solutions, and encourage better analyses from competing approaches. 

Proceeding, it is also important to factor the cost of state preparation into the total cost of an ODE solution. We make this comparison by considering the time complexity of the state preparation oracle $T(U_0)$ as a fraction of the time complexity of synthesizing the block encoding $T(U_A)$. Then given the expressions for the number of calls to initial state preparation in each algorithm, we can compare how the total cost scales with this fraction. What we wish to then plot is the following:

\begin{equation}
    \ell = \frac{C^{\mathrm{Linear}}_A + \chi C^{\mathrm{Linear}}_0}{C^{\mathrm{LCHS}}_A + \chi C^{\mathrm{LCHS}}_0}, 
\end{equation}
where $\ell$ is effectively the factor of speedup of LCHS over RLS as a function of $\chi:= T(U_0)/ T(U_A)$, the ratio between the time cost of initial state prep to that of the block encoding synthesis. We plot these results in Figure \ref{fig:speedup_plot}. It is sufficient to choose a general notion of time complexity $T$ to make our point here, whereas in practice, $T$ can serve as a measure of $T$-gate count or surface code cycles \cite{beverland2022assessing}, active volume \cite{litinski2022active} or any circuit metric that increases with the number of calls to the relevant sub-circuits. In addition, we can also study the figure of merit $\ell(t)$ as a function of the simulation time. For means of simple comparison, here we consider the case where the cost of state preparation is negligible $\chi \to 0$. We plot these results in Figure \ref{fig:ratio_plot} and find that the advantage of LCHS over the randomized linear solver can be approximately quantified as $\ell(t) = 21.28\log_{10}(\alpha t) + 35.41$ with fixed $\epsilon = 10^{-10}$. This indicates that in Figure \ref{fig:speedup_plot}, not only is there a constant factor advantage, but the slope of the LCHS cost is smaller by a logarithmic factor (given the plot is log-scale). Therefore, the speedup is expected to slightly grow with time in this fixed-parameter regime. 

\begin{figure}[htbp!] 
    \centering
    \begin{subfigure}[b]{0.49\textwidth}
        \includegraphics[width=1\textwidth]{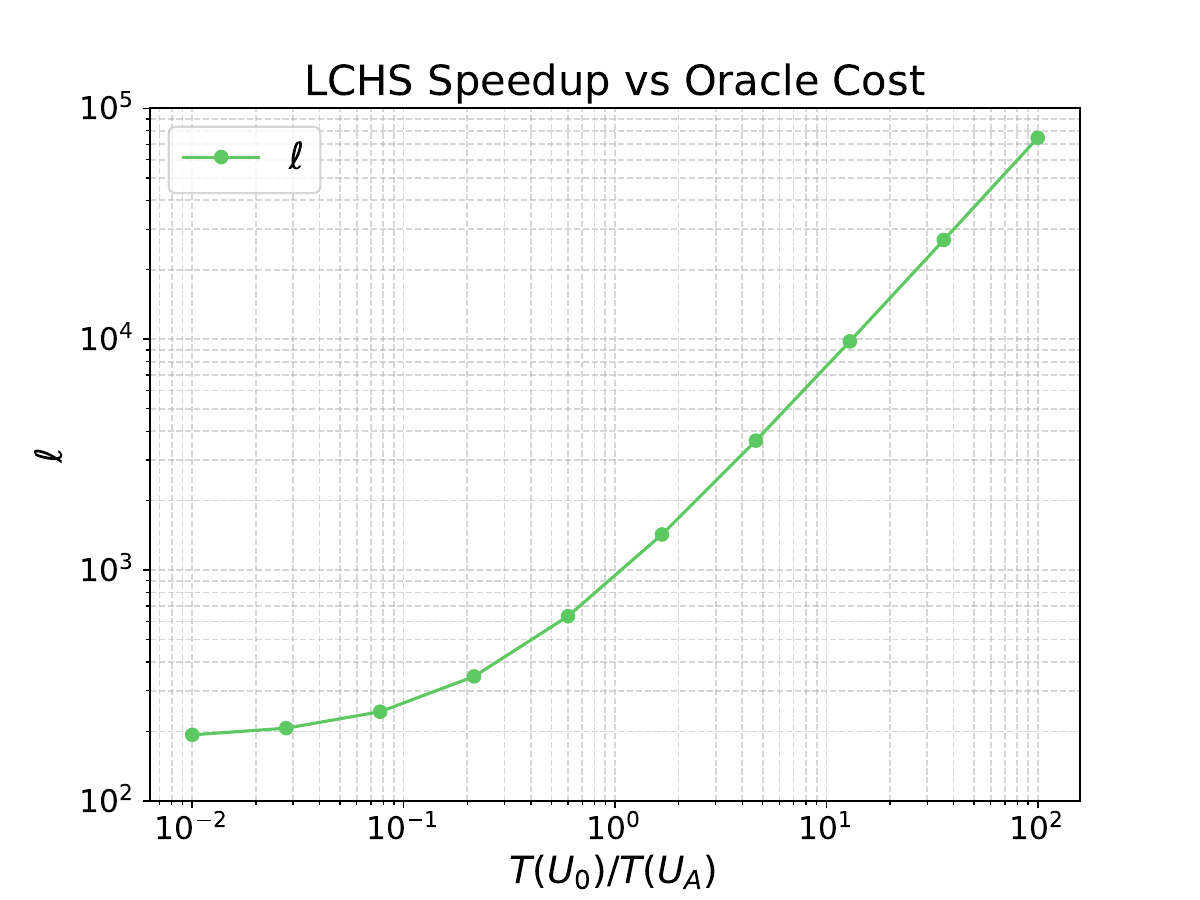}
        \caption{}\label{fig:speedup_plot}
    \end{subfigure}
    \begin{subfigure}[b]{0.49\textwidth}
        \includegraphics[width=1\textwidth]{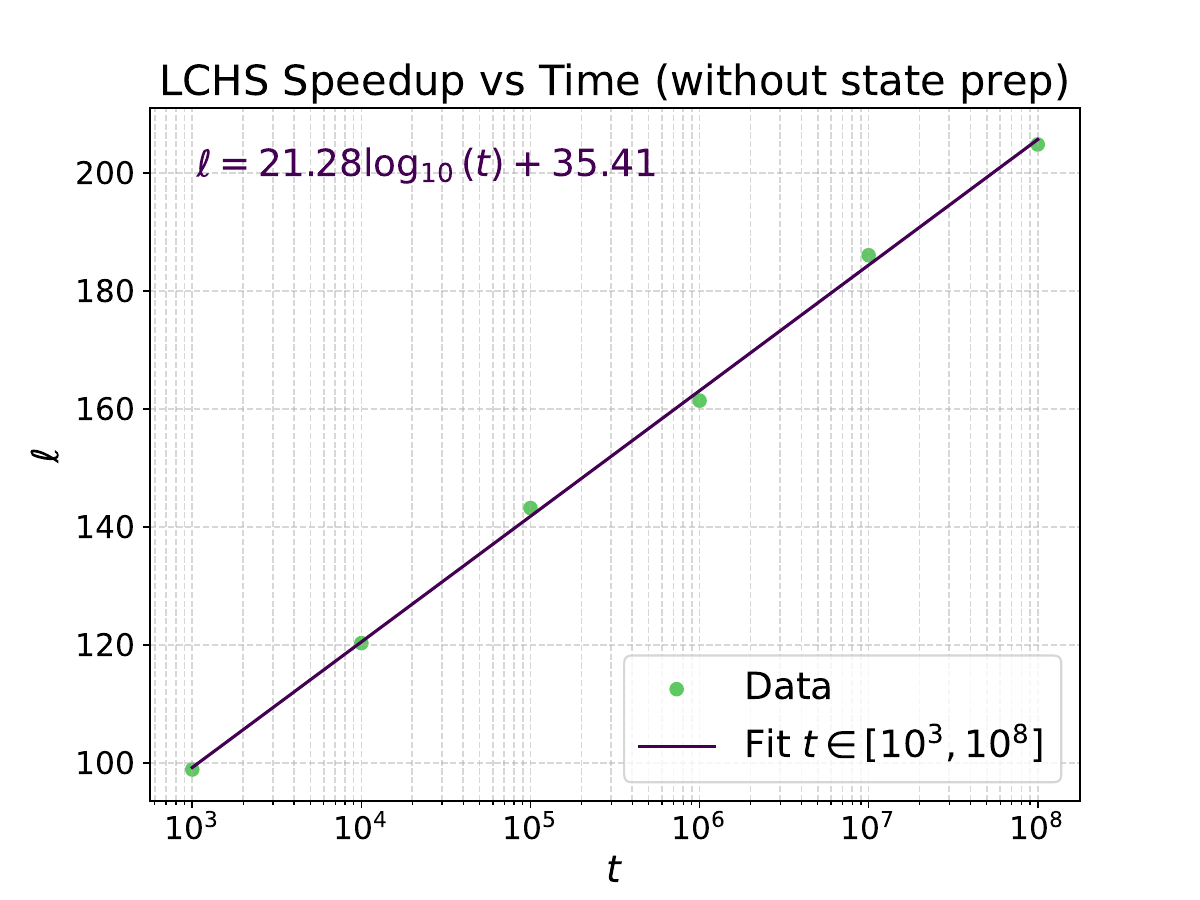}
        \caption{}\label{fig:ratio_plot}
    \end{subfigure}
    \caption{\textit{The Speedup of LCHS over the randomization linear solver (RLS) approach \cite{jennings2024cost} as a function of (a) the ratio of the state preparation oracle $U_0$ time cost to the block encoding $U_A$ time cost, and (b) the simulation time in units of the block encoding constant ($\alpha =1$) given that state preparation is free ($\chi$=0). We perform this fit only over the region where the $\mu_P \to 0^-$ limit is valid, namely where $t\mu_P << 1$. The most performant LCHS simulation data is used to compute $\ell$. As an examples of time cost, one could consider them to be measured in terms of number of $T$ gates \cite{beverland2022assessing} or the active volume \cite{litinski2022active} of the compiled quantum circuits.}} 
\end{figure}

A possible initial reaction to this speedup might be to ask whether LCHS is just a space-time tradeoff, given that we are summing a large number of terms $M$, where the corresponding coefficients are stored in an ancilla register. To counter this claim, for the largest time evaluated numerically ($t=10^{10}$) the number of qubits used by RLS is 51 (calculated using Theorem 12 of Ref. \cite{jennings2024cost}). In comparison, for LCHS the Qubit counts can also be roughly estimated as $\ceil{\log_2M} + m_A$, where $\ceil{\log_2M}$ is the number of qubits needed to represent the ancillary space and $m_A$ is the number required to block encode the generator $A$. For the corresponding point in time ($t=10^{10}$), LCHS approximately requires a 48 qubit ancillary space. Note that this is not a precise count; additional ancilla may be required in the \textit{prepare} operation to handle negative and complex numbers, and in practice, one may wish to introduce additional ancilla to reduce the $T$-depth via QROM \cite{babbush2018encoding}, or utilize dirty qubit state preparation via QROAM \cite{low2024trading, berry2019qubitization}. Depending on the state preparation subroutine, the total qubit count will be altered. This, however, can lead to intricate problem-dependent tradeoffs that require analysis on a case by case basis.

\section{Discussion and Outlook}
\label{sec:discussion_and_outlook}

Our work provides a constant factor analysis of the linear combination of Hamiltonian simulation (LCHS) algorithm \cite{an2023quantum, an2023linear}.
We used this analysis to make a numerical comparison of query costs between LCHS and the randomization linear solver (RLS) algorithm \cite{jennings2024cost}, the previous state-of-the-art.
As shown in Figure \ref{fig:cost_plot}, we found that LCHS approximately yields a 100-200x speedup over RLS, where this advantage seems to increase logarithmically with $\alpha t$ argued by Figure \ref{fig:ratio_plot}.
These findings can enable researchers to make more-optimized estimates of the absolute costs of solving differential equations on a quantum computer.
This is significant because such resource estimates can be used to more-accurately assess the viability of quantum algorithms for differential equation solving.

We discuss the implications of these findings, starting with the practicality of solving differential equations on a quantum computer.
As an example of resource estimation, we can compare the costs for LCHS to the costs of modern quantum computing approaches to solving problems in quantum chemistry.
Taking a state-of-the-art method \cite{low2025fast} for estimating the ground state energy of a molecule, the number of calls to the primary block encoding per circuit is $10^4$ to $10^5$ (i.e. the ratio of Total Toffoli to $C_{\text{B.E.}}$ in Table V \cite{low2025fast}).
Here, each block encoding itself has a Toffoli gate cost in the range of $10^3$ to $10^5$ (see $C_{\text{B.E.}}$ in Table V \cite{low2025fast}).
In contrast, Figure \ref{fig:cost_plot} shows that the number of calls to the block encoding per circuit ranges from $10^{9}$ to $10^{16}$. 
As for the cost per block encoding in differential equations applications, few works have detailed such accounting, but a recent study \cite{penuel2024feasibility} on estimating drag in a computational fluid dynamics setting reported $T$ gate counts\footnote{The cost of implementing Toffoli gates and $T$ gates with a fault-tolerant quantum computer are within an order of magnitude of one another using modern methods \cite{gidney2019efficient}.} per block encoding of order $10^6$. Moreover, in order to use LCHS or any other quantum differential equation solver in a practical setting, one must extract information from the output state, which requires repeatedly calling the differential equation solver circuit (see \cite{penuel2024feasibility} as an example). 
This shows that, despite the improvements of LCHS over RLS, the costs of quantum algorithms for differential equations is still relatively high.

There are several important caveats to such a comparison.
First, the comparison of number of calls to the block encoding is not quite fair because the number of block encodings for LCHS is proportional to the total simulation time $t$, which might vary significantly from application to application.
More research is needed to determine what total time $t$ is required by classically-intractable instances.
Second, while such algorithms in quantum chemistry have had the benefit of nearly a decade of algorithmic and compilation improvements (see, e.g. Table I in \cite{low2025fast}), quantum algorithms for differential equations, and in particular their circuit compilation, have had relatively less attention.
As has been the case for quantum algorithms in quantum chemistry (Table I in \cite{low2025fast} shows a $10^5$ reduction in gate count over 8 years), we anticipate that researchers will devise methods to exploit structure in the data encoding and data processing of specific differential equations to improve costs significantly.

We discuss the theoretical advances that supported our numerical findings. Our two main improvements to LCHS were to tighten the truncation and discretization bounds of the kernel integral in Lemmas \ref{lem:K_formula} and \ref{lem:gauss} and to improve the data encoding mechanism by utilizing a diagonal block encoding that exploits structure in the LCHS SELECT operator \ref{lem:SH}. 
Furthermore, we applied the recently developed generalized quantum signal processing method \cite{berry2024doubling} to give an efficient implementation of robust Hamiltonian simulation in Lemma \ref{lem:lcu_select}. 
All of these contributions help to provide a more
accurate and optimized accounting of the 
costs of LCHS compared to the original approach.
It is difficult to give a quantitative comparison to the original LCHS approach because the goal of that work was not to provide a detailed compilation with constant factor costs. 
However, we expect that our detailed accounting lays the ground work for
further improvements 
that can be tracked quantitatively.

Our work included another important theoretical advance that is both useful for our analysis and likely of general interest in quantum algorithms.
Lemma \ref{lem:robust_fp_obl_amp} gives a constant factor analysis of robust fixed point oblivious amplitude amplification.
The ``unamplified'' versions of LCHS (Lemma \ref{lem:lchs_be}) and RLS have different subnormalizations and 
thus prepare the solution state with different post-selection probabilities. So, to make a fair comparison between the methods, we considered the respective costs of preparing the solution states with near-unit probability, which is achieved using amplitude amplification. Specifically, we accounted for the additional cost of amplifying the amplitude of the approximate solution vector $\tilde{u}(t)$ by deriving a constant factor bound on fixed-point oblivious amplitude amplification in Appendix \ref{app:robust_AA}.
This cost is based on a constant factor bound on the degree of a polynomial approximation of the $\textup{sign}$ function, which we establish in Theorem \ref{thm:sgn_approx}. 
Because amplitude amplification is a ubiquitous quantum algorithm, we anticipate that this result will be broadly of interest to researchers making resource estimates and designing quantum algorithms.

Next, we discuss a feature of the LCHS algorithm that can enable further improvements over RLS.
The LCHS algorithm only requires a single call to the initial state preparation $U_0$ per simulation, while RLS must call $U_0$ multiple times per simulation.
In Section \ref{sec:numerics} we explored the total cost dependence on the relative cost of the initial state preparation $U_0$ and the block encoding $U_A$. For LCHS, in cases where $U_0$ has a negligible cost compared to the cost of $U_A$, then this feature of LCHS does not significantly impact the overall costs. However, as the cost of $U_0$ is increased relative to that of $U_A$, the overall speedup of LCHS relative to RLS grows. Figure \ref{fig:speedup_plot} shows that, roughly, for each order of magnitude increase in the relative cost of $U_0$ over $U_A$, there is a proportional increase in the speedup of LCHS over the randomization linear solver method.  

Before describing several future directions, we re-emphasize an often-neglected aspect of quantum differential equation solvers: the requirement to estimate the norm $\|u(t)\|$ of the time-evolved state. 
Such an estimate is needed, for example, to appropriately rescale an amplitude estimate involving the output state.
LCHS provides the means for this norm estimation given the fact that when the LCHS propagator $\cal L$ is successfully implemented, it is flagged via measuring $\ket{0}$ in the $\mathrm{PREP}$ register. As shown in Lemma \ref{lem:lchs_be}, the probability of measuring $\ket{0}$ is roughly equal to $\frac{(\|u(t)\| - \epsilon)^2}{\|u_0\|\|c\|_1}$. Given that $\|c\|_1$ is a known quantity that is calculated classically, then if we have knowledge of $\|u_0\|$ beforehand, then $\|u(t)\|$ is straightforward to estimate using probability estimation techniques with simple postprocessing. While this introduces an additional step, our analysis suggests that norm estimation can be performed efficiently using amplitude estimation or sampling-based techniques. This is important for real ODE applications such as in Ref. \cite{penuel2024feasibility}.  While we have discussed the bias in such estimates below Lemma \ref{lem:lchs_be}, we leave a more detailed treatment of this component to future work.\\

Finally, we list several future directions that would provide more insight on the prospects of using quantum computers to address problems in differential equations:
\begin{itemize}
    \item An advantage of the randomization linear solver method is that it is able to exploit properties (specifically, stability) of the $A$ matrix to improve the scaling with respect to $t$. Thus, even if with the constant factor LCHS improvements, there are regimes where the randomization linear solver would be more favorable. An important future direction is to explore if the LCHS method can be adapted to also leverage the case where $A$ is stable yielding fast forward-able dynamics. 

    \item Another direction to investigate is whether or not the structure in the Gaussian quadrature points, or those present in some other quadrature methods, can be exploited to reduce the cost of implementing $D$ -- the diagonal matrix that stores the points of integration. 

    \item This compilation of LCHS used qubitization with GQSP for an efficient implementation of the time evolution operator. There are known cases where product formulas can outperform qubitization in practice \cite{rubin2024quantum}, which raises the question of whether this may hold for certain types of differential equation.
    
    \item Given that the randomization linear solver provides a history state encoding, whereas LCHS output the final time solution vector $u(t)$, are there instances where the larger costs of the Linear solver are worth it for the purposes of accessing the solution at multiple points in time? An example of this may be for solving Maxwell's equations with a retarded potential.
    
    \item With constant factor formulas now worked out for multiple quantum differential equation solvers, the potential for algorithmic optimization is more amenable to those developing specific applications. Future work should be devoted to optimizing existing quantum ODE solvers, costing out existing asymptotic results, and proposing problem instances where these are useful.   
\end{itemize}

This work has provided a series of tools that can be used to systematically track errors and cost-out real ODE problem instances. Given estimates of gate counts for the block encoding and state preparation, our results can be used to estimate gate counts for the total quantum circuit.  
We hope that this paper will motivate further investigations into constant factor analyses for ODEs in the spirit of advancing the search for useful applications of quantum computers.

\section{Acknowledgments}
MP acknowledges funding from Mitacs and the NSERC Post-Graduate Doctoral Scholarship.  NW's work on this project was supported by the U.S. Department of Energy,
 Officeof Science, National Quantum Information Science Research Centers, Co-design Cen
ter for Quantum Advantage (C2QA) under contract number DE- SC0012704 (PNNL FWP
 76274) and PNNL’s Quantum Algorithms and Architecture
 for Domain Science (QuAADS) Laboratory Directed Research and Development (LDRD)
 Initiative. The Pacific Northwest National Laboratory is operated by Battelle for the
 U.S. Department of Energy under Contract DE-AC05-76RL01830. 
We thank Dong An and Danial Motlagh for insightful conversations, as well as Yixuan Liang for providing useful feedback that improved the quality of the paper.

\appendix

\section{$C_{\textup{max}}$ comparison} \label{sec:Cmax}
Here we prove a minor equivalence condition between the assumptions in Ref. \cite{jennings2024cost}, and those required by LCHS. Note that their generator $A$ is defined without a leading minus sign, so some of the inequalities therein are flipped when compared to this section. In \cite{jennings2024cost}, $C_{\textup{max}}$ is defined to be an upper bound on the norm of the propagator like so
\begin{equation}
    \|e^{-At}\| \leq C_{\textup{max}}.
\end{equation}
 
Therein, classes of \textit{stable dynamics} are investigated where the inequality $\min_i \mathfrak{Re} \lambda_i(A) > 0 $ is assumed to hold. This is called \textit{stable} as small perturbations do not lead to arbitrarily large increases in the norm of the solution vector. When this is the case, one can find positive operators $P\succeq 0$ such that $PA + A^\dagger P \succ 0$. In \cite{jennings2024cost}, for a linear ODE where $A$ is enforced to be positive definite, the complexity is dependent on the log norm: 
\begin{equation}
    \mu_P = \min_{\|x\| \neq 0}\mathfrak{Re}\frac{x^\dagger A^\dagger Px}{xPx},
\end{equation}
as the following inequality is used to provide a tighter analysis based on $\mu_P$:
\begin{equation}
    \|e^{At}\| \leq \sqrt{\kappa_P}e^{-t\mu_P}, 
\end{equation}
where $\kappa_P \equiv \|P\|\|P^{-1}\|$ (see \cite{jennings2024cost} for more details). 
In this work we are interested in the boundary case of \textit{unstable} dynamics, or the weakest conditions necessary for the LCHS formula to hold: $L \succeq 0$ which is equivalent to the real part of $A$ being positive semi-definite. In this scenario, the norm of the solution vector $\|u(t)\|$ will be non-increasing. This is the most general class of dynamics that are both suitable to quantum algorithms and \textit{non-fast-forwardable}. In the context of the inequality above, this occurs in the limit where $\mu_P \to 0^+$ and $\kappa_P \to C_{\textup{max}}^2$, and the boundary case is obtained by setting $C_{\textup{max}} \equiv1$. Now since we wish to make accurate comparisons to the numerics in Ref. \cite{jennings2024cost}, we need to establish an equivalence between our matrix assumptions. In the following Lemma, we show that by enforcing the matrix $L$ to be positive semi-definite, we recover the condition that $C_{\mathrm{max}} = 1$. This is not an additional assumption, but a requirement for the LCHS integral identity to hold in the first place.

\begin{lemma}[$C_{\mathrm{max}}$ equivalence]
Given a matrix $A = L+iH$ as previously defined, if $L\succeq 0$, then the limit that $\lambda_{\mathrm{min}(L)} \to 0$ implies that $C_{\text{max}} = 1$. 
\end{lemma}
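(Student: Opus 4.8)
The plan is to bound the operator norm $\|e^{-At}\|$ directly from the definition $A = L + iH$ with $L,H$ Hermitian and $L \succeq 0$, and show that this bound tends to $1$ as $\lambda_{\min}(L) \to 0$, thereby matching the constant $C_{\max}=1$ used in Ref.~\cite{jennings2024cost}. First I would recall the standard bound on the norm of a matrix exponential in terms of the logarithmic norm (matrix measure) $\mu(\cdot)$: for any matrix $B$ one has $\|e^{Bt}\| \le e^{t\mu(B)}$, where $\mu(B) = \lambda_{\max}\!\left(\tfrac{B+B^\dagger}{2}\right)$. Applying this with $B = -A = -L - iH$, the Hermitian part is $\tfrac{-A - A^\dagger}{2} = -L$, since the $iH$ contributions cancel. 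Hence $\mu(-A) = \lambda_{\max}(-L) = -\lambda_{\min}(L)$, giving
\begin{equation}
\|e^{-At}\| \le e^{-t\lambda_{\min}(L)}.
\end{equation}

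Next I would observe that since $L \succeq 0$ we have $\lambda_{\min}(L) \ge 0$, so the right-hand side is at most $1$ for all $t \ge 0$; thus $C_{\max} = 1$ is always a valid choice under the LCHS hypothesis $L \succeq 0$. Taking the limit $\lambda_{\min}(L) \to 0^+$, the bound $e^{-t\lambda_{\min}(L)} \to 1$, and since the norm is also bounded below (at $t=0$ it equals $1$, and more generally one cannot do better than $1$ in this limit without further structure), the tightest uniform-in-$t$ bound becomes exactly $C_{\max}=1$. This makes precise the assertion that enforcing $L \succeq 0$ — a prerequisite for the LCHS identity of Theorem~\ref{thm:t-indep} — recovers the boundary case $C_{\max}=1$ rather than imposing an extra assumption.

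The only mild subtlety, and the step I would be most careful about, is the direction of the claim: the lemma states that $\lambda_{\min}(L)\to 0$ \emph{implies} $C_{\max}=1$, so I must argue both that $1$ is an upper bound (immediate from $\|e^{-At}\|\le e^{-t\lambda_{\min}(L)}\le 1$) and that $1$ is the sharpest such constant in the limit. For the latter, I would note that $\|e^{-A\cdot 0}\| = \|\mathbb I\| = 1$ already forces $C_{\max}\ge 1$ for any $A$, so combined with the upper bound this pins $C_{\max}=1$ exactly. Alternatively, one can exhibit the limiting behavior by taking $H=0$ and $L = \epsilon\,\mathbb I$, for which $\|e^{-Lt}\| = e^{-\epsilon t} \to 1$ as $\epsilon\to 0$, showing the bound cannot be improved. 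I expect no genuine obstacle here; the content is essentially the logarithmic-norm estimate plus the bookkeeping that ties it to the $\mu_P \to 0^+$, $\kappa_P \to C_{\max}^2$ limit discussed in the surrounding text.
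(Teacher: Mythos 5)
Your proposal is correct, and it reaches the paper's key inequality $\|e^{-At}\| \leq e^{-t\lambda_{\min}(L)}$ by a different (though equally standard) route. The paper derives this bound via the Lie--Trotter product formula: writing $e^{-t(L+iH)}$ as a limit of products $\left(e^{-tL/r}e^{-itH/r}\right)^r$, using submultiplicativity of the spectral norm and unitarity of $e^{-itH/r}$ to discard the anti-Hermitian part, and then evaluating $\|e^{-tL}\| = e^{-t\lambda_{\min}(L)}$. You instead invoke the logarithmic-norm (matrix-measure) estimate $\|e^{Bt}\| \leq e^{t\mu(B)}$ with $\mu(-A) = \lambda_{\max}\!\left(\tfrac{-A-A^\dagger}{2}\right) = -\lambda_{\min}(L)$, which gets to the same place in one step; the Trotter argument is more self-contained (only submultiplicativity and unitarity are needed), while your route is shorter but leans on the Dahlquist-type log-norm bound as a black box. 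You also add a sharpness observation the paper leaves implicit: since $\|e^{-A\cdot 0}\| = 1$, any valid constant must satisfy $C_{\max}\geq 1$, so the limit $\lambda_{\min}(L)\to 0^+$ pins $C_{\max}=1$ exactly rather than merely making it an admissible upper bound. This is a genuine (if minor) strengthening that ties in cleanly with the $\mu_P\to 0^+$, $\kappa_P\to C_{\max}^2$ discussion surrounding the lemma, and nothing in your argument has a gap.
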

\begin{proof}
We begin by showing that the norm of the propagator $\|e^{-At}\|$ is upper bounded by the dynamics generated by the real part of the generator $\|e^{-Lt}\|$, since $L = \mathfrak{Re}(A)$ :
    \begin{align}
        \|e^{-tA}\| &= \|e^{-t(L+iH)}\| \\
        &= \lim_{r \to \infty} \|e^{-\frac{t}{r}(L+iH)}\|^r \\
        & \leq \lim_{r \to \infty} \|e^{-\frac{t}{r}L}\|^r \|e^{-\frac{t}{r}iH}\|^r \\
        & = \|e^{-tL}\| \\
        & = e^{-t \lambda_{\text{min}}(L)}.
    \end{align}
    Now we take the following limit
    \begin{equation}
        \lim_{\lambda_{\text{min}(L)} \to 0} e^{-t \lambda_{\text{min}}} = 1,
    \end{equation}
    which then yields the total expression
    \begin{equation}
        \lim_{\lambda_{\mathrm{min}}(\mathfrak{Re}(A)) \to 0} \|e^{-tA}\| \leq 1.
    \end{equation}
    This is equivalent to the condition of setting $C_{\text{max}} \equiv 1$.  
\end{proof}
Interestingly, the LCHS formula also holds if the initial state $u(0)$ is supported on a positive semi-definite subspace of $L$ \cite{an2023quantum}.

\section{Gaussian Quadrature and Calculation of $\|c\|_1$} \label{sec:gauss}
In this Section, Gaussian quadrature is briefly reviewed up to the point where the calculation of $\|c\|_1$ can be clearly illustrated. \\

Gaussian quadrature is a numerical integration scheme that is exact for polynomials of degree $2n-1$ when $n$ points are used. For example, for some degree $2n-1$ polynomial $p_{2n-1}(x)$, Gaussian quadrature formula provides the means to exactly represent its integral as a sum such that $\int_{-1}^1 p_{2n-1}(x) \dd x =\sum_{i=1}^n \omega_i \; p_{2n-1}(x_i)$. Here, $x_i$ is the $i$-th root of $n$-th Legendre polynomial $P_n(x)$, and the coefficients $\omega_i$ are weights that have the form 
\begin{equation}
    \omega_i = \frac{2}{(1-x_i^2)(P_n'(x_i))^2}.
\end{equation}
Given that this construction is exact for polynomials, these formulas yield rapid convergence for functions that can be accurately approximated by polynomials. In this paper, the functions of interest are
\begin{equation}
    g(k) = \frac{1}{C_\beta (1-ik) e^{(1+ik)^\beta}},
\end{equation}
for $\beta \in (0,1)$, and the unitaries
\begin{equation}
    U(t, k) = e^{-it(kL+H)},
\end{equation}
each of which can be well approximated by polynomials via Taylor series. For our discretization, the starting point is below
\begin{equation} 
    \int_{-K}^K g(k) U(t,k)dk = \sum_{m=-K/h}^{K/h-1}\int_{mh}^{(m+1)h} g(k) U(t,k) dk \approx \sum_{m=-K/h}^{K/h-1} \sum_{q=1}^Q c_{q,m} U(t,k_{q,m}). 
\end{equation}
The interval for Gaussian quadrature is [-1,1], which we need to rescale arbitrarily to [a,b] for our formula. We perform the following transformation
\begin{equation}
    \int_a^b f(x) \dd x = \int_{-1}^1 f \left(\frac{b-a}{2}\zeta + \frac{a+b}{2} \right ) \frac{\dd x}{\dd \zeta} \dd \zeta, 
\end{equation}
where $\frac{\dd x}{\dd \zeta}=\frac{b-a}{2}$. For a single integral time step of our formula becomes 
\begin{equation}
    \frac{h}{2}\int_{-1}^1 f\left(\frac{h}{2}\zeta + \frac{(2m+1)h}{2} \right ) \dd \zeta. 
\end{equation}
Now after applying the Gaussian quadrature formula, we are then left with 
\begin{align}
    \frac{h}{2}\int_{-1}^1 f\left(\frac{h}{2}\zeta + \frac{(2m+1)h}{2} \right ) \dd \zeta &\approx \frac{h}{2} \sum_{q=1}^Q \omega_q g\left(\frac{h}{2}\zeta_q + \frac{(2m+1)h}{2} \right ) U(t,k_{q,m}), \\
    & = \sum_{q=1}^Q c_{q,m} U(t,k_{q,m})
\end{align}
where $m$ is an integer to index the outer sum, $\omega_q$ are the Gaussian quadrature weights, and $g(k)$ is our chosen kernel function for LCHS. Now to calculate $\|c\|_1$, we need to sum the modulus of all factors that multiply $U(t,k_{q,m})$. This gives $\|c\|_1$ the following definition:
\begin{equation} \label{eq:c1_def}
    \|c\|_1 := \frac{h}{2} \sum_{m=-K/h}^{K/h-1} \sum_{q=1}^Q \left | \omega_q g\left(\frac{h}{2}\zeta_q + \frac{(2m+1)h}{2} \right ) \right |.
\end{equation}
Conveniently, $\mathtt{NumPy}$ has functions to compute the $\omega_q$ factors for the canonical interval [-1,1] which is exactly what we require after performing the above transformation. Therefore, this provides the means to simply compute the quantity $\|c\|_1$. 

\section{Constant Factor Degree Polynomial Approximation to the Sign Function}\label{app:polynomial_degree_bound}
In this Section we study the constant factor degree $n$ of a polynomial $p_{n,\Delta, \epsilon}(x)$ that approximates the signum or sign function $\sign(x)$ in an interval parameterized by $\Delta$ up to precision $\epsilon$. We follow the strategy of Ref. \cite{low2017hamiltonian} in that we first approximate $\sign(x)$ with $\erfunc(x)$, and then given an entire polynomial approximation to $\erfunc(x)$, we show that this well approximates $\sign(x)$ over a restricted domain. Much of our work here is to convert the asymptotic results of Ref. \cite{low2017hamiltonian} into constant factor bounds. This Section will also rely heavily on the analysis presented in Ref. \cite{sachdeva2014faster}. In addition, we first bound the degree of this polynomial in terms of elementary functions, and then once again use Lambert-W functions to tighten the analysis at the end of the section. 

\begin{lemma}[Approximating the signum function with the error function (adapted from Ref. \cite{low2017hamiltonian}).] \label{lem:signum_with_erf} $\forall x \in \mathbb{R}$ let $k=\frac{\sqrt{2}}{\Delta}\log^{1/2}(2/\pi \epsilon^2)$ and let $\Delta \in \mathbb{R}^+$ define an interval where the function $f_{\mathrm{sgn},k, \Delta, \epsilon} := \erfunc(kx) \; \mathbf{s.t.} \; |x| \geq \Delta/2$ well approximates $\sign(x)$ such that 
\begin{equation} \label{eq:sig_eps}
    \max_{|x|\geq \Delta/2} \left |f_{\mathrm{sgn},k, \Delta, \epsilon}(x) - \sign(x) \right | \leq \epsilon, 
\end{equation}
where 
\begin{equation}
\sign(x) := 
     \begin{cases} 
      1, & x > 0, \\
      0, & x=0, \\
      -1, & x < 0, 
   \end{cases}
\end{equation}
provided that $\epsilon \in (0, \sqrt{2/e\pi}]$. 
\end{lemma}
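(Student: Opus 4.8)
The plan is to reduce to the positive half-line by the odd symmetry of both functions involved, and then to estimate the complementary error function at the single worst point. First I would note that $\erfunc$ and $\sign$ are both odd, so $x\mapsto \bigl|\erfunc(kx)-\sign(x)\bigr|$ is even; it therefore suffices to prove the bound for $x\ge \Delta/2>0$. On that range $\sign(x)=1$, and since $\erfunc(y)=1-\erfc(y)$ we have $\bigl|\erfunc(kx)-1\bigr|=\erfc(kx)$. As $\erfc$ is strictly decreasing and $k>0$, this is largest at $x=\Delta/2$, so the whole claim reduces to showing $\erfc(k\Delta/2)\le\epsilon$.

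Next I would invoke a Gaussian tail bound. The crude estimate $\erfc(y)\le e^{-y^2}$ is not sharp enough here: with $k=\tfrac{\sqrt2}{\Delta}\log^{1/2}(2/\pi\epsilon^2)$ one gets $y:=k\Delta/2=\tfrac{1}{\sqrt2}\log^{1/2}(2/\pi\epsilon^2)$, hence $y^2=\tfrac12\log(2/\pi\epsilon^2)$ and $e^{-y^2}=\sqrt{\pi/2}\,\epsilon>\epsilon$. Instead I would use the standard sharper inequality $\erfc(y)\le \dfrac{e^{-y^2}}{\sqrt{\pi}\,y}$ for $y>0$, which follows immediately from $\erfc(y)=\tfrac{2}{\sqrt\pi}\int_y^\infty e^{-s^2}\,ds\le\tfrac{2}{\sqrt\pi}\int_y^\infty \tfrac{s}{y}e^{-s^2}\,ds$. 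Substituting the value of $y$ above, a short simplification collapses the right-hand side:
\[
\erfc\!\left(\tfrac{k\Delta}{2}\right)\le \frac{e^{-y^2}}{\sqrt\pi\,y}=\frac{\sqrt{\pi/2}\,\epsilon}{\sqrt\pi\cdot\tfrac{1}{\sqrt2}\log^{1/2}(2/\pi\epsilon^2)}=\frac{\epsilon}{\sqrt{\log(2/\pi\epsilon^2)}}.
\]

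Finally I would observe that the hypothesis $\epsilon\in(0,\sqrt{2/(e\pi)}\,]$ is exactly equivalent to $2/\pi\epsilon^2\ge e$, i.e.\ $\log(2/\pi\epsilon^2)\ge 1$; in particular the logarithm is positive, so $k$ is a well-defined positive real. Dividing through by the square root, which is $\ge 1$, gives $\erfc(k\Delta/2)\le\epsilon$, establishing \eqref{eq:sig_eps}. There is no genuine obstacle in this argument — the only point requiring care is the choice of the $\erfc$ tail bound (the naive exponential bound fails by the factor $\sqrt{\pi/2}$), and the constants have evidently been arranged precisely so that the threshold $\sqrt{2/(e\pi)}$ is the value of $\epsilon$ at which $\log(2/\pi\epsilon^2)$ crosses $1$.
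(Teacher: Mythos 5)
Your proof is correct. Note that the paper does not actually prove this lemma itself: its ``proof'' is a one-line deferral to Lemma 10 of Ref.~\cite{low2017hamiltonian}, so your write-up supplies the details that the paper leaves to the citation. Your route --- reduce to $x\ge\Delta/2$ by oddness, note $|\erfunc(kx)-1|=\erfc(kx)$ is maximized at the boundary point $x=\Delta/2$ by monotonicity, and then apply the sharpened tail bound $\erfc(y)\le e^{-y^2}/(\sqrt{\pi}\,y)$ rather than the naive $e^{-y^2}$ (which misses by the factor $\sqrt{\pi/2}$) --- is essentially the standard argument behind the cited result, and your observation that the hypothesis $\epsilon\le\sqrt{2/(e\pi)}$ is exactly the condition $\ln(2/\pi\epsilon^2)\ge 1$ needed to absorb the $\log^{-1/2}$ factor correctly explains where that threshold comes from. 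The only value added by comparison is transparency: a reader of your version sees precisely why the constant $\sqrt{2/(e\pi)}$ appears, whereas the paper's reader must consult the reference.
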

\begin{proof}
    A proof is contained in  [\cite{low2017hamiltonian}, Lemma 10]. 
\end{proof}

Next we wish to find a polynomial approximation of $\erfunc(x)$. In pursuit of this goal, we first find a constant factor degree approximation to the Gaussian, and then use this result to bound the degree of the polynomial required to approximate the error function. The polynomial approximation for the Gaussian is presented in the following Lemma.  

\begin{lemma}[Entire Approximation to the Gaussian Function] \label{lem:gauss}
    There exists a polynomial $p_{\mathrm{gauss}, k, n}(x)$ of degree $ n=\ceil{\sqrt{8\ceil{\max\{\ln(2/\epsilon), k^2 e^2 /2 \} }\ln(4/\epsilon)}}$ that well approximates the Gaussian function $f_{\mathrm{gauss}}(x) = e^{-(k x)^2}$ within precision $\epsilon$ such that 
    \begin{equation}
        \max_{x\in [-1,1]}\left | p_{\mathrm{gauss}, k, n}(x) - e^{-(k x)^2} \right | \leq \epsilon.
    \end{equation}
\end{lemma}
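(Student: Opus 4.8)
The plan is to approximate $e^{(kx)^2}$ on $[-1,1]$ by truncating its Taylor series at degree $2N$ for a suitable $N$, then estimate the tail. Writing $e^{(kx)^2} = \sum_{j=0}^{\infty} \frac{(kx)^{2j}}{j!}$, the polynomial $p_{\mathrm{gauss},k,n}(x) := \sum_{j=0}^{N} \frac{(kx)^{2j}}{j!}$ has degree $n = 2N$, and for $|x|\le 1$ the error is bounded by the tail $\sum_{j=N+1}^{\infty} \frac{k^{2j}}{j!}$. The first step is therefore to get a clean upper bound on this factorial tail. Using $j! \ge (j/e)^j$ (the easy direction of Stirling), each term is at most $(ek^2/j)^{j}$; once $j \ge N+1 \ge 2ek^2$, say, consecutive terms decay by at least a factor of $1/2$, so the whole tail is at most $2(ek^2/(N+1))^{N+1} \le 2 \cdot 2^{-(N+1)}$ once $N+1 \ge 2ek^2$. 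A slightly more careful accounting (matching the $\ln(4/\epsilon)$ and $k^2e^2/2$ appearing in the statement) is what I would grind through here, but the shape is: demand $N \gtrsim \max\{\ln(2/\epsilon), k^2e^2/2\}$ to make the geometric-decay regime kick in, and demand a further logarithmic number of terms, $N \gtrsim \ln(4/\epsilon)$, to push the resulting geometric tail below $\epsilon$. Multiplying these two requirements (rather than adding) is precisely what produces the $\sqrt{8\lceil \max\{\ln(2/\epsilon),\,k^2e^2/2\}\rceil \ln(4/\epsilon)}$ form, after taking $n=2N$ and a square root — i.e.\ $N \approx \tfrac12\sqrt{8\,\lceil\max\{\cdots\}\rceil\,\ln(4/\epsilon)}$ is chosen so that $N^2$ exceeds the product of the two thresholds.

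Concretely, the key steps in order are: (1) fix $p_{\mathrm{gauss},k,n}$ as the even Taylor polynomial of $e^{(kx)^2}$ of degree $n=2N$; (2) bound the remainder on $[-1,1]$ by $\sum_{j>N} k^{2j}/j!$; (3) split this into the regime $j \le$ some threshold $t_0$ (controlled by demanding $N$ large enough that $ek^2/j < 1/2$, i.e.\ $N \ge 2ek^2$, which is where the $k^2e^2/2$ term comes from) and the geometric remainder; (4) bound the geometric remainder by $C\, 2^{-N}$ and demand $2^{-N} \le \epsilon/(2C)$, i.e.\ $N \gtrsim \ln(1/\epsilon)$, which is where the $\ln(2/\epsilon)$ and $\ln(4/\epsilon)$ terms come from; (5) combine the two lower bounds on $N$ into the single stated closed form via the product-under-a-square-root trick and verify $n=\lceil\sqrt{8\lceil\max\{\ln(2/\epsilon),k^2e^2/2\}\rceil \ln(4/\epsilon)}\rceil$ suffices; (6) note $p_{\mathrm{gauss},k,n}$ is a polynomial hence entire, matching the lemma's title. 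This kind of argument is standard and essentially follows the truncated-Taylor-series approach used for analytic-function polynomial approximations (e.g.\ the treatment in Refs.~\cite{low2017hamiltonian, sachdeva2014faster} that the paper cites).

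The main obstacle I expect is purely bookkeeping: arranging the two competing lower bounds on $N$ — one linear in $k^2$, one logarithmic in $1/\epsilon$ — so that their interaction collapses into the single $\sqrt{8\,\lceil\max\{\cdots\}\rceil\,\ln(4/\epsilon)}$ expression with the exact constants ($8$, the two different logarithms $\ln(2/\epsilon)$ versus $\ln(4/\epsilon)$, the $e^2/2$) rather than some looser $O(\cdot)$ version. The trick is that we do not need $N$ to separately exceed each threshold additively; we only need the \emph{product} $N^2$ to dominate, because the geometric decay rate improves as $N$ grows past $2ek^2$, so $N$ steps of decay at rate roughly $ek^2/N$ give an error $\sim (ek^2/N)^{N}$, and solving $(ek^2/N)^N \le \epsilon$ gives $N\ln(N/ek^2) \ge \ln(1/\epsilon)$, whose solution is governed by $N^2 \gtrsim (\text{something like } k^2e^2)\cdot \ln(1/\epsilon)$ in the relevant regime. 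Pinning down exactly which constant goes where — and checking the edge case where $\ln(2/\epsilon)$ rather than $k^2e^2/2$ is the active term in the max — is the only genuinely fiddly part; the analytic content is elementary.
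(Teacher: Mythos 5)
There is a genuine gap: plain Taylor truncation cannot achieve the degree claimed in the lemma, and the ``product-under-a-square-root'' heuristic you use to combine the two thresholds is not valid for a truncated Taylor series. If you truncate $\sum_j (kx)^{2j}/j!$ at order $N$ (degree $2N$ in $x$), the tail bound $(ek^2/j)^j$ is useless unless $j>ek^2$; the terms $k^{2j}/j!$ are still \emph{increasing} for $j<k^2$, so the remainder at $x=\pm 1$ is of order $e^{k^2}$ unless $N\gtrsim k^2$. Solving $(ek^2/N)^N\le\epsilon$ therefore forces $N>ek^2$ outright — it is not governed by $N^2\gtrsim k^2e^2\ln(1/\epsilon)$. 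Concretely, with $k^2=100$ and $\epsilon=10^{-2}$ the lemma's formula gives degree $n\approx 133$, i.e.\ $N\approx 66<k^2$, and at that truncation order the Taylor tail is astronomically large, so your polynomial does not satisfy the stated error bound. In short, the minimal degree of a truncated Taylor polynomial scales like $k^2$, whereas the lemma claims degree $O\!\left(k\sqrt{\ln(1/\epsilon)}\right)$; the two requirements on $N$ must be combined additively for your construction, not multiplicatively.

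The missing idea — and the route the paper takes, following Lemma 4.2 of Sachdeva--Vishnoi as adapted by Low--Chuang — is to decouple the Taylor truncation order from the polynomial degree. One keeps the Taylor expansion of $e^{-\lambda(y+1)}$ out to order $t=\lceil\max\{\ln(2/\epsilon),k^2e^2/2\}\rceil$ (so the factorial tail, bounded exactly as you do via $j!\ge(j/e)^j$ and $\lambda e^2\le t$, contributes $e^{-\lambda-t}$), but then replaces each monomial $y^j$, $j\le t$, by a degree-$n$ Chebyshev-truncated approximant whose error is $2e^{-n^2/2t}$. This compression step is what lets $n\approx\sqrt{8t\ln(4/\epsilon)}$ suffice even though $t\sim k^2$; the change of variables $y\to 2x^2-1$, $\lambda\to k^2/2$ then converts the result into the bound $2e^{-n^2/8t}+e^{-k^2/2-t}\le\epsilon$ for the Gaussian and produces exactly the constants ($8$, $\ln(4/\epsilon)$ vs.\ $\ln(2/\epsilon)$, $k^2e^2/2$) in the statement. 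Without some such monomial-compression (Chebyshev or Jacobi--Anger) ingredient, no rearrangement of the bookkeeping in your argument will close the gap.
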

\begin{proof}
    Our starting point is Lemma 4.2 from \cite{sachdeva2014faster} where a polynomial approximation to the function $f(y) = e^{-\lambda (y+1)}$ is analyzed. The reason for this choice of function is that studying this function over the canonical interval $y\in [-1,1]$ is equivalent to analyzing $g(y) = e^{- y} \; | \; y\in[0, t]$ where $\lambda = t/2$. As noted in \cite{low2017hamiltonian}, this analysis can be repurposed for the Gaussian function via the change of variables $y\to 2x^2-1$, and $\lambda \to k^2/2$. Our proof here follows these works quite closely, however, we write it out in detail to carefully account for all constants. Now a straightforward way to approximate the exponential function with Chebyshev polynomials is to expand it via Taylor series and implement a Chebyshev approximation to the monomial at each order like so 
    \begin{equation}
        e^{-\lambda (y+1)} = e^{-\lambda} \sum_{j=0}^\infty \frac{(-\lambda)^j}{j!}y^j \approx e^{-\lambda} \sum_{j=0}^\infty \frac{(-\lambda)^j}{j!} p_{\mathrm{mon}, j, d}(y),
    \end{equation}
    where $p_{\mathrm{mon}, j, n}(y)$ is an at most degree-$n$ Chebyshev polynomial approximation to the monomial. The work of Ref. \cite{sachdeva2014faster} bounds the error in this approximation in the following way: 
    \begin{align}
    \sup_{y \in [-1, 1]} & \left | e^{-\lambda (y+1)} - e^{-\lambda} \sum_{j=0}^\infty \frac{(-\lambda)^j}{j!} p_{\mathrm{mon}, j, d}(y)\right | \\
    & \leq \sup_{y \in [-1, 1]}  \left |e^{-\lambda} \sum_{j=0}^t \frac{(-\lambda)^j}{j!} \left (y^j - p_{\mathrm{mon}, j, d}(y) \right ) \right | + \sup_{y \in [-1, 1]}  \left | e^{-\lambda} \sum_{j=t+1}^\infty \frac{(-\lambda)^j}{j!} y^j\right | \\
    &\leq \sup_{y \in [-1, 1]}  e^{-\lambda} \sum_{j=0}^t \frac{(-\lambda)^j}{j!} \left |y^j - p_{\mathrm{mon}, j, d}(y) \right |  + e^{-\lambda} \sum_{j=t+1}^\infty \frac{(\lambda)^j}{j!} .
    \end{align}
    For the first term, Ref. \cite{sachdeva2014faster} uses the properties of Chebyshev polynomials to show that $$\sup_{y \in [-1, 1]}  e^{-\lambda} \sum_{j=0}^t \frac{(-\lambda)^j}{j!} \left |y^j - p_{\mathrm{mon}, j, n}(y) \right |  \leq 2e^{-n^2/2t}.$$ For the second term, they use $j! \geq (j/e)^j$ to bound $e^{-\lambda} \sum_{j=t+1}^\infty \frac{(\lambda)^j}{j!} \leq e^{-\lambda} \sum_{j=t+1}^\infty \left (\frac{\lambda e}{j}\right )^j$. This summand can be further upper bounded using $\frac{\lambda^t e^t}{t^t} \leq e^{-t}$, which comes directly from assuming that $\lambda e^2 \leq t$ as can be seen from a few lines of algebra. We then have that $$e^{-\lambda} \sum_{j=t+1}^\infty \frac{(\lambda)^j}{j!} \leq e^{-\lambda} \sum_{j=t+1}^\infty \left (\frac{\lambda e}{j}\right )^j \leq e^{-\lambda} \sum_{j=t+1}^\infty e^{-j} =\frac {e \;e^{-\lambda -t}}{1-1/e} \leq e^{-\lambda -t},$$
    where the second last equality comes from the geometric series and we lower bounding the denominator by 1/2. The total error now reads 
    \begin{equation} \label{eq:poly_error}
        \sup_{y \in [-1, 1]} \left | e^{-\lambda (y+1)} - e^{-\lambda} \sum_{j=0}^\infty \frac{(-\lambda)^j}{j!} p_{\mathrm{mon}, j, n}(y)\right | \leq 2e^{-n^2/2t} + e^{-\lambda -t}. 
    \end{equation}
    Now, applying the transformation $y\to 2x^2-1$ and $\lambda \to k^2/2$, we can convert this bound to a bound on the error of a polynomial approximation to the Gaussian function. Examining the Taylor series of the Gaussian $e^{-x^2k ^2} = \sum_{j=0}^\infty \frac{(-1)^jk^{2j}}{j!}x^{2j}$ the Chebyshev approximation to the monomials are doubled in degree and shifted yielding the bound 
    \begin{equation} \label{eq:gauss_error}
        \epsilon_{\mathrm{gauss}, k, n} := \sup_{x \in [-1, 1]} \left | e^{-(x k)^2} - e^{-k^2/2} \sum_{j=0}^\infty \frac{(-k^2/2)^j}{j!} p_{\mathrm{mon}, j, n/2}(x)\right | \leq 2e^{-n^2/8t} + e^{-k^2/2 -t}.
    \end{equation}
    Choosing $n=\ceil{\sqrt{8t\ln(4/\epsilon)}}$ and $t = \ceil{\max\{\ln(2/\epsilon), k^2 e^2 /2 \} }$ ensures that $2e^{-n^2/8t} + e^{-k^2/2 -t} \leq \epsilon/2 + e^{-k^2/2}\epsilon/2 \leq \epsilon$ which completes the proof. Note that the second choice of $t$ came from the constraint used to bound the Taylor series.
\end{proof}
Despite this analysis being in terms of a Taylor series of polynomial approximations to monomials, this is not the series expansion used in Ref \cite{low2017hamiltonian}. Instead, the work implements the truncated Jacobi-Anger expansion of the exponential function which has the form
\begin{equation}
     e^{-\lambda(y+1)} \approx p_{\mathrm{exp}, n}(y)  = e^{-\lambda}\left (I_0(\lambda) +  2\sum_{j=0}^n I_j(\lambda) T_j(-y) \right ),
\end{equation}
where $T_j(y)$ are the $j-$th order Chebyshev functions of the first kind, and $I_j(\lambda)$ are modified Bessel functions of the first kind. The reason for the use of this expansion is that Jacobi-Anger expansions have favorable properties for QSP, specifically in that they satisfy the boundedness condition that the range of the polynomial is $\in [-1.1]$, which is required for unitarity. However, despite this difference, in [Ref. \cite{low2017hamiltonian}, Lemma 14] it is shown that 
\begin{equation}
    \max_{y\in [-1,1]}|p_{\mathrm{exp}, n}(y) - e^{-\lambda(y+1)}| \leq 2e^{-n^2/2t} + e^{-\lambda -t},
\end{equation}
which is the same bound that was obtained in Equation \ref{eq:poly_error}, meaning we can proceed to bound the degree of the Jacobi-Anger series using the analysis in Lemma \ref{lem:gauss}. Next, we continue with this analysis to bound the degree of a polynomial approximation of the shifted error function. 

\begin{lemma}[Entire Approximation to Shifted Error Function] \label{lem:erf_approx} There exists a polynomial $p_{\mathrm{erf}, n, k, s}(x)$ of degree $n$ that well approximates the shifted error function $f(x) = \erfunc(k(x-s)),$ $s\in[-1,1]$ to error $\epsilon$ such that 
    \begin{equation}
        \max_{x\in [-1,1]}|p_{\mathrm{erf}, n, k, s}(x) - \erfunc(k(x-s))| \leq \frac{8k}{n\sqrt{\pi}} \left ( 2e^{-(n-1)^2/8t} + e^{-2k^2 -t} \right ),
    \end{equation}
    where $n, t \in \mathbb{Z^+}$.
\end{lemma}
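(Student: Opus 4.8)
The plan is to integrate the Gaussian polynomial approximation from Lemma~\ref{lem:gauss} to obtain a polynomial approximation to the (shifted) error function, exploiting the identity $\erfunc(z) = \frac{2}{\sqrt{\pi}}\int_0^z e^{-s^2}\,ds$. First I would recall that $\erfunc(k(x-s))$ can be written, up to the constant factor $\tfrac{2k}{\sqrt\pi}$, as the integral of a Gaussian $e^{-(k u)^2}$ along the segment from $0$ to $x-s$; equivalently, one differentiates: $\frac{d}{dx}\erfunc(k(x-s)) = \frac{2k}{\sqrt\pi} e^{-(k(x-s))^2}$. So the natural candidate is $p_{\mathrm{erf},n,k,s}(x) := \frac{2k}{\sqrt\pi}\int_0^{x-s} p_{\mathrm{gauss},k,n-1}(u)\,du + (\text{constant to fix the value at a point})$, which has degree $n$ when $p_{\mathrm{gauss},k,n-1}$ has degree $n-1$. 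The error is then controlled by integrating the pointwise error bound from Lemma~\ref{lem:gauss}.

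The key steps, in order: (1) Set $p_{\mathrm{erf},n,k,s}(x)$ to be the term-by-term antiderivative of the degree-$(n-1)$ Gaussian approximant, shifted and rescaled, so that its derivative is $\frac{2k}{\sqrt\pi}p_{\mathrm{gauss},k,n-1}(x-s)$; this raises the degree by exactly one. (2) Write the error as $|p_{\mathrm{erf},n,k,s}(x) - \erfunc(k(x-s))| \le \frac{2k}{\sqrt\pi}\left|\int_0^{x-s}\big(p_{\mathrm{gauss},k,n-1}(u) - e^{-(ku)^2}\big)\,du\right|$, using the fundamental theorem of calculus and matching values at $x=s$. (3) Bound the integrand uniformly by the Lemma~\ref{lem:gauss} estimate $2e^{(n-1)^2/8t} + e^{-k^2/2 - t}$ — here I need to be careful that the degree parameter in Lemma~\ref{lem:gauss} is $n-1$, so $(n-1)^2$ appears, and track whether the $k^2$ vs.\ $2k^2$ in the statement comes from a rescaling of variables (the target bound has $e^{-2k^2-t}$, suggesting the Gaussian being approximated is $e^{-(ku)^2}$ with the substitution sending $\lambda = k^2/2 \to$ something, or that the interval rescaling doubles it). (4) Integrate the constant bound over an interval of length at most $2$ (since $x,s$ are such that $x-s$ ranges over an interval of length $\le 2$), and divide by the degree appropriately — the factor $\frac{1}{n}$ in the target bound presumably arises because integrating a degree-$(n-1)$ Chebyshev-type polynomial picks up a $\frac1n$ from the antiderivative normalization, or from a tighter term-by-term integration of the Chebyshev series (integrating $T_j$ gives something like $\frac{1}{j+1}$ factors). (5) Assemble constants to land on $\frac{8k}{n\sqrt\pi}\big(2e^{(n-1)^2/8t} + e^{-2k^2-t}\big)$.

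The main obstacle I anticipate is getting the constants and the exponents exactly right — in particular, reconciling the $\frac{8k}{n\sqrt\pi}$ prefactor and the $e^{-2k^2-t}$ (rather than $e^{-k^2/2-t}$) term with what Lemma~\ref{lem:gauss} directly gives. This likely requires being precise about the change of variables $y \to 2x^2-1$, $\lambda \to k^2/2$ used in Lemma~\ref{lem:gauss}, and about whether the $\frac1n$ improvement over the naive ``bound $\le 2 \cdot \epsilon_{\mathrm{gauss}}$'' comes from integrating the Jacobi--Anger/Chebyshev expansion term-by-term (where $\int T_j = \frac{1}{2}\big(\frac{T_{j+1}}{j+1} - \frac{T_{j-1}}{j-1}\big)$ contributes decaying coefficients) rather than from crudely bounding the integral of the uniform error. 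A secondary subtlety is bookkeeping the shift $s$: one must ensure $x-s \in [-1,1]$ (or a controlled superset) so that the Lemma~\ref{lem:gauss} bound, valid on $[-1,1]$, applies along the entire integration path; if not, the statement implicitly restricts $s$ or absorbs this into the interval of validity. I would handle this by stating the domain restriction explicitly and otherwise following the term-by-term integration of the expansion used in Lemma~\ref{lem:gauss}.
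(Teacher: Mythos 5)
The paper's own proof of this lemma is essentially a citation: Corollaries 4 and 5 of Ref.~\cite{low2017hamiltonian} already establish $\max_{x\in[-1,1]}|p_{\mathrm{erf},n,k,s}(x)-\erfunc(k(x-s))|\le \frac{8k}{n\sqrt{\pi}}\,\epsilon_{\mathrm{gauss},2k,n-1}$, and the stated bound follows by inserting the estimate of Lemma \ref{lem:gauss} with $k\to 2k$ and $n\to n-1$. Your proposal attempts to re-derive that cited inequality by integrating the Gaussian approximant, which is indeed the mechanism behind it; however, as written it does not reach the stated bound, and the two places where it stops short are precisely the two quantities the lemma exists to pin down.

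First, the shift. Your candidate polynomial is an antiderivative of $p_{\mathrm{gauss},k,n-1}(x-s)$, but Lemma \ref{lem:gauss} controls the Gaussian approximant only on $[-1,1]$, and for $x\in[-1,1]$ with a nontrivial shift the integration variable $x-s$ leaves that interval, so your step (3) is not justified. The resolution is not a ``domain restriction'' but the rescaling $u=(x-s)/2\in[-1,1]$ (for $|s|\le 1$), writing $\erfunc(k(x-s))=\erfunc(2ku)$ with $\frac{d}{du}\erfunc(2ku)=\frac{4k}{\sqrt{\pi}}e^{-(2ku)^2}$; it is this doubling of the Gaussian parameter to $2k$ that produces the $e^{-(2k)^2/2-t}=e^{-2k^2-t}$ term in the statement. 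You list this as one of two possible explanations but neither commit to it nor execute it. Second, the $1/n$. The bound you actually write down — a uniform bound on the integrand times the length of the integration path — yields at best $\frac{8k}{\sqrt{\pi}}\,\epsilon_{\mathrm{gauss}}$ with no factor of $1/n$. The prefactor $\frac{8k}{n\sqrt{\pi}}$ requires integrating the Chebyshev (Jacobi--Anger) error series term by term, using $\int T_j = \frac{1}{2}\bigl(\frac{T_{j+1}}{j+1}-\frac{T_{j-1}}{j-1}\bigr)$ so that every tail term acquires a coefficient of order $1/n$; you mention this mechanism only as a presumption. Since the lemma is purely a constant-factor statement, these two unresolved steps are its entire content: either carry out the rescaling and the term-by-term integration explicitly, or do what the paper does and invoke Corollaries 4 and 5 of \cite{low2017hamiltonian} directly before substituting Lemma \ref{lem:gauss}.
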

\begin{proof}
    We work with the quantity $\epsilon_{\mathrm{gauss}, k, n}$ defined in Equation \ref{eq:gauss_error} of Lemma \ref{lem:gauss}. In Corollaries 4 and 5 of Ref. \cite{low2017hamiltonian} it is proven that 
    \begin{equation}
        \max_{x\in [-1,1]}|p_{\mathrm{erf}, n, k, s}(x) - \erfunc(k(x-s))| \leq \frac{8k}{n\sqrt{\pi}}\epsilon_{\mathrm{gauss}, 2k, n-1},
    \end{equation}
    which allows us to obtain the bound simply by using the definition of $\epsilon_{\mathrm{gauss}, k, n}$ and updating the coefficient and shift to $k$ and $n$ respectfully. 
\end{proof}

With a polynomial approximation to the error function with a constant factor error bound we are in position to analyze an approximation to the sign function. 

\begin{theorem}[Polynomial Approximation to the Sign Function] \label{thm:sgn_approx}
    There exists an odd degree-$n$ polynomial $p_{\mathrm{erf}, n, k, s, \Delta}(x)$ that well approximates the function $f(x) = \sign(x-s)$ such that 
    \begin{equation}
        \max_{x \in [-1,\; s-\Delta/2 ] \cup [s+ \Delta/2,\; 1]} \left |p_{\mathrm{erf}, n, k, s, \Delta}(x) - \sign(x-s) \right | \leq \epsilon,  
    \end{equation}
    where the degree of the polynomial is 
    \begin{equation}
        n = \ceil{\sqrt{8\ceil{\max \left \{ \ln\left (\frac{32\frac{\sqrt{2}}{\Delta}\ln^{1/2}(8/\pi \epsilon^2)}{3\sqrt{\pi} \epsilon}\right ), \frac{4}{\Delta^2}\ln(8/\pi \epsilon^2) e^2 \right  \}}\ln\left (\frac{64\frac{\sqrt{2}}{\Delta}\ln^{1/2}(8/\pi \epsilon^2)}{3\sqrt{\pi} \epsilon}\right )}+1}
    \end{equation}
    such that $\Delta \in \mathbb{R}^+$ parametrizes the interval of approximation, and that $\epsilon \in (0, 2\sqrt{2/e\pi})$.  
\end{theorem}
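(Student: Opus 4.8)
The plan is to chain the two approximations already established in the excerpt: first replace $\sign(x-s)$ by $\erfunc(k(x-s))$ via Lemma~\ref{lem:signum_with_erf}, then replace $\erfunc(k(x-s))$ by a polynomial via Lemma~\ref{lem:erf_approx}, whose error is in turn governed by Lemma~\ref{lem:gauss}. Concretely, I would split the budget as $\epsilon=\epsilon/2+\epsilon/2$ and, for every $x$ in the allowed region, bound
\begin{equation}
  \left| p_{\mathrm{erf},n,k,s,\Delta}(x) - \sign(x-s) \right| \le \left| p_{\mathrm{erf},n,k,s,\Delta}(x) - \erfunc(k(x-s)) \right| + \left| \erfunc(k(x-s)) - \sign(x-s) \right|.
\end{equation}
For the second term I apply Lemma~\ref{lem:signum_with_erf}, translated by $s$, with error $\epsilon/2$: this pins down $k=\tfrac{\sqrt2}{\Delta}\log^{1/2}\big(2/\pi(\epsilon/2)^2\big)=\tfrac{\sqrt2}{\Delta}\log^{1/2}(8/\pi\epsilon^2)$, is valid exactly on $|x-s|\ge\Delta/2$, i.e.\ on $[-1,s-\Delta/2]\cup[s+\Delta/2,1]$, and requires $\epsilon/2\le\sqrt{2/e\pi}$ --- which is the source of the hypothesis $\epsilon\in(0,2\sqrt{2/e\pi})$.

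For the first term I invoke Lemma~\ref{lem:erf_approx} with this same $k$, so that a degree-$n$ polynomial achieves error at most $\tfrac{8k}{n\sqrt\pi}\,\epsilon_{\mathrm{gauss},2k,n-1}$, and then Lemma~\ref{lem:gauss} with coefficient $2k$ and degree $n-1$ to force $\epsilon_{\mathrm{gauss},2k,n-1}\le\epsilon'$ for a target $\epsilon'$ to be chosen; this is precisely where the ``$+1$'' in the final degree comes from, since the underlying Gaussian/monomial polynomial has degree $n-1$. I would take $\epsilon'=\tfrac{3\sqrt\pi\,\epsilon}{16k}$, which makes $\tfrac{8k}{n\sqrt\pi}\epsilon'=\tfrac{3\epsilon}{2n}\le\epsilon/2$ whenever $n\ge3$. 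Substituting $k$ into Lemma~\ref{lem:gauss} gives $t=\ceil{\max\{\ln(2/\epsilon'),(2k)^2e^2/2\}}$ and $n-1=\ceil{\sqrt{8t\ln(4/\epsilon')}}$, and since $2/\epsilon'=\tfrac{32(\sqrt2/\Delta)\log^{1/2}(8/\pi\epsilon^2)}{3\sqrt\pi\epsilon}$, $4/\epsilon'=\tfrac{64(\sqrt2/\Delta)\log^{1/2}(8/\pi\epsilon^2)}{3\sqrt\pi\epsilon}$, and $(2k)^2e^2/2=\tfrac{4}{\Delta^2}\log(8/\pi\epsilon^2)e^2$, this reproduces the claimed $n$ verbatim (using $\ceil{x}+1=\ceil{x+1}$).

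Two points need care. First, the choice $\epsilon'=\tfrac{3\sqrt\pi\epsilon}{16k}$ is self-referential, because the admissible $\epsilon'$ depends on $n$ through the $\tfrac{8k}{n\sqrt\pi}$ prefactor of Lemma~\ref{lem:erf_approx}; I would break the loop by checking $n\ge3$ a priori. In the non-vacuous regime $\Delta\le2$ (otherwise the approximation interval is empty and the claim is trivial) together with $\epsilon<2\sqrt{2/e\pi}$ one has $8/\pi\epsilon^2>e>1$, hence $k>0$ and $4/\epsilon'=64k/(3\sqrt\pi\epsilon)>e^{1/8}$, so $8t\ln(4/\epsilon')>1$ and $n-1=\ceil{\sqrt{8t\ln(4/\epsilon')}}\ge2$. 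Second, for the odd-parity claim: taking $s=0$ (the case used for the amplitude-amplification application) the error function is odd, and the Jacobi--Anger / Chebyshev construction underlying Lemmas~\ref{lem:gauss}--\ref{lem:erf_approx} preserves parity, so the approximant is an odd polynomial; if the formula returns an even $n$ one simply increments it, which only decreases the (non-increasing in $n$, once the sign conventions are read correctly) error bounds. I expect the main obstacle to be bookkeeping rather than conceptual: carrying the $2k$, the $\tfrac{8k}{n\sqrt\pi}$ prefactor, and the $n\ge3$ bootstrap through the algebra so that the three quantities inside the $\max$ and the outer logarithm emerge exactly as written --- in particular with the constant $3$, not something larger, in the denominators --- while keeping straight the sign conventions in the error bounds of Lemmas~\ref{lem:gauss} and~\ref{lem:erf_approx}.
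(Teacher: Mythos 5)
Your proposal is correct and follows essentially the same route as the paper's proof: the same triangle-inequality split, the same choice $k=\frac{\sqrt{2}}{\Delta}\ln^{1/2}(8/\pi\epsilon^2)$ from Lemma \ref{lem:signum_with_erf} with budget $\epsilon/2$, and the same degree/threshold choices via Lemmas \ref{lem:gauss} and \ref{lem:erf_approx} together with the $n\geq 3$ bound on the $\frac{8k}{n\sqrt{\pi}}$ prefactor. Your intermediate Gaussian target $\epsilon'=\frac{3\sqrt{\pi}\epsilon}{16k}$ is just a repackaging of the paper's direct $\epsilon/4+\epsilon/4+\epsilon/2$ budgeting and yields the identical $t$ and $n$; your extra remarks on the $n\geq 3$ bootstrap and on odd parity (for $s=0$) are if anything slightly more careful than the paper's.
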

\begin{proof}
    We start from the definition of the error and apply the triangle inequality:
    \begin{align}
        \max_{x \in [-1,\; s-\Delta/2 ] \cup [s+ \Delta/2,\; 1]} & \left |p_{\mathrm{erf}, n, k, s, \Delta}(x) - \sign(x-s) \right | \\ 
        \leq &\max_{x \in [-1,\; s-\Delta/2 ] \cup [s+ \Delta/2,\; 1]}|p_{\mathrm{erf}, n, k, s, \Delta}(x) - \erfunc(k(x-s))| \notag \\
        &+ \max_{x \in [-1,\; s-\Delta/2 ] \cup [s+ \Delta/2,\; 1]}|\erfunc(k(x-s)) - \sign(x-s)| \label{eq:poly_triangle} \\
        \leq & \frac{8k}{n\sqrt{\pi}} \left ( 2e^{-(n-1)^2/8t} + e^{-2k^2 -t} \right ) + \frac{\epsilon}{2} 
    \end{align}
    where in the second line, the first of the terms can be bounded directly with the result of Lemma \ref{lem:erf_approx}. We bound the second term by $\epsilon/2$, which by Lemma \ref{lem:signum_with_erf}, can be done by setting $k=\frac{\sqrt{2}}{\Delta}\ln^{1/2}(8/\pi \epsilon^2)$. Now if we upper bound the multiplicative $\frac{1}{n} \leq \frac{1}{3}$, given that certainly the degree of the polynomial $n>1$, and is necessarily odd, then by choosing $n = \ceil{\sqrt{8t\ln\left (\frac{64k}{3\sqrt{\pi} \epsilon}\right )}+1}$ and $t = \ceil{\max \left \{ \ln\left (\frac{32k}{3\sqrt{\pi} \epsilon}\right ), 2k^2 e^2 \right  \}}$ we obtain 
    \begin{align}
        \frac{8k}{n\sqrt{\pi}} \left ( 2e^{(n-1)^2/8t} + e^{-2k^2 -t} \right ) + \frac{\epsilon}{2} &\leq \frac{8k}{3\sqrt{\pi}} \left(2\frac{3\sqrt{\pi} \epsilon}{64k} +  \frac{3\sqrt{\pi} \epsilon}{32k} \right) + \frac{\epsilon}{2} \\
        & = \frac{\epsilon}{4} + \frac{\epsilon}{4} + \frac{\epsilon}{2} = \epsilon.
    \end{align}
    Substituting the values for $t$ and $k$ into the formula for $n$ completes the proof. 

We can simplify the degree of the polynomial for the case of amplitude amplification with the following Corollary. 
\begin{corollary}
\label{coro:degree_bound}
    For $\epsilon \in (0, 2\sqrt{2/e\pi}]$ and $\Delta \in (0, 2]$ the degree of the polynomial approximation to the sign function $p_{\mathrm{erf}, n, k, s, \Delta}(x)$ simplifies to 
\begin{equation}
      n = \ceil{\sqrt{8\ceil{\frac{4}{\Delta^2}\ln(8/\pi \epsilon^2) e^2 }\ln\left (\frac{64\frac{\sqrt{2}}{\Delta}\ln^{1/2}(8/\pi \epsilon^2)}{3\sqrt{\pi} \epsilon}\right )}+1}.
\end{equation}
\end{corollary}
For the case of amplitude amplification $\Delta\leq 2$, which allows us to show that the \textit{max} of $t$ in Theorem \ref{thm:sgn_approx} is always taken by $\frac{4}{\Delta^2}\log(8/\pi \epsilon^2) e^2$. Let $x=\frac{4e^2}{\Delta^2}\log(8/\pi \epsilon^2)$. Then, for $\epsilon>0$, the max expression becomes
\begin{align}
    \max \left \{ \frac{1}{2}\ln x +\frac{1}{2}\ln\left (\frac{64}{9e^2}\frac{8}{\pi\epsilon^2}\right ), x \right  \}.
\end{align}
We show that, for $0<\Delta\leq 2$,
\begin{align}
    \frac{1}{2}\ln x +\frac{1}{2}\ln\left (\frac{64}{9e^2}\frac{8}{\pi\epsilon^2}\right )\leq x .
\end{align}
Since $0<\Delta\leq 2$ implies $\Delta^2\leq 4$, we have
\begin{align}
    x=\frac{4e^2}{\Delta^2}\ln(8/\pi \epsilon^2)\geq e^2\ln(8/\pi \epsilon^2).
\end{align}
This implies that
\begin{align}
    \frac{1}{2}\ln x +\frac{1}{2}\ln\left (\frac{64}{9e^2}\frac{8}{\pi\epsilon^2}\right )&=  \frac{1}{2}\ln x +\frac{1}{2}\ln\left (\frac{64}{9e^2}\right )+\frac{1}{2e^2}e^2\ln\left (\frac{8}{\pi\epsilon^2}\right )\\
    &\leq   \frac{1}{2}\ln x +\frac{1}{2}\ln\left (\frac{64}{9e^2}\right )+\frac{1}{2e^2}x\\
    &\leq   \frac{\ln x+x}{2}\\
    &\leq x.
\end{align}
This establishes that for $0<\Delta\leq 2$, and for $\epsilon>0$, the result of the max function is $\frac{4}{\Delta^2}\log(8/\pi \epsilon^2) e^2$. 
\end{proof}

In practice, the use of the ceiling and the factor of unity represent trivial additions to the complexity, and in our numerical studies, can be ignored. In this case, the result can then be further simplified with an upper bound that is convenient for analysis in Section \ref{sec:numerics} to lift the constraints of an optimization problem.

\begin{proposition} \label{prop:degree_upper}
    For $\epsilon \in (0, 2\sqrt{2/e\pi}]$ and $\Delta \in (0, 2]$ the function
    \begin{equation}
        n={\sqrt{{\frac{32}{\Delta^2}\ln(8/\pi \epsilon^2) e^2 }\ln\left (\frac{64\frac{\sqrt{2}}{\Delta}\ln^{1/2}(8/\pi \epsilon^2)}{3\sqrt{\pi} \epsilon}\right )}}.
    \end{equation}
    can be upper bounded as
    \begin{align}
        n &\leq \frac{8e}{\Delta}\sqrt{1+\frac{1}{e}}\ln\left (\frac{64\sqrt{2}}{3 \sqrt{\pi} \Delta \epsilon} \right ). 
    \end{align}
\end{proposition}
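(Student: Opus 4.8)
The plan is to prove the inequality by bounding each of the two factors in $n$ separately. Write $n = \sqrt{A\cdot B}$ where $A = \frac{32 e^2}{\Delta^2}\ln(8/\pi\epsilon^2)$ is the ``$t$-like'' factor and $B = \ln\!\left(\frac{64\sqrt{2}}{3\sqrt\pi\,\Delta\,\epsilon}\ln^{1/2}(8/\pi\epsilon^2)\right)$ is the logarithmic factor. For the target bound, observe that
\begin{equation}
\frac{8e}{\Delta}\sqrt{1+\tfrac1e}\,\ln\!\left(\frac{64\sqrt2}{3\sqrt\pi\,\Delta\,\epsilon}\right) = \sqrt{\frac{64e^2}{\Delta^2}\left(1+\tfrac1e\right)}\cdot\ln\!\left(\frac{64\sqrt2}{3\sqrt\pi\,\Delta\,\epsilon}\right),
\end{equation}
so it suffices to show $A \le \frac{64 e^2}{\Delta^2}\left(1+\frac1e\right)\cdot\frac{\ln\!\left(\frac{64\sqrt2}{3\sqrt\pi\,\Delta\,\epsilon}\right)}{B}\cdot\ln\!\left(\frac{64\sqrt2}{3\sqrt\pi\,\Delta\,\epsilon}\right)$ — but that mixes the factors awkwardly, so instead I would bound $A$ and $B$ against the single quantity $\ell := \ln\!\left(\frac{64\sqrt2}{3\sqrt\pi\,\Delta\,\epsilon}\right)$ directly.

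First I would handle the $A$ factor: show $\frac{32 e^2}{\Delta^2}\ln(8/\pi\epsilon^2) \le \frac{64 e^2}{\Delta^2}\ell$, i.e. $\tfrac12\ln(8/\pi\epsilon^2) \le \ell = \ln(64\sqrt2/3\sqrt\pi\,\Delta\,\epsilon)$. Exponentiating, this is $\sqrt{8/\pi}\cdot\frac1\epsilon \le \frac{64\sqrt2}{3\sqrt\pi\,\Delta\,\epsilon}$, i.e. $\sqrt{8}\,\Delta \le \frac{64\sqrt2}{3}$, which holds for $\Delta\le 2$ with room to spare (LHS $= 2\sqrt2\cdot 2 \approx 5.66$, RHS $\approx 30.2$). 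Next I would handle the $B$ factor: show $B \le \ell\cdot\big(1 + \tfrac1e\big)$ — wait, more carefully, I want $B$ not too much larger than $\ell$. Since $B = \ell + \tfrac12\ln\ln(8/\pi\epsilon^2)$, I need to absorb the $\tfrac12\ln\ln(8/\pi\epsilon^2)$ term. Using $\ln(8/\pi\epsilon^2) \le (\text{const})\cdot e^{2\ell}/\Delta^2$-type relations won't be clean; better is to note $\ln\ln(8/\pi\epsilon^2) \le \ln(8/\pi\epsilon^2)$ combined with the elementary bound $\tfrac12\ln y \le \tfrac1e y^{1/?}$... Actually the cleanest route: since $\ln^{1/2}(8/\pi\epsilon^2) \ge$ some absolute constant for $\epsilon$ in the stated range, and $\ln^{1/2}(8/\pi\epsilon^2)$ can be absorbed into $\ell$ itself because $\ell$ already contains $\ln(1/\epsilon)$ which dominates $\ln\ln(1/\epsilon)$; I would make this precise via $\tfrac12\ln\ln(8/\pi\epsilon^2) \le \tfrac1e \ln(8/\pi\epsilon^2)^{1/2}\cdot(\text{something})$ — the factor $(1+1/e)$ in the target is exactly the slack needed, suggesting the intended bound is $B \le (1+1/e)\,\ell$ via $\tfrac12\ln\ln z \le \tfrac1e\cdot\tfrac12\ln z$ (valid for $z\ge e^{?}$) plus $\tfrac12\ln z \le \ell$ from the $A$-step.

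Then I would combine: $n = \sqrt{AB} \le \sqrt{\frac{64e^2}{\Delta^2}\ell \cdot (1+\tfrac1e)\ell} = \frac{8e}{\Delta}\sqrt{1+\tfrac1e}\,\ell$, which is exactly the claimed bound. The main obstacle I anticipate is getting the constant $(1+1/e)$ to come out exactly rather than merely some constant $>1$: this forces a careful choice of how to split the $\tfrac12\ln\ln(8/\pi\epsilon^2)$ term, and requires verifying that the domain restrictions $\epsilon \in (0, 2\sqrt{2/e\pi}]$ and $\Delta \in (0,2]$ are exactly strong enough to make the relevant elementary inequalities ($\tfrac12\ln z \le \ell$ and $\ln\ln z \le \tfrac2e\ln z$) hold on the nose. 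I would check the boundary cases $\epsilon = 2\sqrt{2/e\pi}$ and $\Delta = 2$ numerically to confirm the constants are tight, and otherwise the argument is a routine chain of monotonicity and exponentiation steps.
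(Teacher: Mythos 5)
Your final route is correct and is essentially the paper's own argument: both rest on the observation that $\tfrac12\ln(8/\pi\epsilon^2)\le \ln\bigl(64\sqrt{2}/(3\sqrt{\pi}\Delta\epsilon)\bigr)$ for $\Delta\le 2$, combined with the elementary bound $\ln z/z\le 1/e$, which is precisely where the factor $1+\tfrac1e$ comes from. The paper only organizes the bookkeeping slightly differently, writing the logarithm as $\tfrac12[\ln a+\ln y]$ with $a=\bigl(64\sqrt{2}/(3\sqrt{\pi}\Delta\epsilon)\bigr)^2$, $y=\ln(8/\pi\epsilon^2)$, and then bounding $\ln^2 a+\ln a\,\ln\ln a\le (1+\tfrac1e)\ln^2 a$, whereas you bound each of the two factors against $\ell=\ln\bigl(64\sqrt{2}/(3\sqrt{\pi}\Delta\epsilon)\bigr)$ separately; the content is the same.
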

\begin{proof}
    Our starting point is the expression 
    \begin{equation}
        n = \sqrt{\frac{32}{\Delta^2}\ln(8/\pi \epsilon^2) e^2 \ln\left (\frac{64\frac{\sqrt{2}}{\Delta}\ln^{1/2}(8/\pi \epsilon^2)}{3\sqrt{\pi} \epsilon}\right )},
    \end{equation}
    which can be simplified by defining $y:=\ln(8/\pi \epsilon^2)$ and rewriting :
    \begin{align}
        n &= \sqrt{\frac{32e^2}{\Delta^2} y  \ln\left (\frac{64\frac{\sqrt{2}}{\Delta}\sqrt{y}}{3\sqrt{\pi} \epsilon}\right )} \\
        &= \sqrt{\frac{16e^2}{\Delta^2} y \left [ \ln\left (\frac{8192 }{9 \pi \Delta^2 \epsilon^2}\right ) + \ln(y)\right ]}, \\
    \end{align}
    and if we note that $\ln\left (\frac{8192 }{9 \pi \Delta^2 \epsilon^2}\right ) \geq y$ for the region of interest $\Delta \leq 2$, then we can further upper bound $n$ using this fact. Letting $a := \frac{8192 }{9 \pi \Delta^2 \epsilon^2}$
    \begin{align}
        n \leq \sqrt{\frac{16e^2}{\Delta^2} \left [\ln^2(a) + \ln(a)\ln\ln(a) \right ]}.
    \end{align}
    Now we want to find some constant $C$ such that $\ln^2(a) + \ln(a)\ln\ln(a) \leq C\ln^2(a) \; \forall \; a>1$ in order to obtain the form given in the Proposition. To see this, set $z = \ln(a)$:
    \begin{align}
        z\ln(z) + z^2 &\leq Cz^2 \\
        \frac{z\ln(z)}{z^2} + 1 &\leq C, 
    \end{align}
    and notice that the function on the left hand side has a maximum at $1+1/e \; \forall \; a>1$. Substituting this value for the constant and taking the root yields the expression claimed. 
\end{proof}

Note that the above bounds can be further improved if, sharing in the theme with the rest of the paper, we use Lambert-W functions in the analysis. 
\begin{theorem}[Sign Function Approximation with Lambert Bound] \label{thm:tighter_sgn_approx}
    There exists an odd degree-$n$ polynomial $p_{\mathrm{erf}, n, k, s, \Delta}(x)$ that well approximates the function $f(x) = \sign(x-s)$, $s\in [-1,1]$ such that 
    \begin{equation}
        \max_{x \in [-1,\; s-\Delta/2 ] \cup [s+ \Delta/2,\; 1]} \left |p_{\mathrm{erf}, n, k, s, \Delta}(x) - \sign(x-s) \right | \leq \epsilon,  
    \end{equation}
    where for some constant $b \in \mathbb{R}_{>1}$ , $k=\frac{1}{\Delta}\sqrt{2W_0\left(\frac{2b^2}{\pi \epsilon^2} \right)}$, and $W_0(\cdot)$ is the Lambert-W function, the degree of the polynomial is 
    \begin{equation}
        n = \ceil{2\sqrt{2}ke\sqrt{W_0\left[\frac{1}{2k^2e^2}\left ( \frac{(b-1)\sqrt{\pi}\epsilon}{8kb} - \frac{e^{-2k^2(1+e^2)}}{\bar{n}}\right )^{-2} \right ]}+1},
    \end{equation}
    given that the following holds 
    \begin{equation}
        \Delta \leq \sqrt{8(1+e^2) - \frac{64}{e(b-1)^2\bar{n}^2}} \leq 8.1917.
    \end{equation}
\end{theorem}
\begin{proof}
    The starting point of the proof is Equation \ref{eq:poly_triangle}, from the proof of Theorem \ref{thm:sgn_approx}, where instead of an equal error split, we choose an arbitrary allocation
    \begin{align}
            \max_{x \in [-1,\; s-\Delta/2 ] \cup [s+ \Delta/2,\; 1]} 
            \leq &\max_{x \in [-1,\; s-\Delta/2 ] \cup [s+ \Delta/2,\; 1]}|p_{\mathrm{erf}, n, k, s, \Delta}(x) - \erfunc(k(x-s))| \notag \\
            &+ \max_{x \in [-1,\; s-\Delta/2 ] \cup [s+ \Delta/2,\; 1]}|\erfunc(k(x-s)) - \sign(x-s)| \\
            \leq &  \left |p_{\mathrm{erf}, n, k, s, \Delta}(x) - \sign(x-s) \right | \leq \frac{8k}{n\sqrt{\pi}} \left ( 2e^{-(n-1)^2/8t} + e^{-2k^2 -t} \right ) + \frac{\epsilon}{b},
    \end{align}
    where this choice implies that 
    \begin{equation}
        \frac{8k}{n\sqrt{\pi}} \left ( 2e^{-(n-1)^2/8t} + e^{-2k^2 -t} \right ) \leq \frac{(b-1)\epsilon}{b},
    \end{equation}
    since the total error adds to $\epsilon$. Now the first term above has an apparent solution for $n$ related to the Lambert-W function. We use upper bounds to obtain this form:
    \begin{equation}
        \frac{2e^{-(n-1)^2/8t}}{n} + \frac{e^{-2k^2 -t}}{n} \leq \frac{2e^{-(n-1)^2/8t}}{n-1} + \frac{e^{-2k^2 -t}}{\bar{n}},
    \end{equation}
    where $\bar{n}$ is the minimum degree of the polynomial ($n\geq \bar{n}$). Subbing this in to the prior equation and rearranging we obtain:
    \begin{equation}
        \frac{2e^{-(n-1)^2/8t}}{n-1}  \leq \left( \frac{(b-1)\sqrt{\pi}\epsilon}{8kb} - \frac{e^{-2k^2 -t}}{\bar{n}} \right ).
    \end{equation}
    By saturating this inequality, the equation can be solved for $n$: 
    \begin{equation}
        n = \ceil{2\sqrt{tW_0\left[t^{-1}\left ( \frac{(b-1)\sqrt{\pi}\epsilon}{8kb} - \frac{e^{-(2k^2+t)}}{\bar{n}}\right )^{-2} \right ]}+1}. 
    \end{equation}
    Next, it remains a question of how to determine $t$, which recalling from \ref{lem:erf_approx}, is the degree of the truncation of the Taylor series for the exponential function. From before, we have the condition $t\geq 2k^2e^2$, and our result above imposes another condition to ensure $W_0(x)$ is a valid solution (we require $x\in(0,\infty)$ Looking at the argument of the function it is clear that to maintain positivity we require
    \begin{equation}
        t \geq \ln \left( \frac{8kbe^{-2k^2}}{\bar{n}(b-1)\sqrt{\pi}\epsilon}\right ), 
    \end{equation}
    which yields the condition that $t = \ceil{\max\left \{2k^2e^2, \ln \left( \frac{8kbe^{-2k^2}}{\bar{n}(b-1)\sqrt{\pi}\epsilon}\right )  \right \} }$. Next we wish to find criteria whereby we can resolve this max expression. Given that thew first term under the \textit{max} operation scales quadratically with $k$ whereas the left term scales only logarithmically, we expect the left term to dominate in almost all cases except for large $\Delta$, which is usually not the case of interest. Therefore, we consider the inequality
        \begin{equation}
        \ln \left( \frac{8kbe^{-2k^2}}{\bar{n}(b-1)\sqrt{\pi}\epsilon} \right ) \leq 2k^2e^2,   
    \end{equation}
    expand and isolate the $k$-dependent terms
    \begin{equation}
        \ln \left( \frac{8b}{\bar{n}(b-1)\sqrt{\pi}\epsilon} \right ) + \ln(\epsilon^{-1}) - (1+e^2)2k^2 + \ln(k) \leq 0.
    \end{equation}
    Recall that from \ref{lem:signum_with_erf} $k \geq \frac{\sqrt{2}}{\Delta} \ln^{1/2}\left(\frac{2b^2}{\pi \epsilon^2}\right)$, where we have a factor of $b^2$ from the substitution $\epsilon \to \epsilon/b$ from earlier in the proof. Next we define $\ell :=\ln^{1/2}\left(\frac{2b^2}{\pi \epsilon^2}\right)$ and simplify
    \begin{equation}
        \underbrace{\ln\left (\frac{8}{\Delta (b-1) \bar{n}} \right ) + \left(\frac{1}{2} - \frac{4(1+e^2)}{\Delta^2} \right )\ell + \frac{1}{2}\ln \ell}_{:=f(l)} \leq 0.
    \end{equation}
    We then define $f(\ell)$, and fixing $\Delta, b, \bar{n}$, take its first and second derivative:
    \begin{align}
        \frac{\partial f}{\partial \ell}&=  \left(\frac{1}{2} - \frac{4(1+e^2)}{\Delta^2} \right ) + \frac{1}{2\ell}\\
        \frac{\partial^2 f}{\partial \ell^2} &=  - \frac{1}{2\ell^2},
    \end{align}
    given that $\frac{\partial^2 f}{\partial \ell^2}<0$, $f(l)$ is strictly concave. Now using elementary calculus principles, we solve for the maximum of the function, and find under which conditions it is negative, as the concavity then guarantees the satisfaction of the inequality. Setting $f'(\ell) = 0$ we find a maximum at $\ell_0 = \frac{\Delta^2}{8(1+e^2)-\Delta^2}$. Then taking $f(\ell_0)\leq 0$ we obtain the condition 
    \begin{equation} \label{eq:valid_delta}
        \Delta \leq \sqrt{8(1+e^2) - \frac{64}{e(b-1)^2\bar{n}^2}}.
    \end{equation}
    Finally, from Lemma 10 of \cite{low2017hamiltonian}, we have $k\geq \frac{\sqrt{2}}{\Delta}\ln^{1/2}\left(\frac{2b^2}{\pi \epsilon^2}\right) \geq \frac{\sqrt{2}}{\Delta} \sqrt{W_0\left (\frac{2b^2}{\pi \epsilon^2} \right )}$. Given that we proved Equation \ref{eq:valid_delta} for the case that the constants must satisfy the middle bound, we can further tighten the result by using the right most inequality for $k$ directly which completes the proof.  
\end{proof}
Note that proving a condition like Equation \ref{eq:valid_delta} would've not been possible were the Lambert bound used directly from the outset. 
%%%%%%%%%%%%%%%%%%%%%%%%%%%%%%%%%%%%%%%%%%%%%%
\section{Error-budgeting version of robust Hamiltonian simulation}

Here we prove a slightly tighter and generalized version of the Robust block-Hamiltonian simulation result in Corollary 62 of the preprint version of \cite{gilyen2019quantum}.
The value of this generalization is that 
it accommodates the setting where the error of the input block encoding of $H$ (i.e. $\epsilon_{be}$) is not necessarily a fixed multiple of the target error (i.e. $\epsilon$).
That is, it describes the costs of robust Hamiltonian simulation when the user-required precision $\epsilon$ is not linked to the error in the Hamiltonian block encoding $\epsilon_{be}$.
In the original version, the allocation of the error budget was made for the user.

%%%%%%%%%%%%%%%%%%% 
\begin{lemma}[Robust Hamiltonian Simulation by Qubitization with GQSP]
\label{lem:flex_ham_sim}
Let $t \in \mathbb{R}^+$, $\epsilon \in (0,1)$, and let $\widetilde{U}$ be a $(\alpha, a, \epsilon_{be})$-block-encoding of the Hamiltonian $H$. Then we can implement a unitary $\widetilde{V}$ that is a $(1,a+2,\epsilon)$-block-encoding 
of \( e^{-itH} \), using $n+2$ applications of the controlled $\widetilde{U}$ or its controlled inverse $\widetilde{U}^\dagger$ where 
\begin{equation}
    n= \ceil{\frac{e}{2}\alpha t +  \ln \left (\frac{2\eta}{\epsilon_{\mathrm{exp}}}\right ) }
\end{equation}
given that $\epsilon_{\mathrm{exp}} +  t\epsilon_{be} \leq \epsilon$, where the constant $\eta = 4(\sqrt{2\pi}e^{1/13})^{-1} \approx 1.47762.$ 
\end{lemma}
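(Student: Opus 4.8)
The goal is to bound the cost and error of Hamiltonian simulation built from an imperfect block encoding of $H$, splitting the total error into an exponentiation error $\epsilon_{\mathrm{exp}}$ (from approximating $e^{-i\alpha t x}$ by a polynomial implemented via generalized QSP on the qubitized walk operator) and a block-encoding-propagation error $\|e^{-iHt} - e^{-i\widetilde H t}\| \le t\epsilon_{be}$. The plan is to proceed in three steps.

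First, I would invoke the standard qubitization/GQSP machinery. Given the $(\alpha, a, \epsilon_{be})$-block-encoding $\widetilde U$ of $H$, build the qubitized walk operator $W$, whose eigenphases encode $\arccos$ of the (rescaled) eigenvalues of the \emph{actual} encoded operator $\widetilde H = \alpha(\bra{0}^a\otimes\mathbb I)\widetilde U(\ket{0}^a\otimes\mathbb I)$. Applying generalized QSP with a degree-$n$ polynomial approximation to $\cos(\alpha t x)$ and $\sin(\alpha t x)$ (on $[-1,1]$) yields a $(1, a+2, \epsilon_{\mathrm{exp}})$-block-encoding of $e^{-i\widetilde H t}$, provided the polynomial approximates $e^{-i\alpha t x}$ to sup-norm error $\epsilon_{\mathrm{exp}}$. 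The number of walk-operator calls is $n+2$ (the $+2$ being the trivial overhead already flagged in the statement).

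Second — and this is where the constant-factor content lives — I would pin down the degree $n$ needed for the Jacobi--Anger expansion of $e^{-i\alpha t x}$ to achieve error $\epsilon_{\mathrm{exp}}$. Using the truncated Jacobi--Anger series $e^{-i\tau x} = \sum_k (\cdot) J_k(\tau) T_k(x)$ with $\tau = \alpha t$, the truncation tail is controlled by bounds on Bessel functions $J_k(\tau)$ for $k \gtrsim \tau$; the standard estimate $|J_k(\tau)| \le (\tau/2)^k/k!$ combined with Robbins' factorial bounds $k! > \sqrt{2\pi k}(k/e)^k e^{1/(12k+1)}$ gives a geometrically decaying tail once $k > e\tau/2$. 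Summing the tail and solving for where it drops below $\epsilon_{\mathrm{exp}}$ yields $n = \lceil \frac{e}{2}\alpha t + \ln(2\eta/\epsilon_{\mathrm{exp}})\rceil$ with $\eta = 4(\sqrt{2\pi}e^{1/13})^{-1}$ — the $e^{1/13}$ coming precisely from keeping the Robbins lower-bound correction term rather than discarding it, and the factor $4$ and $\sqrt{2\pi}$ from the prefactors in the tail sum and the geometric-series closed form. I expect this to be the main obstacle: carefully tracking every constant through the Bessel tail bound, the geometric series, and the Robbins inequality without over- or under-counting, so that the clean closed form for $\eta$ actually emerges.

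Third, I would assemble the error budget. We have a $(1, a+2, \epsilon_{\mathrm{exp}})$-block-encoding of $e^{-i\widetilde H t}$, i.e. $\|e^{-i\widetilde H t} - \widetilde V\| \le \epsilon_{\mathrm{exp}}$ where $\widetilde V$ denotes the implemented block. Then by the triangle inequality and the standard bound $\|e^{-iHt} - e^{-i\widetilde H t}\| \le t\|H - \widetilde H\| \le t\epsilon_{be}$ (the latter from the definition of $\epsilon_{be}$-accuracy of the block encoding, as used elsewhere in the paper via Ref.~\cite{chakraborty2018power}), we get $\|e^{-iHt} - \widetilde V\| \le \epsilon_{\mathrm{exp}} + t\epsilon_{be}$, so requiring $\epsilon_{\mathrm{exp}} + t\epsilon_{be} \le \epsilon$ delivers the claimed $(1, a+2, \epsilon)$-block-encoding of $e^{-iHt}$ with the stated query count. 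The remaining routine checks are that the GQSP phase-finding is feasible for the chosen polynomial (parity/boundedness conditions), which is inherited from the generalized QSP framework of Ref.~\cite{berry2024doubling}, and that the ancilla count $a+2$ accounts for the one GQSP signal qubit plus the block-encoding ancillas.
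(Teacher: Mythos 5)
Your overall route is the same as the paper's: build the qubitized walk operator from $\widetilde{U}$, implement a truncated Jacobi--Anger approximation of $e^{-it\cos\theta}$ via GQSP, and then split the error by triangle inequality into $\epsilon_{\mathrm{exp}}$ plus $\|e^{-iHt}-e^{-i\widetilde{H}t}\|\le t\|H-\widetilde{H}\|\le t\epsilon_{be}$ (the paper also invokes Ref.~\cite{chakraborty2018power} here); your step 3 and the ancilla/error bookkeeping match the paper essentially verbatim.

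There are, however, two concrete gaps. First, the query count: a truncated Jacobi--Anger series with indices running over $[-n,n]$ is an indefinite-parity Laurent polynomial, and implementing it with plain GQSP costs $2n$ calls to the walk operator, not $n$. The factor-of-two savings that makes the lemma read $\lceil \frac{e}{2}\alpha t + \ln(2\eta/\epsilon_{\mathrm{exp}})\rceil + 2$ rather than roughly $e\alpha t + 2\ln(\cdot)$ is precisely the doubling construction of Ref.~\cite{berry2024doubling}: one uses the signal operator $\operatorname{diag}(W, W^{\dagger})$ to realize a definite-parity polynomial with $n$ queries and then spends the extra $2$ controlled queries to recover indefinite parity. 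You assert the count $n+2$ and attribute the $+2$ to ``trivial overhead,'' but without this doubling argument your construction as described only supports a $2n$-query bound, which does not give the stated constant. Second, your step 2 is a promissory note rather than a proof: you sketch how the degree bound and the constant $\eta = 4(\sqrt{2\pi}e^{1/13})^{-1}$ should emerge from the Bessel-tail estimate $|J_k(\tau)|\le (\tau/2)^k/k!$ plus Robbins' inequality, but you explicitly defer the constant-tracking, and the sign of the Robbins correction (a lower bound on $k!$, hence a factor $e^{-1/(12k+1)}$ that is $k$-dependent and tends to $1$) needs care before the clean closed form for $\eta$ actually falls out. The paper does not re-derive this at all; it cites Ref.~\cite{jennings2023efficient} for exactly this degree bound, which is the simplest way to close that part of your argument (or else you must carry the tail computation through in full).
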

\begin{proof}
Let $H$ be a Hamiltonian with eigenvectors and eigenvalues defined such that $H\ket{\lambda_j} = \lambda_j \ket{\lambda_j}$. Utilizing the \textit{prepare} and \textit{select} oracles previously defined, it is possible to construct a walk operator 
\begin{equation}
    W = -(\mathbb{I} - 2\textup{PREP}\ketbra{0}{0}\textup{PREP}^\dagger) \textup{SELECT}
\end{equation} which effectively block diagonalizes $H$ into a series of 2-dimensional subspaces with eigenvalues $e^{\pm i\arccos(\lambda_j/\alpha)}$ as shown in \cite{motlagh2024generalized,low2017optimal}. In order to implement the time evolution operator $e^{-itH}$, one can utilize general quantum signal processing (GQSP) \cite{motlagh2024generalized} to implement the following truncated Jacobi-Anger series 
\begin{equation}
    e^{-it\cos(\theta)} \approx \sum_{k=-n}^{n} (-i)^k J_k(t) e^{ik\theta}.
\end{equation}
The art of GQSP lifts the standard parity restrictions of QSP and allows for a direct implementation of the polynomial of indefinite parity by constructing the following signal operator 
\begin{equation}
\mathcal{A = }
    \begin{bmatrix}
    W & 0 \\
    0 & \mathbb{I}
    \end{bmatrix},
\end{equation}
and allowing for arbitrary $\textup{SU}(2)$ rotations in the QSP sequence \cite{motlagh2024generalized}. If the degree of the polynomial $n$ is chosen to be sufficiently large this leads to an $\epsilon$-precise approximation of the time evolution operator. Noting that the series index runs from $[-n, n]$, implementing said polynomial with GQSP requires $2n$ queries to the block encoding of $H$ accessed via $\textup{PREP}$ and $\textup{SELECT}$. Furthermore, recent work \cite{berry2024doubling} has shown it is possible to improve the number of queries to just $n$. This is achieved by instead defining the signal operator 
\begin{equation}
\mathcal{B = }
    \begin{bmatrix}
    W & 0 \\
    0 & W^\dagger
    \end{bmatrix}
\end{equation}
to encode a polynomial of definite parity, and then apply an additional 2 controlled queries of $W$ to return to a degree of indefinite parity. This is shown to result in a phase doubling, which takes the indices of the truncated series to $[-n/2, n/2]$, thereby reducing the overall block encoding queries to $n$. In Ref. \cite{jennings2023efficient}, it is shown that the degree of the polynomial can be chosen to be $$n = \ceil{\frac{e}{2}\alpha t + \log\left(\frac{2\eta}{\epsilon}\right )},$$ in order to achieve an $\epsilon$-precise approximation to the time evolution operator. 
Now to prove robustness, we need to consider 2 errors in this setting; the block encoding error $\epsilon_{be}$ and the error that comes from implementing the operator exponential imperfectly $\epsilon_{\mathrm{exp}}$ (due to truncation). If $\widetilde{V}(\widetilde{H})$ is the evolution operator that is implemented, where the tildes are used to indicate an imperfect operator exponential of an imperfect block encoding, we can bound the following 
\begin{align}
    \|\widetilde{V}(\widetilde{H})-e^{-iHt}\| &\leq \|{V}(\widetilde{H})-e^{-iHt}\| + \|\widetilde{V}(\widetilde{H})-{V}(\widetilde{H})\| \\
    &=\|e^{-it\widetilde{H}} - e^{-iHt}\| + \epsilon_{\mathrm{exp}}\\
    &\leq \|\widetilde{H}-H\|t + \epsilon_{\mathrm{exp}} \\
    &=  t \epsilon_{be} + \epsilon_{\mathrm{exp}} \leq \epsilon.
\end{align}
In the second to last line we utilize an inequality from Ref. \cite{chakraborty2018power}, and note that $$\|\widetilde{H}-H\| = \|\alpha(\bra{0}^a\otimes \mathbb{I}) \widetilde{U} (\ket{0}^a \otimes \mathbb{I}) - H\|$$ is a block encoding error. In terms of the number of qubits, we require $a$ qubits for the block encoding, 1 additional qubit for the controlled block encoding, and 1 more for the controlled signal operators of GQSP. 
\end{proof}

We note that the additional 2 queries to the block encoding are a trivial addition to constant factor bounds we calculate, and exclude this from our numerical examples. In addition, a possible criticism of this GQSP doubling approach is that queries to the controlled block encoding can be more expensive than queries to the block encoding itself. However, in the case of LCHS, our construction of the select operator $S$ already requires access to a controlled version of the block encoding $U_A$, so under this assumption, making $S$ a controlled operation can be done with a single ancilla qubit. 

\section{Robust oblivious fixed-point amplitude amplification}
\label{app:robust_AA}
 
In this section we establish analytical upper bounds on the cost of amplitude amplification. The result is used in Theorem \ref{thm:amplified_prep} and Theorem \ref{thm:error} to yield the overall query counts of the LCHS solution preparation.
\begin{lemma}
    [Robust fixed-point oblivious amplitude amplification] \label{lem:robust_fp_obl_amp}
     We can construct a $(\|x\|,[m_x+1,m_x+n+1],\epsilon)$-block encoding of the vector $x\in\mathbb{C}^{2^n}$ that is amplified (i.e. does not need post-selection) from
    $U_x$, an $(\alpha_x,m_x,\epsilon_x)$-block encoding of $x$ that is conditional (i.e. succeeds only conditionally based on post-selection) using a number of queries to $U_x$ no greater than
\begin{align}
C_{x}(\Delta,\epsilon_{\mathrm{AA}})=\ceil{\sqrt{8\ceil{\frac{4}{\Delta^2}\ln(8/\pi \epsilon_{\mathrm{AA}}^2) e^2 }\ln\left (\frac{64\frac{\sqrt{2}}{\Delta}\ln^{1/2}(8/\pi \epsilon_{\mathrm{AA}}^2)}{3\sqrt{\pi} \epsilon_{\mathrm{AA}}}\right )}+1} 
\end{align}
as long as a value $\Delta$ satisfying $\Delta\leq 2\|x\|/\alpha_x\leq 9/5$ is known and the error parameters $\epsilon_{\mathrm{AA}}$ and $\epsilon_x$ satisfy
$\|x\|\left(\epsilon_{\mathrm{AA}} +
\frac{9\epsilon_x}{2\alpha_x}C_{x}(\Delta,\epsilon_{\mathrm{AA}})\right)\leq \epsilon$ and $\epsilon_x/\alpha_x\leq 1/12$ (i.e. $\|x/\alpha_x- \bra{0^m_x}U_x\ket{0^m_x}\ket{0^n}\|\leq \epsilon_x/\alpha_x\leq 1/12$).
\end{lemma}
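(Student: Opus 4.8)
The plan is to combine three ingredients: (i) the standard fixed-point oblivious amplitude amplification construction built from a polynomial approximation of the sign function applied through qubitization/QSP, (ii) the constant-factor degree bound on that sign-function polynomial established in Theorem \ref{thm:sgn_approx} and Corollary \ref{coro:degree_bound}, and (iii) a robustness analysis that tracks how the input block-encoding error $\epsilon_x$ propagates through the $C_x$ Grover-type iterates. First I would recall the setup: $U_x$ is an $(\alpha_x, m_x, \epsilon_x)$ block encoding of $x$, so the (exact) conditional state $\left(\bra{0^{m_x}}\otimes\mathbb{I}\right)U_x\ket{0^{m_x}}\ket{0^n}$ has norm (``overlap'') $a = \|x\|/\alpha_x$ up to the $\epsilon_x/\alpha_x$ error. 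Fixed-point amplitude amplification works by constructing, from the reflection operators about the $\ket{0^{m_x}}$ subspace and about the state $U_x\ket{0^{m_x}}\ket{0^n}$, a walk operator whose action on the relevant two-dimensional invariant subspace is a rotation parameterized by $\theta$ with $\sin\theta = a$; applying a degree-$n$ odd polynomial $p(\cdot)$ that approximates $\sign$ on $[-1,-\Delta/2]\cup[\Delta/2,1]$ via GQSP/QSP maps the amplitude $a$ (which is $\geq \Delta/2$ by hypothesis, since $\Delta \leq 2\|x\|/\alpha_x$) to within $\epsilon_{\mathrm{AA}}$ of $1$. This costs $n$ queries to (controlled) $U_x$ and $U_x^\dagger$, and Corollary \ref{coro:degree_bound} gives exactly the stated $n = C_x(\Delta,\epsilon_{\mathrm{AA}})$ for $\Delta\le 2$ (here $\Delta\le 9/5<2$), provided $\epsilon_{\mathrm{AA}}\in(0,2\sqrt{2/e\pi}]$, which I would note as an implicit admissibility condition inherited from that theorem.

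Next I would carry out the robustness bookkeeping. Split the final error via the triangle inequality into two pieces: the error of ideal fixed-point AA applied to the \emph{exact} block encoding, and the perturbation caused by replacing the exact block encoding with the $\epsilon_x$-approximate one. The first piece is controlled directly by the polynomial approximation: $|p(a)-1|\le\epsilon_{\mathrm{AA}}$ on the eligible amplitude range, which after rescaling by $\|x\|$ contributes a term $\|x\|\epsilon_{\mathrm{AA}}$ to the output block-encoding error. For the second piece, the walk operator is built from $O(1)$ uses of $U_x$ per iterate and there are $C_x$ iterates, so a standard Lipschitz/telescoping argument (each imperfect unitary differs from its ideal counterpart by $O(\epsilon_x/\alpha_x)$ in operator norm, and errors add at most linearly across a product of unitaries) yields a perturbation of size $O(C_x \cdot \epsilon_x/\alpha_x)$ in the amplified state, hence $O(\|x\| C_x \epsilon_x/\alpha_x)$ after rescaling; the constant $9/2$ in the statement comes from carefully counting that each QSP iterate uses the block encoding a bounded number of times and from the reflection-operator structure. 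Combining, the total output error is at most $\|x\|\big(\epsilon_{\mathrm{AA}} + \tfrac{9\epsilon_x}{2\alpha_x}C_x(\Delta,\epsilon_{\mathrm{AA}})\big)$, which is $\le\epsilon$ by hypothesis. The condition $\epsilon_x/\alpha_x\le 1/12$ is what guarantees the perturbed overlap still lies safely inside the interval $[\Delta/2,1]$ where the sign polynomial is a good approximant (so that the first-piece bound remains valid for the actual, perturbed amplitude), and the bound $\Delta\le 2\|x\|/\alpha_x \le 9/5$ ensures both that $\Delta/2$ genuinely lower-bounds the (perturbed) overlap and that we are in the regime $\Delta\le 2$ required by Corollary \ref{coro:degree_bound}.

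Finally I would tally the register sizes: amplitude amplification as constructed adds a single ancilla qubit for the reflection/phase structure (giving $m_x+1$ block-encoding ancillas and $m_x+n+1$ total qubits, matching the stated $(\|x\|,[m_x+1,m_x+n+1],\epsilon)$ block encoding), and note that the subnormalization of the amplified block encoding is $\|x\|$ rather than $\alpha_x$ — this is precisely the ``reduction of subnormalization'' referred to in the main text. \textbf{The main obstacle} I expect is the robustness step: carefully propagating $\epsilon_x$ through the $C_x$ QSP iterates while keeping the constant tight (the factor $9/2$), and in particular verifying that the perturbed amplitude does not drift out of the interval on which the degree-$n$ sign polynomial is controlled — this is where the $\epsilon_x/\alpha_x\le 1/12$ and $\Delta\le 9/5$ constraints must be used in a coordinated way, and it is the part most prone to off-by-constant errors. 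A secondary subtlety is ensuring the GQSP/QSP implementation of an \emph{odd} sign-approximating polynomial is compatible with the oblivious (state-independent) structure of the reflection operators, which is standard but must be invoked correctly from the construction of Theorem \ref{thm:sgn_approx}.
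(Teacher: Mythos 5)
Your overall skeleton matches the paper's: split the error by the triangle inequality into an ideal fixed-point-amplification piece (bounded by $\|x\|\epsilon_{\mathrm{AA}}$ via the sign-polynomial approximation, with the degree taken from Corollary \ref{coro:degree_bound}, valid since $\Delta\le 9/5<2$) plus a perturbation piece that is linear in $C_x$ and in $\epsilon_x/\alpha_x$. However, the way you propose to get the second piece — a ``standard Lipschitz/telescoping argument'' over the $C_x$ iterates, with the $9/2$ emerging from counting uses of the block encoding — is a genuine gap. There is no ideal unitary available to telescope against: the hypothesis only says the top-left \emph{block} of $U_x$ is $\epsilon_x/\alpha_x$-close to $x/\alpha_x$, and a unitary whose block is exactly $x/\alpha_x$ need not be close to $U_x$ in operator norm at the same scale (unitary completions are only H\"older-$1/2$ continuous in the block, so that route loses a $\sqrt{\epsilon_x}$ factor). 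Alternatively, bounding the change of the degree-$C_x$ polynomial directly by its Lipschitz constant on $[-1,1]$ gives a factor of order $C_x^2$ (Markov), not $C_x$. The paper avoids both problems by invoking the ``Robustness of singular value transformation'' lemma (Lemma 23 of the preprint of Ref. \cite{gilyen2019quantum}) applied to $A=(\|x\|/\alpha_x)(\mathbb{I}\otimes\ketbra{x}{0^n})$ and the actual block $\tilde{A}$, which yields $\|P^{(SV)}(A)-P^{(SV)}(\tilde{A})\|\le \sqrt{2/(1-\|\tfrac{A+\tilde{A}}{2}\|^2)}\,C_x\,\|A-\tilde{A}\|$.

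This also means your reading of the hypotheses is off. The conditions $2\|x\|/\alpha_x\le 9/5$ and $\epsilon_x/\alpha_x\le 1/12$ are not there to keep the perturbed overlap inside the sign-approximation window $[\Delta/2,1]$ (the lower bound on the overlap comes solely from $\Delta\le 2\|x\|/\alpha_x$); they are exactly what is needed to verify the precondition $\|A-\tilde{A}\|+\|\tfrac{A+\tilde{A}}{2}\|^2\le 1$ of the SVT-robustness lemma and to bound the resulting prefactor $\sqrt{2/(1-\|\tfrac{A+\tilde{A}}{2}\|^2)}$ by $9/2$, since $\|A+\tilde{A}\|\le 2\|x\|/\alpha_x+\epsilon_x/\alpha_x\le 9/5+1/12$. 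Without routing the robustness step through that lemma (or an equivalent argument that is genuinely linear in both $C_x$ and $\epsilon_x/\alpha_x$), the stated error term $\tfrac{9\epsilon_x}{2\alpha_x}C_x(\Delta,\epsilon_{\mathrm{AA}})$ is not established.
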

\begin{proof}
As mentioned in \cite{gilyen2019quantum}, one can ``easily derive a fixed-point version of oblivious amplitude
amplification'' by leveraging their results. 
We follow their suggestion here. 
We start by considering the fixed point amplitude amplification algorithm of Theorem 27 in the pre-print version of \cite{gilyen2019quantum}. 
This result constructs from 
a perfect $U_x$, satisfying $\left\|\ket{0^m}\otimes(\|x\|\ket{x})-\alpha_x(\ketbra{0^m}{0^m}\otimes \mathbb{I})U_x\ket{0^m}\ket{0^n}\right\|=0$, a circuit $\overline{U}$ that satisfies
$\left\| \ket{0^m}\ket{x} - \overline{U} \ket{0^m}\ket{0^n} \right\| \leq \epsilon_{\mathrm{AA}}$ where $\ket{x}=x/\|x\|$, using no more than 
\begin{align}
C_{x}(\Delta,\epsilon_{\mathrm{AA}})=\ceil{\sqrt{8\ceil{\frac{4}{\Delta^2}\ln(8/\pi \epsilon_{\mathrm{AA}}^2) e^2 }\ln\left (\frac{64\frac{\sqrt{2}}{\Delta}\ln^{1/2}(8/\pi \epsilon_{\mathrm{AA}}^2)}{3\sqrt{\pi} \epsilon_{\mathrm{AA}}}\right )}+1} 
\end{align}
calls to $U_x$, if promised that $\Delta\leq 2\|x\|/\alpha_x$, where we have used Corollary \ref{coro:degree_bound} to establish a constant factor bound for the degree used in Theorem 27 of the pre-print version of \cite{gilyen2019quantum}.
Our contribution is to work out the case where $U_x$ is imperfect
\begin{align}
\left\|\ket{0^m}\otimes(\|x\|\ket{x})-\alpha_x(\ketbra{0^m}{0^m}\otimes \mathbb{I})U_x\ket{0^m}\ket{0^n}\right\|\leq\epsilon_x,
\end{align}
which is the block encoding assumption of the lemma statement.
We will bound the error in an imperfect implementation of an amplitude amplification unitary $\tilde{U}$.
This unitary is imperfect due to both the algorithm itself being imperfect (i.e. $\overline{U}$ is imperfect) as well as the fact that $\tilde{U}$ makes use of the imperfect block encoding $U_x$.
Using a triangle inequality to accommodate these error contributions, we can bound the total error as
\begin{align}
    \left\| x - \|x\| \bra{0^m}\tilde{U}\ket{0^m}\ket{0^n}\right\|&\leq \|x\|\left\| \ket{0^m}\otimes\left(\ket{x} - \bra{0^m}\overline{U}\ket{0^m}\ket{0^n}\right)\right\|+\|x\|\left\| \bra{0^m}\overline{U}\ket{0^m}\ket{0^n}-\bra{0^m}\tilde{U}\ket{0^m}\ket{0^n}\right\|
\end{align} 
The first term on the right-hand side is $\|x\|\epsilon_{\mathrm{AA}}$ 
and so the remaining task, which constitutes addressing robustness in $U_x$, is to upper bound the second term on the right-hand side of the above inequality.

For this robustness setting we follow the proof technique of Theorem 28 in the preprint version of \cite{gilyen2019quantum}.
Define $A=(\|x\|/\alpha_x)(\mathbb{I}\otimes\ketbra{x}{0^n})$ and $\tilde{A}=(\ketbra{0^m}{0^m}\otimes \mathbb{I})U_x(\ketbra{0^m}{0^m}\otimes\ketbra{0^n}{0^n})$.
With this we establish the condition $
\left\| A - \tilde{A} \right\| + \left\| \frac{A + \tilde{A}}{2} \right\|^2 \leq 1$
as required in Lemma 23, ``Robustness of singular value transformation 3'', of the preprint version of \cite{gilyen2019quantum}.

From this, $\left\|A+\tilde{A}\right\|\leq 2\left\|A\right\|+\left\|A-\tilde{A}\right\| = 2\|x\|/\alpha_x + \epsilon_x/\alpha_x \leq \frac{9}{5}+\frac{1}{12}\leq 113/60$, where we've used the assumptions that $\|x\|/\alpha_x\leq 9/10$ and $\epsilon_x/\alpha_x\leq 1/12$.
Thus, $\left\| A - \tilde{A} \right\| + \left\| \frac{A + \tilde{A}}{2} \right\|^2\leq\epsilon_x/\alpha_x + \frac{(1+\epsilon_x/\alpha_x)^2}{4}\leq 13969/14400 < 1$ and $
\sqrt{\frac{2}{1 - \left\| \frac{A + \tilde{A}}{2} \right\|^2}} 
\leq 
\sqrt{\frac{28800}{1631}}<9/2$.
With these conditions established, Lemma 23 of the preprint version of \cite{gilyen2019quantum} lets us bound the norm of the difference between the fixed point amplitude amplification algorithms using the ideal state preparation unitary (i.e. $\overline{U})$ and the imperfect one (i.e. $\tilde{U}$); defining $\overline{X}=(\ketbra{0^m}{0^m}\otimes \mathbb{I})\overline{U}(\ketbra{0^m}{0^m}\otimes \ketbra{0^n}{0^n})$ and $\tilde{X}=(\ketbra{0^m}{0^m}\otimes \mathbb{I})\tilde{U}(\ketbra{0^m}{0^m}\otimes \ketbra{0^n}{0^n})$, we have
$\left\| \overline{X}-\tilde{X}\right\| = \left\| P^{(SV)}(A)-P^{(SV)}(\tilde{A}) \right\| \leq \frac{9\epsilon_x}{2\alpha_x} C_{x}(\Delta,\epsilon_{\mathrm{AA}})$.

Finally, using the triangle inequality we have 

\begin{align}
    \left\| x - \|x\| \bra{0^m}\tilde{U}\ket{0^m}\ket{0^n}\right\|&\leq \|x\|\left\| \ket{0^m}\otimes\left(\ket{x} - \bra{0^m}\overline{U}\ket{0^m}\ket{0^n}\right)\right\|+\|x\|\left\| \bra{0^m}\overline{U}\ket{0^m}\ket{0^n}-\bra{0^m}\tilde{U}\ket{0^m}\ket{0^n}\right\|\\
    &= \|x\|\left\| \ket{0^m}\ket{x} - \overline{U}\ket{0^m}\ket{0^n}\right\|+\|x\|\left\| \overline{X}-\tilde{X}\right\|\\
    &\leq \|x\|\left(\epsilon_{\mathrm{AA}} +\frac{9\epsilon_x}{2\alpha_x}C_{x}(\Delta,\epsilon_{\mathrm{AA}})\right).
\end{align} 
Therefore, as long as $\|x\|\left(\epsilon_{\mathrm{AA}} +\frac{9\epsilon_x}{2\alpha_x}C_{x}(\Delta,\epsilon_{\mathrm{AA}})\right)\leq \epsilon$, then $\left\| x - \|x\| \bra{0^m}\tilde{U}\ket{0^m}\ket{0^n}\right\|\leq \epsilon$.
\end{proof}

\bibliography{bib}
\end{document}